\documentclass[a4paper]{article}
\title{Forecasting on the Accuracy–Timeliness Frontier: Two Novel `Look Ahead' Predictors }
\author{Marc Wildi}

\usepackage{a4wide}
\usepackage{amsmath}
\usepackage{amsfonts} 
\usepackage{amsthm}

\usepackage[utf8]{inputenc}
\usepackage{hyperref}
\usepackage{float}
\newtheorem{Proposition}{Proposition}
\newtheorem{Corollary}{Corollary}
\newtheorem{Theorem}{Theorem}
\DeclareMathOperator{\sign}{sign}
\DeclareMathOperator{\arctantwo}{arctan2}

\usepackage{Sweave}
\begin{document}

\maketitle

\begin{abstract}
\noindent

We re-examine the traditional Mean-Squared Error (MSE) forecasting paradigm by formally integrating an accuracy–timeliness trade-off: accuracy is defined by MSE (or target correlation) and timeliness by advancement (or phase excess).  While MSE-optimized predictors are accurate in tracking levels, they sacrifice dynamic lead, causing them to lag behind changing targets. To address this, we introduce two `look-ahead' frameworks—Decoupling-from-Present (DFP) and Peak-Correlation-Shifting (PCS)—and provide closed-form solutions for their optimization.  Notably, the classical MSE predictor is shown to be a special case within these frameworks. Dually, our methods achieve maximum advancement for any given accuracy level, so our approach reveals the complete efficient frontier of the accuracy–timeliness trade-off, whereas MSE represents only a single point. We also derive a universal upper bound on lead over MSE for any linear predictor under a consistency constraint and prove that our methods hit this ceiling. We validate this approach through applications in forecasting and real-time signal extraction, introducing a leading-indicator criterion and tailored linear benchmarks.


\end{abstract}

\section{Introduction}

Petropoulos et al. (2022), in their comprehensive treatment of forecasting theory and practice, contend that “the theory of forecasting appears mature today … the fact that forecasting is mature does not mean that all has been done.” Building on this perspective, we propose new directions by formalizing a novel forecast accuracy–timeliness dilemma. \\

Forecasting involves several partly competing objectives: accuracy (predicting future levels), timeliness (lead–lag behavior relative to a benchmark, i.e., retardation or advancement), and smoothness (suppressing spurious high-frequency noise). In this sense, forecasting and real-time signal extraction (nowcasting) rest on common methodological foundations, as our examples illustrate.\\

Ideally, one would optimize accuracy, smoothness, and timeliness (AST) in a single objective, producing predictors/nowcasts that track the latent level, detect turning points without systematic delay, and avoid spurious high-frequency noise. The AST setting, however, implies a formal trilemma: improving any one component necessarily worsens at least one of the others. Wildi (2005) and Wildi and McElroy (2019) propose a framework based on this trilemma, but their criterion lacks a closed-form solution. Wildi (2024), (2026a), and (2026b) further study a two-way accuracy–smoothness trade-off. Here, we instead focus on the explicit accuracy–timeliness trade-off in prediction.\\

We introduce two forecasting frameworks—Decoupling-From-Present (DFP) and Peak-Correlation-Shifting (PCS)—and derive exact closed-form solutions to their respective optimization problems.  Both generalize the classical minimum-MSE predictor, recovered as a special case. We prove a duality result: for any target accuracy, our predictors achieve the maximum possible phase excess (lead) relative to the MSE benchmark, yielding an efficient frontier that quantifies the accuracy–timeliness trade-off. While MSE is a single point on this frontier, DFP and PCS trace the full curve. We derive an upper bound on the maximum `meaningful' lead that can be achieved by a linear predictor under a consistency constraint; our predictors attain it, restricting hyperparameters to a valid admissible region.\\

Our method deliberately trades accuracy at selected horizons for a systematic time lead over the MSE-optimal predictor. Unlike classic forecast combinations that pool common information across competing forecasts at a fixed horizon (e.g., Hsiao and Wan, 2014; Makridakis et al., 2018), we exploit dissimilarities in predictors across different horizons. Counterintuitively, decoupling the predictor from the latest  observation—usually considered the most influential—can increase the lead, and overweighting older data can extend it further. In some scenarios where timeliness is paramount, the conventional goal of minimizing forecast error can be inverted to maximize it within a controlled bound.  This demonstrates how prioritizing speed can challenge conventional forecasting wisdom.   We focus on stationary univariate results for brevity, but the approach extends to nonstationary and multivariate cases.\\

We begin in Section \ref{onemulti} by documenting the limitations of the traditional MSE forecasting model. Section \ref{peak_cor_decoup} then introduces the DFP and PCS predictors, deriving closed-form solutions and a dual formulation that yields a new accuracy–timeliness efficient frontier. Lead time is defined in Section \ref{time_shift}, where we relate it to the key hyperparameters and derive, under a consistency constraint, an upper bound on admissible lead. Applications to time-series forecasting and signal extraction—including the construction of leading indicators and the customization of linear predictors—are presented in Section \ref{examples}. Section \ref{concl} then summarizes the main conclusions.

\section{Limitations of Classic Forecast Approach: a Case Study}\label{onemulti}

For illustration we consider forecasting an MA($q$) process 
\[x_t=\mu+\gamma_0\epsilon_t+\gamma_1\epsilon_{t-1}+...+\gamma_q\epsilon_{t-q},\]
where $\gamma_0=1$. To simplify exposition, we assume that $\mu=0$, that $\gamma_1,...,\gamma_q$ are known and that the white noise innovations $\epsilon_t$ are observed.\\ 

The MSE $h$-step ahead forecast $\hat{x}_{hT}^{MSE}$ of $x_{T+h}$ at the sample end $t=T$ is
\begin{eqnarray*}
\hat{x}_{hT}^{MSE}=\left\{\begin{array}{cc}\sum_{k=0}^{q-h} \gamma_{h+k}\epsilon_{T-k}&1\leq h\leq q\\0&\textrm{otherwise.}
\end{array}\right.
\end{eqnarray*}
For $h\leq q$ and $t\in\{q-h+1, \ldots ,T\}$, consider the \emph{forecast filter} $\hat{x}_{ht}^{MSE}=\sum_{k=0}^{q-h} \gamma_{k+h}\epsilon_{t-k}$ with weights $\gamma_{h}, \ldots ,\gamma_{q}$. The cross correlation function (CCF) between $x_{t+\delta}$ and $\hat{x}_{ht}^{MSE}$ at lead $\delta$ ($q\geq \delta>0$) or at lag $\delta$ ($h-q\leq \delta\leq 0$) is given by
\begin{eqnarray*}
\rho(x_{t+\delta},\hat{x}_{ht}^{MSE})&=&\frac{\sum_{k=\max(0,-\delta)}^{\min(q-h,q-\delta)} \gamma_{k+\delta}\gamma_{k+h}}{\sqrt{\sum_{k=0}^q \gamma_k^2\sum_{k=0}^{q-h}\gamma_{k+h}^2}}.
\end{eqnarray*}

For illustration, Fig. \ref{cor} presents the CCF for an exponentially weighted MA(9) process, $x_t=\sum_{k=0}^{9}0.9^k\epsilon_{t-k}$, together with its optimal $h=5$ step-ahead forecast, $\hat{x}_{ht}^{MSE}=\hat{x}_{5t}^{MSE}=\sum_{k=0}^{4}0.9^{k+5}\epsilon_{t-k}$, evaluated at leads and lags $\delta\in\{-4,-3,...,9\}$ (top panel).\\

By design, $\hat{x}_{5t}^{MSE}$ maximizes the CCF at the designated forecast horizon $h = 5$,  as indicated by the kink at the green vertical line. The correlation also increases as the lead shrinks ($\delta \to 0$, with $0\le \delta \le 5$), peaking at 0.86 when $\delta=0$ (black vertical line).
The lower panel plots $x_{t}$ (black line) and its $h=5$-step-ahead predictor, $\hat{x}_{5t}^{MSE}$ (green), for $t = 1, …, 100$. Instead of clearly leading the series, the predictor is largely time-aligned with the process it is meant to forecast. This is at odds with the intent of a truly forward-looking predictor and stems from the tight coupling between $\hat{x}_{ht}^{MSE}$ and $x_t$ implied by MSE optimality, as the CCF makes clear.  Increasing 
$h$ does not weaken this dependence, so changing the forecast horizon does not fix the problem. Although deliberately simple, the example reflects common challenges in economic forecasting, including real-time signal extraction and trend nowcasting (see Section \ref{RTBCA}). Our new methods extend this framework to general forecasting problems and address the issue by explicitly incorporating a left shift (an effective lead) in forecasts for general stationary processes (extensions to nonstationary integrated processes are omitted for brevity).\\

\begin{figure}[H]\begin{center}\includegraphics[height=3in, width=4in]{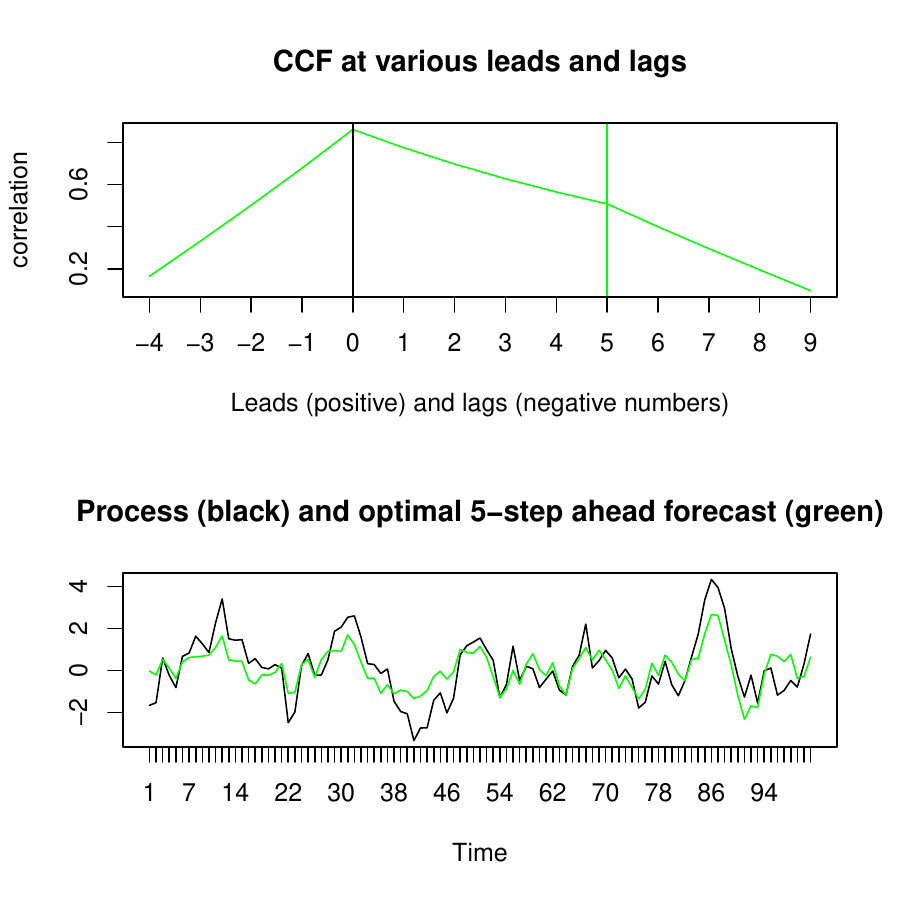}\caption{Top graph: CCF of MSE 5-step ahead forecast filter and  MA(9) process at leads $9\geq\delta> 0$ and lags $-4\leq \delta\leq 0$. Bottom graph: a comparison of realizations of the process (black line) and of the optimal 5-step ahead forecast filter (green line)\label{cor}}\end{center}\end{figure}

We now move beyond the MA(9) example and consider a stationary linear process
\begin{eqnarray}\label{proc}
x_t=\sum_{k=-\infty}^{\infty}\gamma_k\epsilon_{t-k},
\end{eqnarray}
where the impulse response sequence  $\gamma_k$ is square-summable, $\sum_{k=-\infty}^{\infty}\gamma_k^2<\infty$. We assume iid innovations $\epsilon_t$, though the results extend to uncorrelated white noise when focusing on best \emph{linear} prediction. For simplicity, we take $\epsilon_t$ to be observed\footnote{We also analyze AR inversions, but the MA impulse-response representation is generally more informative about predictor dynamics.}; otherwise, $\epsilon_t$ can be recovered using standard inversion methods (see Section \ref{AR3} for an example). Allowing $k < 0$ in \eqref{proc} (i.e., inclusion of future innovations $\epsilon_{t-k}$) accommodates signal-extraction settings with two-sided bi-infinite filters (cf. Section \ref{RTBCA}). Consider a generic (not necessarily MSE-optimal) predictor for $x_{t+h}$ generated by a causal finite-length filter of order $L$ with coefficient vector $\mathbf{b}=(b_0, …, b_{L-1})'$:
\[
\hat{x}_{ht} = \mathbf{b}'\boldsymbol{\epsilon}_t=\sum_{k=0}^{L-1} b_k\epsilon_{t-k},
\]
where $\mathbf{b}=\mathbf{b}(h)$ generally depends on the forecast horizon $h$. Define the length-$L$ coefficient vector for horizon $h$ as $\boldsymbol{\gamma}_h:=(\gamma_h,\gamma_{h+1},...,\gamma_{h+L-1})'$, which yields the (length-$L$) MSE-optimal linear predictor
\[
\hat{x}_{ht}^{MSE}=\sum_{k=0}^{L-1}\gamma_{h+k}\epsilon_{t-k}.
\]
For brevity, we refer to the`MSE predictor at horizon $h$' interchangeably as the weight vector $\boldsymbol{\gamma}_h$ or its implied predictor $\hat{x}_{ht}^{MSE}$. When $h=0$, this reduces to the nowcast, and $\boldsymbol{\gamma}_0$ denotes the contemporaneous coefficient vector. In a stationary ARMA($p,q$) setting, the nowcast coincides with a truncated MA inversion (Wold decomposition, assuming no deterministic component).  As mentioned, we restrict attention to univariate stationary processes, noting that the approaches extend to integrated and multivariate settings (not shown).


\section{Look-Ahead Forecast Optimization Criteria}\label{peak_cor_decoup}

We develop two variants—Decoupling from Present (DFP) and Peak Correlation Shifting (PCS)—designed to advance the predictor relative to the benchmark MSE design (i.e., induce a controllable lead via a leftward shift). 

\subsection{Decoupling from Present }\label{dfp}

As established in Section \ref{onemulti}, the classical MSE predictor in this setting achieves its CCF maximum at the contemporaneous lag $\delta=0$, yielding a coincident rather than a leading forecast. This property is essentially invariant to increases in the forecast horizon $h\leq q$. To induce a systematic lead, we therefore seek to decouple the predictor from the present and formulate the following decoupling-from-present (DFP) problem:
\begin{eqnarray}
&&\max_{\mathbf{b}} \boldsymbol{\gamma}_h'\mathbf{b}\label{dp}\\
\textrm{s.t.}&&\boldsymbol{\gamma}_{0}'\mathbf{b}/\|\boldsymbol{\gamma}_0\|=\alpha_0\nonumber\\
&&\mathbf{b}'\mathbf{b}=1\nonumber,
\end{eqnarray}
where $\|\boldsymbol{\gamma}_0\|=\sqrt{\boldsymbol{\gamma}_{0}'\boldsymbol{\gamma}_{0}}$.
The unit-norm constraint, $\mathbf{b}'\mathbf{b}=1$, implies that the linear objective $\boldsymbol{\gamma}_h'\mathbf{b}$ is proportional to the correlation $\rho(x_{t+h},\hat{x}_{ht})$ at $\delta=h$.
This is computationally advantageous: $\rho(x_{t+h}, \hat{x}_{ht})$ is a nonlinear function of $\mathbf{b}$, whereas the objective in \eqref{dp} is linear, yielding a tractable optimization while preserving the correlation-based interpretation at the target horizon. Likewise, under the unit-length constraint $\alpha_0$ specifies the contemporaneous correlation  $\rho(x_{t},\hat{x}_{ht})$ at $\delta=0$. Setting  $\alpha_0=0$ achieves complete decoupling of the predictor from the present; see Section \ref{la} for an example. Observe that $\alpha_0=\cos(\theta_{0b})$, where $\theta_{0b}$ denotes the angle between $\boldsymbol{\gamma}_{0}$ and $\mathbf{b}$. Hence, $\alpha_0$ parametrizes the phase $\theta_{0b}$ between the predictor $\mathbf{b}$ and the nowcast $\boldsymbol{\gamma}_{0}$ at $\delta=0$, see Fig.\ref{dfp_geometry}. Conversely, $\theta_{0b}$ is determined by $\alpha_0$ up to its sign. Typically, only one of the two solutions ($\pm \theta_{0b}$) is genuinely forward-looking; the alternative is backward-looking and entails a corresponding lag. 
Conceptually, this connects to Wildi and McElroy (2019), who frame a trilemma in terms of time-shift and amplitude functions. In what follows, we assume $|\alpha_0|\le 1$ (feasibility) and that the implied solution delivers a strictly positive objective value (strict positivity), as is typical in applications (the criterion can vanish, for example, for an MA($q$) process when $h>q$). \\ 

\textbf{Remark}: Decoupling the predictor from the present may seem counterintuitive, as the most recent observation $x_T$ at the sample end $t = T$ is typically regarded as highly informative for forecasting. However, decoupling does not imply that $x_T$ is irrelevant. Rather, it acknowledges that achieving an effective lead of the predictor requires a dynamic pattern that differs from the contemporaneous behavior of $x_T$. In this sense, a degree of decoupling—i.e.,  loosening the tie to the present—is intrinsic to constructing a genuinely forward-looking predictor.\\  

The following Theorem provides the closed-form solution to the DFP criterion.

\begin{Theorem}\label{solution_dfp}
Assume $\boldsymbol{\gamma}_0$ and $\boldsymbol{\gamma}_h$ ($h>0$) are linearly independent and  $|\alpha_0|< 1$. Then the solution to criterion \eqref{dp} lies in $\textrm{span}\{\boldsymbol{\gamma}_0,\boldsymbol{\gamma}_h\}$:
\begin{eqnarray}\label{dfp_sol}
\mathbf{{b}}=\lambda_1\boldsymbol{\gamma}_h+\lambda_2\boldsymbol{\gamma}_0
\end{eqnarray}
for some scalars $\lambda_1,\lambda_2$. 
If $\boldsymbol{\gamma}_0'\boldsymbol{\gamma}_h\neq 0$, then  
\begin{eqnarray*}
\lambda_1&=&\frac{\alpha_0\|\boldsymbol{\gamma}_0\|-\lambda_2\boldsymbol{\gamma}_0'\boldsymbol{\gamma}_0}{\boldsymbol{\gamma}_0'\boldsymbol{\gamma}_h}
\end{eqnarray*}
and $\lambda_2$ is determined by the quadratic
\begin{eqnarray}\label{quad}
\lambda_2&=&\frac{-b\pm\sqrt{b^2-4ac}}{2a},
\end{eqnarray}
with coefficients 
\begin{eqnarray*}
a&=&\boldsymbol{\gamma}_0'\boldsymbol{\gamma}_0\left(\frac{\boldsymbol{\gamma}_0'\boldsymbol{\gamma}_0\boldsymbol{\gamma}_h'\boldsymbol{\gamma}_h}{\boldsymbol{(\gamma}_0'\boldsymbol{\gamma}_h)^2}-1\right)\\
b&=&2\alpha_0\|\boldsymbol{\gamma}_0\|\left(1-\frac{\boldsymbol{\gamma}_0'\boldsymbol{\gamma}_0\boldsymbol{\gamma}_h'\boldsymbol{\gamma}_h}{(\boldsymbol{\gamma}_0'\boldsymbol{\gamma}_h)^2}\right)\\
c&=&\frac{\alpha_0^2\boldsymbol{\gamma}_0'\boldsymbol{\gamma}_0\boldsymbol{\gamma}_h'\boldsymbol{\gamma}_h}{(\boldsymbol{\gamma}_0'\boldsymbol{\gamma}_h)^2}-1.
\end{eqnarray*}
Among the two candidate solutions for $\lambda_2$, choose the root that maximizes the objective $\boldsymbol{\gamma}_h'\mathbf{b}$ in \eqref{dp}. 
On the other hand, if $\boldsymbol{\gamma}_0'\boldsymbol{\gamma}_h=0$, then 
\begin{eqnarray*}
\lambda_2&=&\frac{\alpha_0}{\|\boldsymbol{\gamma}_0\|}\\
\lambda_1&=&\pm\sqrt{\frac{1-\alpha_0^2}{\boldsymbol{\gamma}_h'\boldsymbol{\gamma}_h}}.
\end{eqnarray*}
where the appropriate sign maximizes the objective $\boldsymbol{\gamma}_h'\mathbf{b}$. 
Finally, if $|\alpha_0|= 1$, then the solution is $\mathbf{{b}}=\sign(\alpha_0)\boldsymbol{\gamma}_0/\|\boldsymbol{\gamma}_0\|$.
\end{Theorem}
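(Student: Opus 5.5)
The plan is to treat \eqref{dp} as maximizing a linear functional over the intersection of the unit sphere with an affine hyperplane, and to use Lagrange multipliers. With $\nu_1,\nu_2$ the multipliers of the two constraints, the stationarity condition is
\[
\boldsymbol{\gamma}_h-\nu_1\boldsymbol{\gamma}_0/\|\boldsymbol{\gamma}_0\|-2\nu_2\mathbf{b}=0 .
\]
For $|\alpha_0|<1$ and linearly independent $\boldsymbol{\gamma}_0,\boldsymbol{\gamma}_h$, the constraint gradients $\boldsymbol{\gamma}_0$ and $2\mathbf{b}$ are linearly independent at any feasible point. Otherwise $\mathbf{b}=\pm\boldsymbol{\gamma}_0/\|\boldsymbol{\gamma}_0\|$, which would force $|\alpha_0|=1$. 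So the regularity condition holds.

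If $\nu_2=0$, stationarity gives $\boldsymbol{\gamma}_h\propto\boldsymbol{\gamma}_0$, which contradicts independence. Hence $\nu_2\neq0$ and $\mathbf{b}=\lambda_1\boldsymbol{\gamma}_h+\lambda_2\boldsymbol{\gamma}_0$, which is \eqref{dfp_sol}. As a cross-check, decompose $\mathbf{b}=\mathbf{b}_\parallel+\mathbf{b}_\perp$ with respect to $\textrm{span}\{\boldsymbol{\gamma}_0,\boldsymbol{\gamma}_h\}$. Any $\mathbf{b}_\perp\neq0$ leaves the objective and the $\alpha_0$-constraint unchanged and only uses up norm. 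Moving that norm into the component of the span orthogonal to $\boldsymbol{\gamma}_0$ then changes the objective by a term of one sign or the other. The existence of a maximizer follows from compactness of the feasible set.

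Within the span, the two constraints become two scalar equations in $(\lambda_1,\lambda_2)$.

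\textbf{Case} $\boldsymbol{\gamma}_0'\boldsymbol{\gamma}_h\neq0$. The linear constraint $\lambda_1\boldsymbol{\gamma}_0'\boldsymbol{\gamma}_h+\lambda_2\boldsymbol{\gamma}_0'\boldsymbol{\gamma}_0=\alpha_0\|\boldsymbol{\gamma}_0\|$ gives the stated $\lambda_1$ directly. I substitute it into the norm constraint
\[
\lambda_1^2\boldsymbol{\gamma}_h'\boldsymbol{\gamma}_h+2\lambda_1\lambda_2\boldsymbol{\gamma}_0'\boldsymbol{\gamma}_h+\lambda_2^2\boldsymbol{\gamma}_0'\boldsymbol{\gamma}_0=1 ,
\]
expand, and multiply through by $(\boldsymbol{\gamma}_0'\boldsymbol{\gamma}_h)^2$ (or divide appropriately). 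Collecting powers of $\lambda_2$ should give $a\lambda_2^2+b\lambda_2+c=0$ with the displayed coefficients. The $\lambda_2^2$ coefficient is $\boldsymbol{\gamma}_0'\boldsymbol{\gamma}_0(\boldsymbol{\gamma}_0'\boldsymbol{\gamma}_0\boldsymbol{\gamma}_h'\boldsymbol{\gamma}_h/(\boldsymbol{\gamma}_0'\boldsymbol{\gamma}_h)^2-1)$, because the cross term contributes $-2$ and the last term $+1$.

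Cauchy--Schwarz with independence gives $a>0$. The discriminant is then nonnegative, since $b^2-4ac$ simplifies to a positive multiple of $(1-\alpha_0^2)$ times the Gram determinant. So there are two real roots, i.e.\ the circle in the plane meets the line in two points. The maximizer is whichever of these gives the larger objective, since the Lagrange stationary points on this one-dimensional feasible set are exactly these two points.

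\textbf{Case} $\boldsymbol{\gamma}_0'\boldsymbol{\gamma}_h=0$. The linear constraint gives $\lambda_2=\alpha_0/\|\boldsymbol{\gamma}_0\|$ immediately. The norm constraint reduces to $\lambda_1^2\boldsymbol{\gamma}_h'\boldsymbol{\gamma}_h=1-\alpha_0^2$, and the sign is chosen so that $\lambda_1\boldsymbol{\gamma}_h'\boldsymbol{\gamma}_h>0$.

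\textbf{Case} $|\alpha_0|=1$. Here Cauchy--Schwarz is attained with equality, so $\mathbf{b}$ must be $\sign(\alpha_0)\boldsymbol{\gamma}_0/\|\boldsymbol{\gamma}_0\|$. The feasible set is this single point.

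The main obstacle is the algebra verifying that the quadratic coefficients match exactly, particularly $c$. The constant term is $\alpha_0^2\boldsymbol{\gamma}_0'\boldsymbol{\gamma}_0\boldsymbol{\gamma}_h'\boldsymbol{\gamma}_h/(\boldsymbol{\gamma}_0'\boldsymbol{\gamma}_h)^2-1$. The linear term combines $-2\alpha_0\|\boldsymbol{\gamma}_0\|\boldsymbol{\gamma}_0'\boldsymbol{\gamma}_0\boldsymbol{\gamma}_h'\boldsymbol{\gamma}_h/(\boldsymbol{\gamma}_0'\boldsymbol{\gamma}_h)^2$ with $+2\alpha_0\|\boldsymbol{\gamma}_0\|$ from the cross term. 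I will carry this out carefully.
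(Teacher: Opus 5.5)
Your proof is correct. The algebra matches the paper's, but you establish the key step, that the maximizer lies in $\mathrm{span}\{\boldsymbol{\gamma}_0,\boldsymbol{\gamma}_h\}$, by a different and more rigorous route.

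\textbf{How the paper argues.} It reasons geometrically. The feasible set is the unit sphere intersected with the circular cone of axis $\boldsymbol{\gamma}_0$ and semi-angle $\arccos(\alpha_0)$. It then asserts, relying on the figure, that maximizing the projection onto $\boldsymbol{\gamma}_h$ places the optimizer in the plane of the axis and the objective direction. After that it performs the same substitution into the norm constraint that you do, and it never checks that the resulting quadratic has real roots.

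\textbf{What your route adds.} You actually prove the span claim, either via Lagrange multipliers with a verified constraint qualification or, more elementarily, via your orthogonal decomposition. You obtain existence of a maximizer from compactness. You also show $a>0$ and
$b^2-4ac=4\,\boldsymbol{\gamma}_0'\boldsymbol{\gamma}_0\,(1-\alpha_0^2)\,G/(\boldsymbol{\gamma}_0'\boldsymbol{\gamma}_h)^2>0$, where $G$ is the Gram determinant, so the two candidate roots genuinely exist. The paper's picture buys interpretation: $\alpha_0$ is the cosine of the cone's semi-angle, which underlies the later phase-excess reading. Your argument buys rigor.

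\textbf{Small remarks.}
\begin{itemize}
\item The stated $a,b,c$ come out directly from the substitution. Multiplying through by $(\boldsymbol{\gamma}_0'\boldsymbol{\gamma}_h)^2$ would only rescale them, so that step is unnecessary.
\item Within the plane the feasible set consists of just the two intersection points, not a one-dimensional set. This does not affect your logic: the maximizer exists, satisfies the first-order conditions, lies in the plane, and so is one of those two points.
\item Your decomposition argument gives more than a cross-check. It yields the maximizer in closed form, $\mathbf{b}=\alpha_0\boldsymbol{\gamma}_0/\|\boldsymbol{\gamma}_0\|+\sqrt{1-\alpha_0^2}\,\mathbf{w}/\|\mathbf{w}\|$, with $\mathbf{w}=\boldsymbol{\gamma}_h-(\boldsymbol{\gamma}_0'\boldsymbol{\gamma}_h/\boldsymbol{\gamma}_0'\boldsymbol{\gamma}_0)\boldsymbol{\gamma}_0$.
\item From that closed form, $\lambda_1>0$ always, which settles the root selection the theorem leaves implicit. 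The two roots are symmetric about $-b/(2a)=\alpha_0/\|\boldsymbol{\gamma}_0\|$, so the maximizing one is $\lambda_2=\bigl(-b-\sign(\boldsymbol{\gamma}_0'\boldsymbol{\gamma}_h)\sqrt{b^2-4ac}\bigr)/(2a)$.
\item In the orthogonal case, your explicit choice $\lambda_1>0$ is likewise sharper than the paper's ``appropriate sign.''
\end{itemize}
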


\textbf{Proof:} We first assume $|\alpha_0|< 1$. Then, under criterion \eqref{dp}, the feasible set is the intersection of the unit sphere ($\mathbf{b}'\mathbf{b}=1$) and the circular cone defined by the decoupling constraint $\boldsymbol{\gamma}_{0}'\mathbf{b}=\alpha_0\|\boldsymbol{\gamma}_0\|\|\mathbf{b}\|$ (where $\|\mathbf{b}\|=1$ is relaxed), i.e., the cone with axis $\boldsymbol{\gamma}_{0}$ and semi-angle $\theta_{0b}=\arccos(\alpha_0)$. Moreover, maximizing the projection $\boldsymbol{\gamma}_h'\mathbf{b}$ implies that the optimizer—i.e., the admissible unit vectors on the cone—lies in the two-dimensional subspace spanned by the axis $\boldsymbol{\gamma}_0$ of the cone and the  objective direction $\boldsymbol{\gamma}_h$ (see Fig. \ref{dfp_geometry}). Consequently, 
\[
\mathbf{{b}}=\lambda_1\boldsymbol{\gamma}_h+\lambda_2\boldsymbol{\gamma}_0,
\]
where $\lambda_1$ and $\lambda_2$ are determined by the decoupling and unit-norm constraints. Assuming $\boldsymbol{\gamma}_0'\boldsymbol{\gamma}_h\neq 0$, the decoupling constraint implies
\begin{eqnarray}\label{lm1_lm2}
\lambda_1=\frac{\alpha_0\|\boldsymbol{\gamma}_0\|-\lambda_2\boldsymbol{\gamma}_0'\boldsymbol{\gamma}_0}{\boldsymbol{\gamma}_0'\boldsymbol{\gamma}_h}.
\end{eqnarray}
From the unit-length constraint we deduce 
\begin{eqnarray*}
1=\mathbf{b}'\mathbf{b}=(\lambda_1\boldsymbol{\gamma}_h+\lambda_2\boldsymbol{\gamma}_0)'(\lambda_1\boldsymbol{\gamma}_h+\lambda_2\boldsymbol{\gamma}_0).
\end{eqnarray*}
Substituting \eqref{lm1_lm2}  into the unit‑length condition and simplifying yields the quadratic \eqref{quad} in $\lambda_2$; the appropriate branch (sign) is then selected by maximizing the objective function.
\\
On the other hand, if $\boldsymbol{\gamma}_0'\boldsymbol{\gamma}_h= 0$, then the decoupling constraint implies
\begin{eqnarray}\label{lm2_case2}
\lambda_2=\alpha_0\frac{\|\boldsymbol{\gamma}_0\|}{\boldsymbol{\gamma}_0'\boldsymbol{\gamma}_0}=\frac{\alpha_0}{\|\boldsymbol{\gamma}_0\|}
\end{eqnarray}
Inserted into the length constraint, we obtain
\[
1=\mathbf{b}'\mathbf{b}=(\lambda_1\boldsymbol{\gamma}_h+\lambda_2\boldsymbol{\gamma}_0)'(\lambda_1\boldsymbol{\gamma}_h+\lambda_2\boldsymbol{\gamma}_0)=\lambda_1^2\boldsymbol{\gamma}_h'\boldsymbol{\gamma}_h+\lambda_2^2\boldsymbol{\gamma}_0'\boldsymbol{\gamma}_0.
\]
Inserting \eqref{lm2_case2} then implies
\[
\lambda_1=\pm\sqrt{\frac{1-\alpha_0^2}{\boldsymbol{\gamma}_h'\boldsymbol{\gamma}_h}},
\]
where the appropriate sign is selected to satisfy the objective (e.g., maximize alignment with $\boldsymbol{\gamma}_h $). Finally, if $|\alpha_0|= 1$, then the cone degenerates to a ray along $\boldsymbol{\gamma}_0$ and the decoupling constraint together with the unit-length uniquely determine
\[
\mathbf{{b}}=\sign(\alpha_0)\boldsymbol{\gamma}_0/\|\boldsymbol{\gamma}_0\|,
\]
as claimed.\hfill\qed\\

\begin{figure}[H]\begin{center}\includegraphics[height=4in, width=4in]{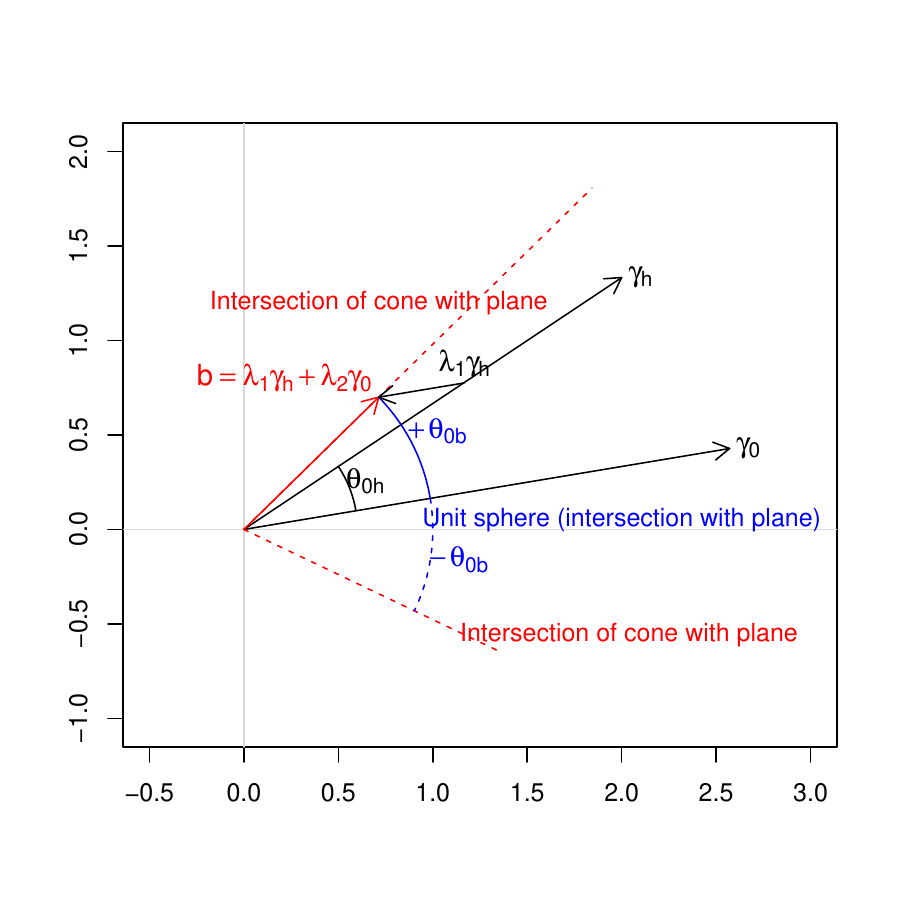}\caption{Geometry of the DFP predictor. The solution $\mathbf{{b}}=\lambda_1\boldsymbol{\gamma}_h+\lambda_2\boldsymbol{\gamma}_0$ lies at the intersection of the plan spanned by $(\boldsymbol{\gamma}_0,\boldsymbol{\gamma}_h)$, the circular cone with semi-angle $\theta_{0b}$ (red), and the unit-sphere (blue). The cone has axis $\boldsymbol{\gamma}_0$ and half-angle $\theta_{0b}=\arccos(\alpha_0)$. The figure depicts one of the two solutions of the quadratic in $\lambda_2$ associated with $+\theta_{0b}$ (with $\lambda_1>0,\lambda_2<0$); the mirrored solution at $-\theta_{0b}$ (with $\lambda_1<0,\lambda_2>0$) does not maximize the objective $\boldsymbol{\gamma}_h'\mathbf{b}$ and corresponds to a lag, so it is omitted. The quantity $\theta_{0b}-\theta_{0h}>0$ represents the phase excess of the DFP relative to the MSE predictor $\boldsymbol{\gamma}_h$ and is associated with its advancement (lead), see Section \ref{time_shift}.\label{dfp_geometry}}\end{center}\end{figure}
\textbf{Remark: } Figure \ref{dfp_geometry} shows that, for the solution based on the root $\lambda_2<0$, the angle $\theta_{0b}$ between the DFP predictor $\mathbf{{b}}$ and the nowcast $\boldsymbol{\gamma}_0$ exceeds the angle $\theta_{0h}$ between $\boldsymbol{\gamma}_h$ and $\boldsymbol{\gamma}_0$. The difference $\theta_{0b}-\theta_{0h}>0$ represents the excess phase induced by decoupling and is directly related to the predictor’s lead under a `standard' configuration (see Section \ref{time_shift}). The alternative root $\lambda_2>0$, corresponding to the branch at  $-\theta_{0b}$, produces a phase-reversed solution and thus corresponds to a lag (i.e., a negative lead). Complete decoupling occurs when $\theta_{0b}=\pi/2$, in which case $\mathbf{{b}}$ is orthogonal to the nowcast $\boldsymbol{\gamma}_0$. \\

We can derive the distribution of the DFP predictor conditional on the distribution of an estimate $\boldsymbol{\hat{\gamma}}_0$ of the nowcast  $\boldsymbol{\gamma}_0$. To this end, we approximate Equation \eqref{dfp_sol} with
\begin{eqnarray}\label{dfp_gamma0}
\mathbf{b}\approx(\lambda_1 \mathbf{F}^h+\lambda_2\mathbf{I})\boldsymbol{\gamma}_0,
\end{eqnarray}
where $\mathbf{F}$ is the $L\times L$ forward operator  and $\mathbf{I}$ is the identity matrix.  The matrix 
$\mathbf{F}$ has ones on its first superdiagonal and zeroes elsewhere. Note that the last $h$ entries of $\mathbf{F}^h\boldsymbol{\gamma}_0$ are zero, so the approximation in \eqref{dfp_gamma0} is valid provided $\gamma_{k}\approx 0$ for $L-1\geq k\geq L-1-h$, which holds for sufficiently large $L$ under stationarity. 

\begin{Corollary}\label{dfp_dist}
Let $\hat{\boldsymbol{\gamma}}_0$ be an estimator of $\boldsymbol{\gamma}_0$ with mean $\boldsymbol{\mu}_{\gamma_0}$ and covariance matrix $\boldsymbol{\Sigma}_{\gamma_0}$ and assume $\lambda_1,\lambda_2$ are fixed. Then the DFP predictor in Equation \eqref{dfp_sol} has mean and covariance
\begin{eqnarray}\label{ci_dfp}
\boldsymbol{\mu}_b\approx(\lambda_1 \mathbf{F}^h+\lambda_2\mathbf{I})\boldsymbol{\mu}_{\gamma_0}~and~\boldsymbol{\Sigma}_b\approx(\lambda_1 \mathbf{F}^h+\lambda_2\mathbf{I})\boldsymbol{\Sigma}_{\gamma_0}(\lambda_1 \mathbf{F}^h+\lambda_2\mathbf{I})',
\end{eqnarray}
where the approximations are valid for $L$ sufficiently large. Moreover, if $\hat{\boldsymbol{\gamma}}_0$ is (asymptotically) Gaussian, then $\mathbf{\hat{b}}$ is (asymptotically) Gaussian as well.
\end{Corollary}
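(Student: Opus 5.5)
The plan is to treat the DFP predictor, under the approximation \eqref{dfp_gamma0}, as a fixed linear transformation of the estimated nowcast. Then the moment formulas follow from linearity of expectation and the usual covariance rule for linear maps.

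First, note what \eqref{dfp_sol} says when $\hat{\boldsymbol{\gamma}}_0$ is plugged in: $\hat{\mathbf{b}}=\lambda_1\hat{\boldsymbol{\gamma}}_h+\lambda_2\hat{\boldsymbol{\gamma}}_0$. The vector $\hat{\boldsymbol{\gamma}}_h$ is the horizon-$h$ shift of the impulse response. Its first $L-h$ entries coincide exactly with those of $\mathbf{F}^h\hat{\boldsymbol{\gamma}}_0$. Its last $h$ entries are $\hat\gamma_{L},\ldots,\hat\gamma_{L+h-1}$, which are negligible for $L$ large by square-summability of $\gamma_k$; this is the same truncation argument the paper uses to justify \eqref{dfp_gamma0}. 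Hence, with $\mathbf{A}:=\lambda_1\mathbf{F}^h+\lambda_2\mathbf{I}$, we have $\hat{\mathbf{b}}\approx\mathbf{A}\hat{\boldsymbol{\gamma}}_0$, and the error vanishes as $L\to\infty$.

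Second, since $\lambda_1,\lambda_2$ are held fixed by assumption, $\mathbf{A}$ is a deterministic $L\times L$ matrix. Therefore $E[\mathbf{A}\hat{\boldsymbol{\gamma}}_0]=\mathbf{A}\boldsymbol{\mu}_{\gamma_0}$ and $\mathrm{Cov}(\mathbf{A}\hat{\boldsymbol{\gamma}}_0)=\mathbf{A}\boldsymbol{\Sigma}_{\gamma_0}\mathbf{A}'$, which gives \eqref{ci_dfp}. Third, an affine image of a Gaussian vector is Gaussian. In the asymptotic case I would invoke the continuous mapping theorem, or Slutsky's lemma if the truncation remainder is itself random but $o_p$ at the relevant rate, to transfer asymptotic normality from $\hat{\boldsymbol{\gamma}}_0$ to $\mathbf{A}\hat{\boldsymbol{\gamma}}_0$ and hence to $\hat{\mathbf{b}}$.

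The main obstacle is making the two approximations honest. The first is the tail truncation, which must be controlled in mean and covariance as well as pointwise. I would bound it using the decay of $\gamma_k$, requiring $\sum_{k\ge L}\gamma_k^2\to 0$ together with analogous control on the tail moments of the estimator. The second is the treatment of $\lambda_1,\lambda_2$ as fixed, although by Theorem \ref{solution_dfp} they are nonlinear functions of $\boldsymbol{\gamma}_0$ and $\boldsymbol{\gamma}_h$. The statement explicitly conditions on fixed $\lambda$'s, so I would simply note this assumption. A full treatment would add a delta-method term from the gradient of $\lambda_{1,2}$ with respect to $\boldsymbol{\gamma}_0$, which is beyond the corollary as stated.
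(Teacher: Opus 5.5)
Your proposal is correct and is the paper's argument. The paper says only that the result is immediate from the approximation $\mathbf{b}\approx(\lambda_1\mathbf{F}^h+\lambda_2\mathbf{I})\boldsymbol{\gamma}_0$ in \eqref{dfp_gamma0} and the standard rules for means, covariances and Gaussianity under fixed linear maps; your extra care about the truncation remainder (correctly placed in the entries $\gamma_L,\ldots,\gamma_{L+h-1}$) and about fixed $\lambda_1,\lambda_2$ goes beyond the paper, which treats random $\lambda$ separately by the delta method in Corollary \ref{mse_dfp_dist}.
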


The result follows immediately from \eqref{dfp_gamma0} and standard properties of linear transformations. For the (asymptotic) distribution of the classic MSE estimate $\hat{\boldsymbol{\gamma}}_0$ of $\boldsymbol{\gamma}_0$, see, for instance, Brockwell and Davis (1993) and Cox (1990)\footnote{For an ARMA-process, the distribution of the MA-inverted weights $\gamma_k$, $k=1,...,L-1$ (noting that $\gamma_0=1$) can be derived from the distribution of the AR- and MA-estimates via the delta method as elaborated in Cox (1990), cf. Appendix \ref{sim_stu}.}. \\

A limitation of the corollary is that it treats $\lambda_1$ and $\lambda_2$ as fixed, even though both depend on $\boldsymbol{\gamma}_0$ through Theorem \ref{solution_dfp}. We address this issue below. \\

For further consideration, we propose an alternative DFP-MSE decoupling criterion given by
\begin{eqnarray}
&&\min_{\mathbf{b}} (\mathbf{b}-\boldsymbol{\gamma}_h)'(\mathbf{b}-\boldsymbol{\gamma}_h)\label{dp2}\\
\textrm{s.t.}&&\boldsymbol{\gamma}_{0}'\mathbf{b}=\alpha_0.\nonumber 
\end{eqnarray}
This objective penalizes the squared deviation of $\mathbf{b}$  from the MSE-predictor $\boldsymbol{\gamma}_h$ and thus minimizes the mean-squared prediction error for $x_{t+h}$ under the DFP predictor. In particular, if $\alpha_0=\boldsymbol{\gamma}_{0}'\boldsymbol{\gamma}_{h}$, the solution recovers $\mathbf{{b}}=\boldsymbol{\gamma}_h$. In this formulation, the unit-norm constraint from the earlier DFP criterion can be omitted, simplifying both the geometry and the optimization, but the hyperparameter $\alpha_0$ loses its intuitive interpretation as a correlation and becomes harder to interpret. The objective captures accuracy, while lead is enforced by the decoupling constraint, thereby formalizing the accuracy–timeliness trade-off in Criterion \eqref{dp2} (see Section \ref{ac_ti_di}). Although minimizing \eqref{dp2} is natural, some non-standard configurations instead require maximizing the MSE (under boundedness) to obtain a sizeable lead from $\mathbf{b}$, see Appendix \eqref{l_ns_s}. 


\begin{Proposition}\label{dfp_mse_prop}
Assume that $\boldsymbol{\gamma}_0$ and $\boldsymbol{\gamma}_h$ ($h>0$) are linearly independent. Then the solution to criterion \eqref{dp2} is 
\begin{eqnarray}\label{mse_dfp}
\mathbf{{b}}=\boldsymbol{\gamma}_h+{\lambda}\boldsymbol{\gamma}_0,
\end{eqnarray}
where the scalar ${\lambda}$ is determined by 
\begin{eqnarray}\label{lambda_mse}
{\lambda}=\frac{\alpha_0-\boldsymbol{\gamma}_h'\boldsymbol{\gamma}_0}{\boldsymbol{\gamma}_0'\boldsymbol{\gamma}_0}.
\end{eqnarray}
\end{Proposition}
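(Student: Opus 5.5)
This is a projection of $\boldsymbol{\gamma}_h$ onto an affine hyperplane, so I would prove it with a single Lagrange multiplier and then confirm global optimality by convexity.

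\emph{Lagrangian and stationarity.} Criterion \eqref{dp2} minimizes the strictly convex quadratic $(\mathbf{b}-\boldsymbol{\gamma}_h)'(\mathbf{b}-\boldsymbol{\gamma}_h)$ subject to one linear equality constraint. I form
\[
\mathcal{L}(\mathbf{b},\mu)=(\mathbf{b}-\boldsymbol{\gamma}_h)'(\mathbf{b}-\boldsymbol{\gamma}_h)-2\mu(\boldsymbol{\gamma}_0'\mathbf{b}-\alpha_0).
\]
Setting the gradient in $\mathbf{b}$ to zero gives $\mathbf{b}-\boldsymbol{\gamma}_h-\mu\boldsymbol{\gamma}_0=0$. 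Hence $\mathbf{b}=\boldsymbol{\gamma}_h+\lambda\boldsymbol{\gamma}_0$ with $\lambda:=\mu$, which is exactly the form \eqref{mse_dfp}.

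\emph{Determining $\lambda$.} I substitute this form into the constraint $\boldsymbol{\gamma}_0'\mathbf{b}=\alpha_0$. This gives $\boldsymbol{\gamma}_0'\boldsymbol{\gamma}_h+\lambda\boldsymbol{\gamma}_0'\boldsymbol{\gamma}_0=\alpha_0$, which rearranges to \eqref{lambda_mse}. The division is legitimate because linear independence forces $\boldsymbol{\gamma}_0\neq\mathbf{0}$, so $\boldsymbol{\gamma}_0'\boldsymbol{\gamma}_0>0$.

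\emph{Global optimality.} The objective is strictly convex and the feasible set is an affine hyperplane, so the stationary point is the unique global minimizer. To avoid relying on KKT theory, I would also check this directly. Any feasible $\mathbf{b}$ can be written as $\mathbf{b}=\boldsymbol{\gamma}_h+\lambda\boldsymbol{\gamma}_0+\mathbf{d}$ with $\boldsymbol{\gamma}_0'\mathbf{d}=0$. Then
\[
\|\mathbf{b}-\boldsymbol{\gamma}_h\|^2=\lambda^2\|\boldsymbol{\gamma}_0\|^2+\|\mathbf{d}\|^2,
\]
which is minimized only at $\mathbf{d}=\mathbf{0}$.

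There is no real obstacle here. The only points needing care are noting that nondegeneracy ($\boldsymbol{\gamma}_0\neq\mathbf{0}$) is all that is actually used, and observing the consistency check that $\alpha_0=\boldsymbol{\gamma}_0'\boldsymbol{\gamma}_h$ yields $\lambda=0$, so that $\mathbf{b}=\boldsymbol{\gamma}_h$ as remarked before the proposition.
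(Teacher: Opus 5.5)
Your proof is correct and is essentially the paper's argument: the paper identifies the minimizer as the orthogonal projection of $\boldsymbol{\gamma}_h$ onto the affine hyperplane $\boldsymbol{\gamma}_0'\mathbf{b}=\alpha_0$, whose normal is $\boldsymbol{\gamma}_0$, and then fixes $\lambda$ from the constraint, exactly as you do. Your Lagrangian step and the decomposition $\|\mathbf{b}-\boldsymbol{\gamma}_h\|^2=\lambda^2\|\boldsymbol{\gamma}_0\|^2+\|\mathbf{d}\|^2$ make explicit the ``least-squares optimality principle'' that the paper invokes without proof, and you are right that only $\boldsymbol{\gamma}_0\neq\mathbf{0}$ is actually needed.
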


\textbf{Proof:} The solution to \eqref{dp2} lies on the affine hyperplane defined by the decoupling constraint $\boldsymbol{\gamma}_{0}'\mathbf{b}=\alpha_0$. By the least-squares optimality principle, it is the orthogonal projection of $\boldsymbol{\gamma}_{h}$ onto this hyperplane. Because 
$\boldsymbol{\gamma}_{0}$ is normal to the hyperplane, the optimizer must have the form
\[
\mathbf{{b}}=\boldsymbol{\gamma}_h+{\lambda}\boldsymbol{\gamma}_0
\]
for some scalar ${\lambda}$, which is determined by enforcing the decoupling constraint:
\[
\boldsymbol{\gamma}_{0}'(\boldsymbol{\gamma}_h+{\lambda}\boldsymbol{\gamma}_0)=\alpha_0,
\]
as claimed. \hfill\qed\\

When $\alpha_0=\boldsymbol{\gamma}_h'\boldsymbol{\gamma}_0$ in \eqref{lambda_mse}, we have $\lambda=0$ and $\mathbf{{b}}=\boldsymbol{\gamma}_h$ reproduces the MSE predictor, as noted earlier. The distribution of the MSE-DFP predictor follows from Corollary \ref{dfp_dist} by setting $\lambda_1=1$ and $\lambda_2=\lambda$, treating $\lambda$ as fixed. We now extend this result to incorporate the randomness of $\lambda$, which arises when $\lambda$ is defined as a function of an estimator $\boldsymbol{\hat{\gamma}}_0$ of $\boldsymbol{\gamma}_0$, which is substituted into \eqref{lambda_mse}.

\begin{Corollary}\label{mse_dfp_dist}
Suppose $\boldsymbol{\gamma}_0$ and $\boldsymbol{\gamma}_h$ ($h>0$) are linearly independent. Let  $\hat{\boldsymbol{\gamma}}_0$ be an estimator of $\boldsymbol{\gamma}_0$ with mean $\boldsymbol{\mu}_{\gamma_0}\to\boldsymbol{\gamma}_0\neq \mathbf{0}$ and covariance matrix $\boldsymbol{\Sigma}_{\gamma_0}$ with $\boldsymbol{\Sigma}_{\gamma_0}\to\mathbf{0}$ for $L\to\infty$. Then, for filter length $L$ and sample size $T$ sufficiently large, the mean $\boldsymbol{\mu}_b$ and covariance matrix $\boldsymbol{\Sigma}_b$ of the DFP-MSE \eqref{mse_dfp}, based on  stochastic $\lambda=\lambda(\hat{\boldsymbol{\gamma}}_0)$ in \eqref{lambda_mse}, can be approximated by
\begin{eqnarray}
\boldsymbol{\mu}_b&\approx&(\mathbf{F}^h+{\tilde{\lambda}}\mathbf{I})\boldsymbol{\gamma}_0\label{gen_dist_mu}\\
\boldsymbol{\Sigma}_{b}&\approx&\mathbf{J}\boldsymbol{\Sigma}_{\gamma_0}\mathbf{J}'\label{gen_dist_sigma}
\end{eqnarray}
 where
\begin{eqnarray}
\tilde{\lambda}&=&\frac{\alpha_0-(\mathbf{F}^h\boldsymbol{\gamma}_0)'\boldsymbol{\gamma}_0}{\boldsymbol{\gamma}_0'\boldsymbol{\gamma}_0}\label{lambda_til}\\
\mathbf{J}&=&\mathbf{F}^h+\boldsymbol{\gamma}_0\partial\boldsymbol{\tilde{\lambda}}'+\tilde{\lambda}\mathbf{I}\label{Jac}\\
{d}{\tilde{\lambda}}_k&=&-\frac{I_{\{k\geq h\}}\gamma_{k-h}+I_{\{k\leq L-1-h\}}\gamma_{k+h}}{\|\boldsymbol{\gamma}_0\|}-\frac{\alpha_0-(\mathbf{F}^h\boldsymbol{\gamma}_0)'\boldsymbol{\gamma}_0}{\|\boldsymbol{\gamma}_0\|^2}2\gamma_k\label{dlambdak}
\end{eqnarray} 
with ${d}{\tilde{\lambda}}_k$ denoting the $k$-th element of the row vector $\partial\boldsymbol{\tilde{\lambda}}'$ and $I_{\{\}}$ the indicator function. If the first weight of $\boldsymbol{\gamma}_0$ is held fixed ($\gamma_0=1$), then the first row and the first column of $\boldsymbol{\Sigma}_{\gamma_0}$ are zero; additionally, in \eqref{Jac} the first column of  $\boldsymbol{\gamma}_0\partial\boldsymbol{\tilde{\lambda}}'+\tilde{\lambda}\mathbf{I}$ is set to zero. 
\end{Corollary}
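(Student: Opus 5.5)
The plan is a first-order delta-method argument applied to the map $\mathbf{g}:\boldsymbol{\gamma}_0\mapsto\mathbf{b}$. First, as in \eqref{dfp_gamma0}, replace $\boldsymbol{\gamma}_h$ by $\mathbf{F}^h\boldsymbol{\gamma}_0$. This is legitimate for $L$ large, since the neglected tail weights $\gamma_k$, $k\ge L-h$, vanish under stationarity. Then \eqref{mse_dfp}--\eqref{lambda_mse} become a smooth function of $\boldsymbol{\gamma}_0$ alone:
\[
\mathbf{g}(\boldsymbol{\gamma}_0)=\mathbf{F}^h\boldsymbol{\gamma}_0+\tilde{\lambda}(\boldsymbol{\gamma}_0)\boldsymbol{\gamma}_0,\qquad \tilde{\lambda}(\boldsymbol{\gamma}_0)=\frac{\alpha_0-(\mathbf{F}^h\boldsymbol{\gamma}_0)'\boldsymbol{\gamma}_0}{\boldsymbol{\gamma}_0'\boldsymbol{\gamma}_0},
\]
which is exactly \eqref{lambda_til}. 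Because $\boldsymbol{\gamma}_0\neq\mathbf{0}$, the denominator is bounded away from zero in a neighbourhood of $\boldsymbol{\gamma}_0$, so $\mathbf{g}$ is continuously differentiable there.

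Next, expand $\mathbf{g}(\hat{\boldsymbol{\gamma}}_0)$ to first order around $\boldsymbol{\gamma}_0$:
\[
\mathbf{g}(\hat{\boldsymbol{\gamma}}_0)\approx\mathbf{g}(\boldsymbol{\gamma}_0)+\mathbf{J}(\hat{\boldsymbol{\gamma}}_0-\boldsymbol{\gamma}_0).
\]
Since $\boldsymbol{\mu}_{\gamma_0}\to\boldsymbol{\gamma}_0$ and $\boldsymbol{\Sigma}_{\gamma_0}\to\mathbf{0}$, the estimator concentrates at $\boldsymbol{\gamma}_0$ and the remainder is of smaller order. Taking expectations gives \eqref{gen_dist_mu}, and taking covariances gives \eqref{gen_dist_sigma}.

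The Jacobian follows from the product rule:
\[
\mathbf{J}=\mathbf{F}^h+\boldsymbol{\gamma}_0\,\partial\tilde{\lambda}'+\tilde{\lambda}\mathbf{I}.
\]
The gradient of $\tilde{\lambda}$ comes from the quotient rule applied to $N/D$, with $N=\alpha_0-\boldsymbol{\gamma}_0'\mathbf{F}^h\boldsymbol{\gamma}_0$ and $D=\boldsymbol{\gamma}_0'\boldsymbol{\gamma}_0$. The numerator has
\[
\partial N/\partial\boldsymbol{\gamma}_0=-(\mathbf{F}^h+\mathbf{F}^{h\prime})\boldsymbol{\gamma}_0 .
\]
Its $k$-th component is $-(I_{\{k\le L-1-h\}}\gamma_{k+h}+I_{\{k\ge h\}}\gamma_{k-h})$, and these indicators are the ones appearing in \eqref{dlambdak}. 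The denominator contributes $-2\gamma_k N/D^2$. Together,
\[
d\tilde{\lambda}_k=\frac{-\bigl(I_{\{k\le L-1-h\}}\gamma_{k+h}+I_{\{k\ge h\}}\gamma_{k-h}\bigr)}{D}-\frac{2\gamma_k N}{D^2}.
\]

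\textbf{Main obstacle: normalization in \eqref{dlambdak}.} The quotient rule gives $D=\|\boldsymbol{\gamma}_0\|^2$ in the first term and $D^2=\|\boldsymbol{\gamma}_0\|^4$ in the second, whereas the displayed formula has $\|\boldsymbol{\gamma}_0\|$ and $\|\boldsymbol{\gamma}_0\|^2$. I would recompute this carefully. If the mismatch persists, I would state the derivative in the quotient-rule form above and treat the printed normalization as a typographical convention that does not affect the structure of the result.

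Finally, I would handle the fixed first weight $\gamma_0=1$. If $\hat\gamma_0$ is nonrandom, the first row and column of $\boldsymbol{\Sigma}_{\gamma_0}$ vanish, and the first column of $\mathbf{J}$ is never used in $\mathbf{J}\boldsymbol{\Sigma}_{\gamma_0}\mathbf{J}'$. The $\tilde\lambda$-dependent part of that column can therefore be set to zero without changing \eqref{gen_dist_sigma}.
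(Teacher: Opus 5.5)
Your proposal is correct and follows the paper's own route. You substitute $\boldsymbol{\gamma}_h\approx\mathbf{F}^h\boldsymbol{\gamma}_0$, apply the delta method with the product-rule Jacobian $\mathbf{J}=\mathbf{F}^h+\boldsymbol{\gamma}_0\partial\tilde{\lambda}'+\tilde{\lambda}\mathbf{I}$, and note that the first column of $\mathbf{J}$ drops out of $\mathbf{J}\boldsymbol{\Sigma}_{\gamma_0}\mathbf{J}'$ when $\gamma_0=1$ is fixed.

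Your quotient-rule check is also right, and it goes beyond the paper, whose proof simply asserts \eqref{dlambdak} without computing it. The printed formula holds only if each $\|\boldsymbol{\gamma}_0\|$ is read as $\boldsymbol{\gamma}_0'\boldsymbol{\gamma}_0$, giving denominators $\|\boldsymbol{\gamma}_0\|^2$ and $\|\boldsymbol{\gamma}_0\|^4$. You should therefore state your version as the corrected result rather than call the discrepancy harmless. It generally changes $\boldsymbol{\Sigma}_b$, for instance whenever $\gamma_0=1$ and some other weight is nonzero, so that $\|\boldsymbol{\gamma}_0\|>1$.
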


\textbf{Proof}: When $\lambda=\lambda(\hat{\boldsymbol{\gamma}}_0)$ in the DFP predictor is not treated as fixed, Equation \eqref{gen_dist_mu} follows by continuity and by using the approximation $\boldsymbol{\gamma}_h\approx\mathbf{F}^h\boldsymbol{\gamma}_0$ when  $L$ is sufficiently large. To obtain the covariance matrix in \eqref{gen_dist_sigma}, we apply the delta method (see Cox, 1990) and use the first-order Taylor expansion
\[
\mathbf{b}(\hat{\boldsymbol{\gamma}}_0)\approx \mathbf{b}({\boldsymbol{\gamma}_0})+\mathbf{J}(
\hat{\boldsymbol{\gamma}}_0-{\boldsymbol{\gamma}_0}),
\]
assuming $T$ is large. The Jacobian $\mathbf{J}$ has entries 
\begin{eqnarray}\label{jij}
\mathbf{J}_{ij}=\partial b_{i-1}/\partial\gamma_{j-1}=\mathbf{F}^h_{ij}+\frac{\partial\tilde{\lambda}}{\partial\gamma_{j-1}}\gamma_{i-1}+\tilde{\lambda} \mathbf{I}_{ij},~1\leq i,j\leq L,
\end{eqnarray}
with $\tilde{\lambda}=\tilde{\lambda}(\boldsymbol{\gamma}_0)$ as specified in \eqref{lambda_til} and $\mathbf{I}_{ij}=\delta_{ij}$ the Kronecker symbol. Hence, $\partial\tilde{\lambda}/\partial\gamma_{j-1}=d\tilde{\lambda}_{j}$ as in \eqref{dlambdak}. It follows that 
\[
\boldsymbol{\Sigma}_{b}\approx\mathbf{J}\boldsymbol{\Sigma}_{\gamma_0}\mathbf{J}',
\]
which establishes the claim.  If the first component $\gamma_0$ of $\boldsymbol{\gamma}_0$ is held fixed, then the first row and the first column of $\boldsymbol{\Sigma}_{\gamma_0}$ vanish. In addition, the partial derivatives with respect to $\gamma_0$ in \eqref{jij} disappear; equivalently, the first column of $\boldsymbol{\gamma}_0\partial\boldsymbol{\tilde{\lambda}}'+\tilde{\lambda}\mathbf{I}$ is set to zero (the first row does generally not vanish because $\tilde{\lambda}$ depends on $\gamma_k$, $L>k>0$).  \hfill\qed\\

\textbf{Remark}: an extension of Corollary \ref{dfp_dist} to the case of random $\lambda_1,\lambda_2$  can be obtained in the same way. In that setting, however,  $\lambda_i(\boldsymbol{\gamma}_0)$, for $i=1,2$—and therefore $\mathbf{{b}}=\lambda_1\boldsymbol{\gamma}_h+\lambda_2\boldsymbol{\gamma}_0$—become more involved functions of $\boldsymbol{\gamma}_0$; see Theorem \ref{solution_dfp}.\\

Both DFP predictors—those defined by criteria \eqref{dp} and \eqref{dp2}—are linear combinations of $\boldsymbol{\gamma}_h$ and $\boldsymbol{\gamma}_0$, but  entail distinct trade-offs. The mean-squared formulation \eqref{dp2} yields a geometrically simple construction with a unique solution and is especially convenient  for extensions to integrated processes (not pursued here). By contrast,  \eqref{dp} induces a richer geometry and reduces to quadratics with two candidate solutions. Its chief advantage is interpretability: under the unit-norm constraint $\mathbf{b}'\mathbf{b}=1$,  the decoupling parameter has a direct correlation meaning at $\delta=0$, namely $\alpha_0=\rho(x_t,\hat{x}_{ht})$,  which facilitates hyperparameter selection. Criterion \eqref{dp2} can be adapted to endow its hyperparameter with a correlation interpretation as well, albeit at the cost of a more complex solution structure.\\

Let $\theta_{0b}$ and $\theta_{0h}$ represent the angles formed between $\boldsymbol{\gamma}_{0}$ and $\mathbf{{b}}$, and between $\boldsymbol{\gamma}_{0}$ and $\boldsymbol{\gamma}_{h}$, respectively, see Fig.\ref{dfp_geometry_2}. For illustration we assume the nowcast and the MSE predictor to lie in the first quadrant, so that $\theta_{0h}<\pi/2$ (the general case is addressed in Section \ref{time_shift}).  When $\lambda < 0$, a phase excess occurs where $\theta_{0b} > \theta_{0h}$. Intuitively,  the MSE predictor’s lead over the nowcast is tied to  $\theta_{0h}$
 , so doing more of the same—i.e., taking $\theta_{0b}>\theta_{0h}$—further increases $\mathbf{b}$'s lead; we formalize this later, noting that this intuition does not always hold. Strict positivity $\boldsymbol{\gamma}_h'\mathbf{b}>0$ implies $\beta=\theta_{hb}<\pi/2$. In some circumstances, when $\theta_{0h}<\pi/2$ (as assumed), it is meaningful to formulate a more stringent positivity rule 
\begin{eqnarray}\label{str_pos}
\theta_{hb}<\pi/2-\theta_{0h}
\end{eqnarray}
for the phase excess. This guarantees that the DFP predictor correlates positively with \emph{both} the $h$-step ahead MSE predictor and the nowcast (or the original process $x_t$).\\

The Appendix states formal results that relate these angles to  $\lambda$ and to the hyperparameter $\alpha_0$ (Proposition \ref{proplambda0} and Corollary \ref{cor_al}). Unlike Fig.\ref{dfp_geometry}, Fig.\ref{dfp_geometry_2} emphasizes the so-called DFP-triangle with vertices $\mathbf{0}$, $\boldsymbol{\gamma}_h$ and $\mathbf{b}$; this configuration will be useful when connecting $\lambda$ to an effective measure of the DFP predictor’s lead (or advancement); cf. Section \ref{time_shift}. \\

\begin{figure}[H]\begin{center}\includegraphics[height=4in, width=4in]{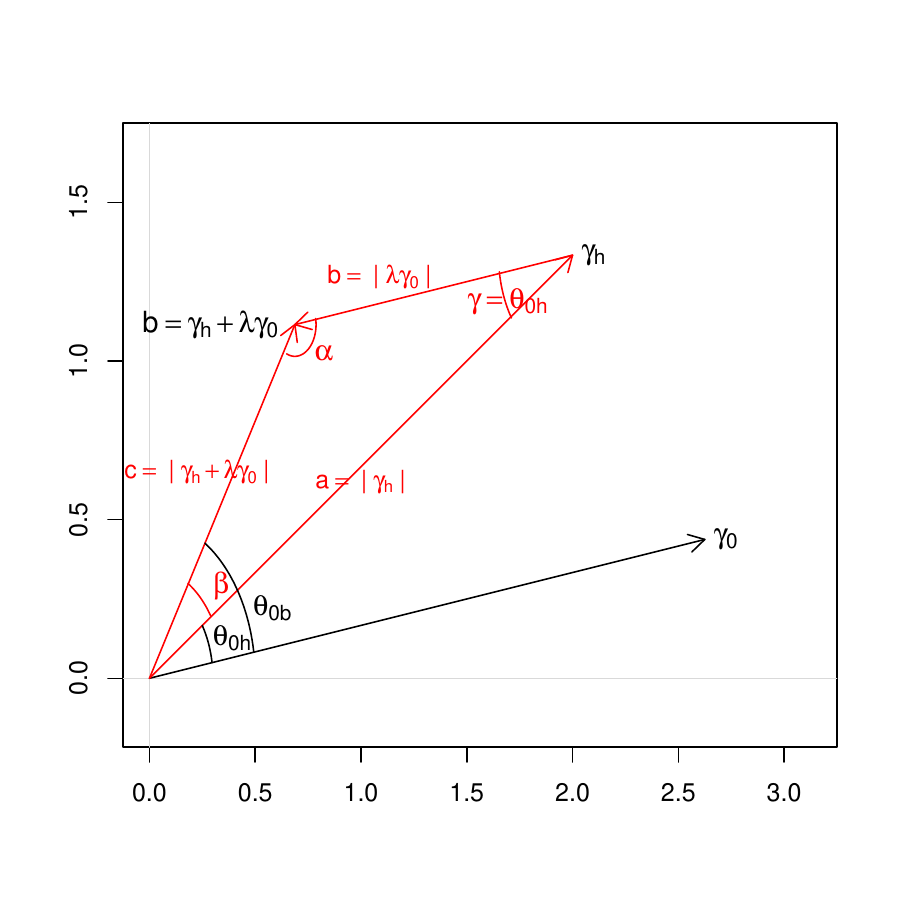}\caption{Side lengths $a$, $b$ and $c$ (red) as well as angles $\alpha$, $\beta$ and $\gamma$ of the DFP-triangle spanned by the vectors $\boldsymbol{\gamma}_h$ and $\mathbf{{b}}=\boldsymbol{\gamma}_h+\lambda\boldsymbol{\gamma}_0$ (black), with $\lambda<0$.\label{dfp_geometry_2}}\end{center}\end{figure}
Complementing the MA form \eqref{mse_dfp}, an AR inversion further clarifies the structure of the DFP principle. For a stationary, invertible zero-mean ARMA process $x_t$, let $\boldsymbol{\phi}$ ($\phi_0=1$) be a finite (length-$L$) AR inversion. For sufficiently large $L$, $\boldsymbol{\phi}'\mathbf{x}_t\approx\epsilon_t$, 
with $\mathbf{x}_t:=(x_t,...,x_{t-(L-1)})'$. The AR-form DFP weights $\mathbf{a}=\mathbf{a}(\lambda)$ follow from convolving $\boldsymbol{\phi}$ with $\mathbf{b}(\lambda)$:
\begin{eqnarray}\label{ainv}
\mathbf{a}(\lambda) = \boldsymbol{\phi} \cdot\mathbf{b}(\lambda)= \boldsymbol{\phi} \cdot \boldsymbol{\gamma}_h + \lambda \boldsymbol{\phi} \cdot \boldsymbol{\gamma}_0 \approx \boldsymbol{\phi} \cdot \boldsymbol{\gamma}_h + \lambda \mathbf{e_1},
\end{eqnarray}
where $\boldsymbol{\phi} \cdot \boldsymbol{\gamma}_h$ is the AR-inverted MSE predictor and $\mathbf{e}_1=(1, 0, \dots, 0)$. Since for large $L$ the AR inversion $\boldsymbol{\phi}$ approximately cancels the MA inversion $\boldsymbol{\gamma}_0$, $\lambda$ affects only the lag-zero weight of the MSE predictor in the AR representation.\\

Equation \eqref{ainv} neatly captures the DFP criterion's dual purpose: it matches the standard MSE predictor by replicating lags $k>0$ in $\boldsymbol{\phi} \cdot \boldsymbol{\gamma}_h$, and enforces decoupling by adjusting only lag zero in  $\lambda \mathbf{e_1}$. The magnitude  $|a_0|$ of the weight on $x_t$ can decrease or increase, as $\lambda<0$ (phase excess) becomes more negative, depending on the sign of the lag-zero weight, i.e.,  $\sign((\boldsymbol{\phi}\cdot\boldsymbol{\gamma}_h)_0)$, of the MSE predictor (assuming $|\lambda|<|(\boldsymbol{\phi}\cdot\boldsymbol{\gamma}_h)_0|$). Although the AR form \eqref{ainv} is simple, the MA impulse response \eqref{mse_dfp} is more informative about the predictor’s dynamics, and the look-ahead predictor in the next section offers no comparably simple AR form; hence the impulse response (MA-) representation $\mathbf{b}(\lambda)$ of the prediction problem remains our focus.\\

To conclude, we briefly comment on the linear-independence assumption for the finite-length predictors $\boldsymbol{\gamma}_0$ and $\boldsymbol{\gamma}_h$ (length $L$). If it fails (i.e., $\boldsymbol{\gamma}_0\propto\boldsymbol{\gamma}_h$), the DFP problem becomes degenerate since the objective is fixed by the constraint. A practical remedy is to slightly perturb one or both of  $\boldsymbol{\gamma}_0,\boldsymbol{\gamma}_h$ to restore linear independence.\footnote{If at least one entry $\gamma_k\neq 0$ for $k=L-h,\ldots,L-1$, then $\mathbf{F}^h\boldsymbol{\gamma}_0$ (approximately $\boldsymbol{\gamma}_h$) and $\boldsymbol{\gamma}_0$ are linearly independent.} These perturbations can be made arbitrarily small, leaving the performance of the underlying MSE predictor (or nowcast) essentially unchanged; we do not elaborate further for brevity. Cases where $\boldsymbol{\gamma}_0\propto\boldsymbol{\gamma}_h$ can arise naturally: when the DGP is an AR(1) process, $\boldsymbol{\gamma}_0$ is proportional to $\boldsymbol{\gamma}_h$ for all $h$; however, proportionality might also appear for periodic or seasonal structures at specific $h$; finally, collinearity could be accidental, without deeper implications for the DGP. In all such cases, the perturbation scheme can be used to induce effective leads with the modified DFP predictors.

\subsection{Peak Correlation Shifting }

We propose an alternative mechanism to attenuate \emph{indirectly} the MSE predictor’s contemporaneous coupling with $\boldsymbol{\gamma}_0$ by relocating the peak of its CCF. As shown in Fig. \ref{cor}, the MSE predictor peaks at  $\delta = 0$.  By shifting this peak to the target forecast horizon $\delta=h$, the predictor becomes genuinely forward-looking. Accordingly, we formulate the Peak Correlation Shifting (PCS) problem: 
\begin{eqnarray}
&&\max_{\mathbf{b}} \boldsymbol{\gamma}_{{h}}'\mathbf{b}\label{peak_cor_sh_crit}\\
&&(\boldsymbol{\gamma}_{h-1}-\boldsymbol{\gamma}_{h})'\mathbf{b}=\beta_h\nonumber\\ 
&&\mathbf{b}'\mathbf{b}=1,\nonumber
\end{eqnarray}
where   $\beta_h$ is a hyperparameter that controls the shape of the CCF. As with the DFP criterion, we assume feasibility—i.e., $|\beta_h|/\|\boldsymbol{\gamma}_{h-1}-\boldsymbol{\gamma}_{h}\|\leq 1$—and positivity.\\

For interpretation, note that in many applications the CCF attains its maximum at a lead smaller than $h$ (i.e., the peak lies to the left of $\delta=h$). As a result—often, though not always—$(\boldsymbol{\gamma}_{h-1}-\boldsymbol{\gamma}_{h})'\mathbf{b}>0$ at $\delta=h$ (cf. Fig.\ref{cor}). The PCS criterion \eqref{peak_cor_sh_crit} directly controls this term, which measures the change in the forecast CCF from lead $h-1$ to $h$. A necessary (though not sufficient) condition to move the peak to $\delta=h$ is $(\boldsymbol{\gamma}_{h-1}-\boldsymbol{\gamma}_{h})'\mathbf{b}<0$, opposite to Fig.\ref{cor}; this can be enforced by choosing $\beta_h$ in \eqref{peak_cor_sh_crit}.  \\
In some settings, however, constraining only the single lead $\delta=h$ may be too weak to generate a meaningful peak shift. A stronger alternative is to impose the constraint over a neighborhood of leads (Appendix \ref{pcs_ext}), but for clarity we proceed with the simple PCS specification above.  \\

The proposed PCS criterion \eqref{peak_cor_sh_crit} is structurally similar to the DFP formulation \eqref{dp}. In particular, the solution to \eqref{peak_cor_sh_crit} follows directly from Theorem \ref{solution_dfp} upon the substitutions
\[
\boldsymbol{\gamma}_0\to\boldsymbol{\gamma}_{h-1} - \boldsymbol{\gamma}_h, \textrm{~~} \alpha_0\to \beta_h/\|\boldsymbol{\gamma}_{h-1}-\boldsymbol{\gamma}_h\|,
\]
assuming $\boldsymbol{\gamma}_{h-1}$ and $\boldsymbol{\gamma}_h$ are linearly independent. Consequently,
\begin{eqnarray}\label{orig_pcs}
\mathbf{b}=\lambda_1\boldsymbol{\gamma}_h+\lambda_2(\boldsymbol{\gamma}_{h-1}-\boldsymbol{\gamma}_h)\approx\left((\lambda_1-\lambda_2)\mathbf{F}^h+\lambda_2\mathbf{F}^{h-1}\right)\boldsymbol{\gamma}_0,
\end{eqnarray}
where the approximation holds for sufficiently large $L$ under stationarity. The distribution of the PCS predictor can be obtained from  Corollary \ref{dfp_dist} (assuming fixed $\lambda_1,\lambda_2$) with the corresponding adjustments. \\

By analogy with \eqref{dp2}, we can define an MSE-PCS criterion as
\begin{eqnarray}\label{pcs_mse}
&&\min_{\mathbf{b}} (\boldsymbol{\gamma}_h-\mathbf{b})'(\boldsymbol{\gamma}_h-\mathbf{b})\\
&&\left(\boldsymbol{\gamma}_{h-1}-\boldsymbol{\gamma}_h\right)'\mathbf{b}=\beta_h\nonumber,
\end{eqnarray}
where we omit the unit-norm constraint $\mathbf{b}'\mathbf{b}=1$. The  solution is 
\begin{eqnarray}\label{pcs_sol}
\mathbf{b}=\boldsymbol{\gamma}_h+\lambda(\boldsymbol{\gamma}_{h-1}-\boldsymbol{\gamma}_h)\approx\left(\mathbf{F}^h+\lambda(\mathbf{F}^{h-1}-\mathbf{F}^h)\right)\boldsymbol{\gamma}_0
\end{eqnarray}
and it follows from Proposition \ref{dfp_mse_prop} with routine adjustments. The distribution of $\mathbf{b}$ is obtained as in Corollary \ref{mse_dfp_dist}, allowing for stochastic $\lambda=\lambda(\hat{\boldsymbol{\gamma}}_0)$.\\

Because the DFP and PCS predictors lie in $\textrm{span}\{\boldsymbol{\gamma}_0,\boldsymbol{\gamma}_h\}$ and $\textrm{span}\{\boldsymbol{\gamma}_{h-1},\boldsymbol{\gamma}_h\}$, respectively, they generally differ unless these subspaces coincide—i.e., unless $\boldsymbol{\gamma}_{h-1}$ is a linear combination of $\boldsymbol{\gamma}_{h}$ and $\boldsymbol{\gamma}_0$.\footnote{For an AR(2) process, the Yule–Walker equations imply this collinearity, so DFP and PCS coincide when $\alpha_0$ and $\beta_h$ match (not shown).} Figure \ref{pcs_geometry} illustrates the geometry of the PCS solution $\mathbf{b}$ in two cases, labeled $\mathbf{b}_1$ and $\mathbf{b}_2$, corresponding to $\beta_h=0$ (red) and $\beta_h<0$ (blue), respectively.  \\
When $\beta_h=0$ , the PCS constraint implies 
\begin{equation}\label{b1_0}
0=\mathbf{b}_1'\boldsymbol{\gamma}_{h-1}- \mathbf{b}_1'\boldsymbol{\gamma}_h=\|\boldsymbol{\gamma}_{h-1}\|\cos(\theta_{hb1}+\theta_{hh-1})-\|\boldsymbol{\gamma}_{h}\|\cos(\theta_{hb1}),
\end{equation} 
where we use $\|\mathbf{b}_1\|=1$, $\theta_{hh-1}$ is the angle between $\boldsymbol{\gamma}_{h-1}$ and $\boldsymbol{\gamma}_{h}$, and $\theta_{hb1}$ is the angle between $\mathbf{b}_1$ and $\boldsymbol{\gamma}_{h}$. \\
If $L$ is large enough and $\gamma_{h-1}\neq 0$, then 
\begin{equation}\label{ineq_pos}
\|\boldsymbol{\gamma}_h\|^2-\|\boldsymbol{\gamma}_{h-1}\|^2=\sum_{k=h}^{h+L-1}\gamma_k^2-\sum_{k=h-1}^{h+L-2}\gamma_k^2=\gamma_{L+h-1}^2-\gamma_{h-1}^2< 0
\end{equation}
since for stationary processes $\gamma_{L+h-1}\to 0$ as $L\to\infty$. Substituting this inequality into \eqref{b1_0} gives
\[
\cos(\theta_{hb1}+\theta_{hh-1})< \cos(\theta_{hb1}),
\]
which implies that, within the plane spanned by the two MSE predictors, $\mathbf{b}_1$ lies on the  side of $\boldsymbol{\gamma}_h$ opposite to  $\boldsymbol{\gamma}_{h-1}$. Equivalently, $\theta_{hb1}>0$ represents a positive phase excess of $\mathbf{b}_1$ relative to $\boldsymbol{\gamma}_h$. \\
The same qualitative conclusion applies when $\beta_h<0$ (the blue case) or when $\gamma_{h-1}=0$ in \eqref{ineq_pos}.\footnote{In this case \eqref{ineq_pos} becomes (asymptotically) an equality as $L\to\infty$, implying $\theta_{hb1}>0$ still holds; strictness follows from \eqref{b1_0} and $\theta_{hh-1}>0$, provided the two MSE predictors are not collinear.} The appendix provides a formal link between the hyperparameter $\beta_h$ and the phase excess $\theta_{hb}$ (Proposition \ref{l_t_b}).  \\

Finally, it is important to note that the AR-representation of the PCS predictor does not share the simple structure of the DFP predictor shown in Eq. \eqref{ainv}, where the influence of $\lambda$ is strictly limited to the lag-zero weight because the convolution term  $\boldsymbol{\phi}\cdot\boldsymbol{\gamma}_0\approx\mathbf{e}_1$ is an  identity. In the PCS predictor \eqref{pcs_sol}, however, the convolution term $\boldsymbol{\phi}\cdot(\boldsymbol{\gamma}_{h-1}-\boldsymbol{\gamma}_h)$ does not resolve to the identity. Consequently, the impact of $\lambda$ is not localized at lag zero; rather, it typically propagates across all lags, both within the MA as well as within the AR representations.

\begin{figure}[H]\begin{center}\includegraphics[height=4in, width=4in]{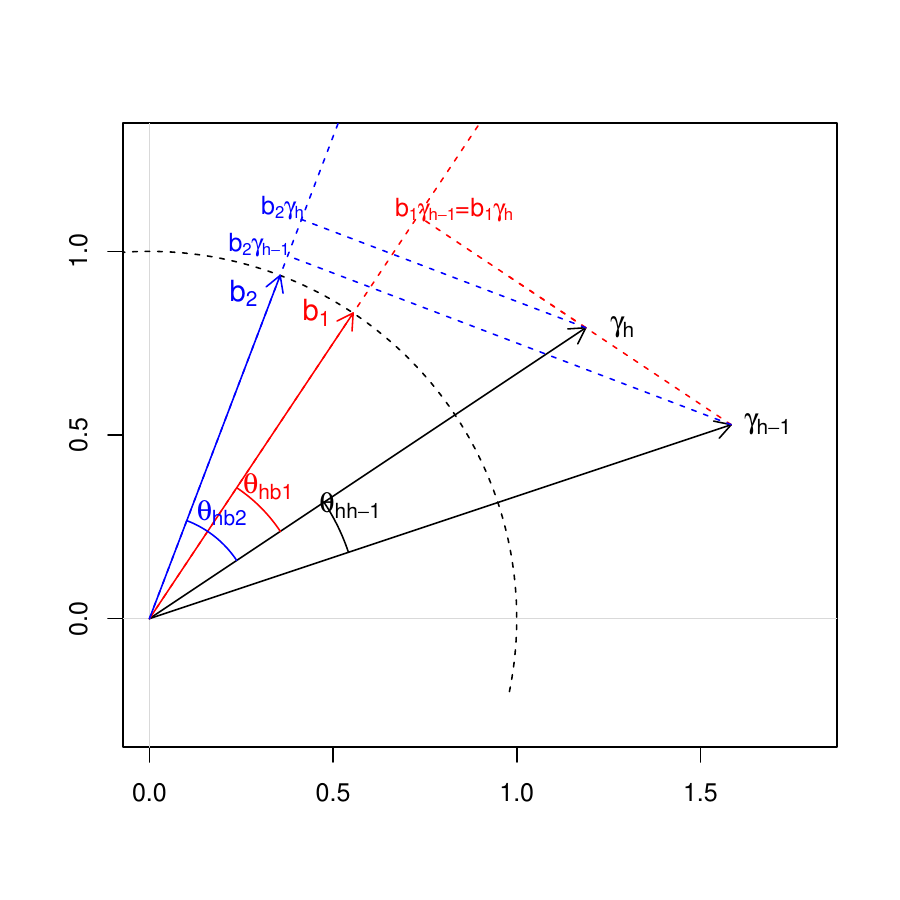}\caption{Geometry of the PCS predictor. Two solutions $\mathbf{b}_1$ (red) with $\beta_h=\mathbf{b}_1'\boldsymbol{\gamma}_{h-1}-\mathbf{b}_1'\boldsymbol{\gamma}_{h}=0$ and $\mathbf{b}_2$ (blue) with $\beta_h=\mathbf{b}_2'\boldsymbol{\gamma}_{h-1}-\mathbf{b}_2'\boldsymbol{\gamma}_{h}<0$, are shown in the plane spanned by $(\boldsymbol{\gamma}_{h-1},\boldsymbol{\gamma}_h)$. The positive angles $\theta_{hb1}$ and $\theta_{hb2}$ denote the phase excesses of the respective PCS predictors relative to the MSE predictor $\boldsymbol{\gamma}_h$. \label{pcs_geometry}}\end{center}\end{figure}

\subsection{The Construction of Leading Indicators and Benchmark Customization}

The (MSE-)PCS objective in \eqref{pcs_mse} can be adapted to tasks such as constructing macroeconomic leading indicators or customizing benchmarks. Since these settings share the same structure, we focus on the former.\\

Let $\boldsymbol{\gamma}$ be the causal, length-$L$ MA inversion of the target (coincident) indicator, e.g., the Wold decomposition of a stationary macro series (such as differenced GDP), or the causal right-half of a (possibly acausal) trend/cycle filter applied to that Wold decomposition (for brevity we do not discuss multivariate designs). We then solve
\begin{eqnarray}
&&\min_{\mathbf{b}} (\boldsymbol{\gamma}-\mathbf{b})'(\boldsymbol{\gamma}-\mathbf{b})\label{pcs_leading}\\
&&\left((\mathbf{F}^{h-1}-\mathbf{F}^h)\boldsymbol{\gamma}\right)'\mathbf{b}=\beta_h.\nonumber
\end{eqnarray}
Unlike the original formulation \eqref{pcs_mse},  criterion \eqref{pcs_leading} does not aim to approximate the $h$-step MSE predictor. Instead, it targets the impulse response $\boldsymbol{\gamma}$ of the indicator while enforcing a constraint that moves the CCF peak $h$ periods to the right, inducing leading behavior. The solution lies in the span of $\boldsymbol{\gamma}$ and $(\mathbf{F}^{h-1}-\mathbf{F}^h)\boldsymbol{\gamma}$, and therefore generally differs from DFP and the original PCS designs.\\

Benchmark customization means tracking a chosen target $\boldsymbol{\gamma}$ as closely as possible while adjusting its timeliness to satisfy the PCS constraint. This preserves a preferred design while allowing favorable real-time fine-tuning; see Wildi (2024–2026) for smoothness customizations.\\
When $\boldsymbol{\gamma}:=\boldsymbol{\gamma}_h$, PCS customizes the MSE predictor, but $\boldsymbol{\gamma}$ can be any target filter, including standard business-cycle tools such as Hodrick–Prescott (see Section \ref{RTBCA}), Christiano–Fitzgerald, Beveridge–Nelson, or Hamilton filters.

\subsection{Accuracy-Timeliness Dilemma}\label{ac_ti_di}

For the unit-length DFP predictor in \eqref{dp}, lead (advancement) appears as a positive phase excess, $\theta_{0b}>\theta_{0h}$ in Fig. \ref{dfp_geometry} (a formal connection between lead and phase excess is established in the next Section \ref{time_shift}).  In this setting, the decoupling parameter and phase are linked by $\alpha_0=\cos(\theta_{0b})$. Expressing the decoupling constraint in terms of the phase renders the accuracy–timeliness trade-off explicit: accuracy is encoded in the objective, while timeliness is governed by the phase-based constraint.\\
Geometrically, as the phase excess $\theta_{0b}-\theta_{0h}$ increases (with $\theta_{0b}>\theta_{0h}$ throughout)) in Fig.\ref{dfp_geometry}, the projection of $\mathbf{b}$ onto $\boldsymbol{\gamma}_h$ declines. Under the unit-norm constraint $\|\mathbf{b}\|=1$,  
\[
\mathbf{b}'\boldsymbol{\gamma}_h = \cos(\theta_{0b}-\theta_{0h})\|\boldsymbol{\gamma}_h\|,
\]
so a larger phase excess reduces the objective value $\mathbf{b}'\boldsymbol{\gamma}_h$ (and vice versa). 
An analogous argument applies to the PCS predictor, where the objective $\mathbf{b}'\boldsymbol{\gamma}_h=\cos(\theta_{hb})\|\boldsymbol{\gamma}_h\|$ decreases as the phase excess $\theta_{hb}$  increases (a formal link between $\theta_{hb}$ and the hyperparameter $\beta_h$ is established in Appendix \ref{phe_bet}, Proposition \ref{l_t_b}). This captures the inherent dilemma between   timeliness (phase excess or lead) and accuracy (target correlation or MSE).

\subsection{Dual Interpretation and Efficient (Accuracy-Timeliness) Frontier}


Consider the dual DFP optimization criterion 
\begin{eqnarray}
&&\min_{\mathbf{b}}\boldsymbol{\gamma}_{0}'\mathbf{b}\label{dp_dual}\\
\textrm{s.t.}&& \boldsymbol{\gamma}_h'\mathbf{b}/\|\boldsymbol{\gamma}_h\|=\alpha_h\nonumber\\
&&\mathbf{b}'\mathbf{b}=1\nonumber.
\end{eqnarray}
It is obtained from the primal DFP problem by interchanging the objective and the correlation constraint and switching to minimization. While the algebra is analogous, the geometry changes: the feasible cone determined by the constraints is centered on $\boldsymbol{\gamma}_h$ (not $\boldsymbol{\gamma}_0$),  i.e., the cone’s boundary rays are symmetric about $\boldsymbol{\gamma}_h$.\\

Let $\theta_{primal}=\theta_{0b}$ be the angle between $\mathbf{b}$ and $\boldsymbol{\gamma}_0$ in the primal geometry (cf. Fig.\ref{dfp_geometry}). In the dual problem, the constraint fixes  $\theta_{dual}=\theta_{hb}$, the angle between $\mathbf{b}$ and $\boldsymbol{\gamma}_h$. Since $\|\mathbf{b}\|=1$, the correlation constraint gives $\alpha_h=\cos(\theta_{dual})$. Geometrically,  $\theta_{0b}=\theta_{0h}+\theta_{hb}$, where $\theta_{0h}$ is the angle between $\boldsymbol{\gamma}_0$ and $\boldsymbol{\gamma}_h$. Therefore, 
\[
\theta_{primal}=\theta_{dual}+\theta_{0h}=\arccos(\alpha_h)+\theta_{0h}~,~ \alpha_0=\cos(\theta_{primal})=\cos(\arccos(\alpha_h)+\theta_{0h}).
\]
Thus $\alpha_0$ (primal) and $\alpha_h$ (dual) are linked via this identity. The dual solutions corresponding to $\pm\theta_{dual}$ yield objective values
\[
\boldsymbol{\gamma}_0'\mathbf{b} = \cos(\pm\theta_{dual}+\theta_{0h})\|\boldsymbol{\gamma}_0\|
\]
in \eqref{dp_dual}. 
Selecting the minimizing branch in \eqref{dp_dual} yields $+\theta_{dual}$, matching the original DFP solution, with $\theta_{primal}=\arccos(\alpha_h)+\theta_{0h}$. Hence, when $\alpha_0:=\cos(\arccos(\alpha_h)+\theta_{0h})$ (equivalently $\alpha_h:=\cos(\arccos(\alpha_0)-\theta_{0h})$), the primal and dual formulations coincide, giving a dual interpretation of DFP: it is the `fastest' predictor—i.e., the one with maximal positive phase excess $\theta_{dual}$—subject to a prescribed tracking accuracy
$\alpha_h$, measured by the correlation $\boldsymbol{\gamma}_h'\mathbf{b}/\|\boldsymbol{\gamma}_h\|$  with the MSE predictor $\boldsymbol{\gamma}_h$.\\

The equivalence of primal and dual formulations demonstrates that the DFP criterion creates an efficient frontier between accuracy (target correlation/MSE) and timeliness (phase excess/lead). Unlike the MSE predictor, which occupies only one spot on this curve, the DFP criterion spans the full frontier. The PCS approach follows a similar rationale, though it defines timeliness via a different metric (CCF shift) instead of phase excess.

\section{Connecting Hyperparemeters and Phase-Excess with Lead}\label{time_shift}

For the PCS predictor, the concept of a lead is directly associated with a peak shift in its cross-correlation function (CCF). Specifically, a rightward displacement of the CCF peak constitutes a quantifiable lead over the MSE predictor (De Jong and Nijman, 1997). Conversely, the phase excess $\theta_{hb}>0$ exhibited by the DFP predictor does not lend itself to an immediate interpretation as a temporal lead. To establish a functional relationship between $\theta_{hb}$ (or the hyperparameter $\alpha_0$) and the relative lead of the DFP predictor over the MSE benchmark, we must first define the concept of a lead. Notably, this definition is generalizable and can be applied equally to the PCS predictor.

\subsection{Time-Shift at Frequency Zero}

Define the (complex) transfer function of the length-$L$ filter $\boldsymbol{\gamma}=(\gamma_{0},...,\gamma_{L-1})'$ by
\[
\Gamma(\omega)=\sum_{k=0}^{L-1}\gamma_{k}\exp(-ik\omega),~ \omega\in [0,\pi]\footnote{Negative frequencies $\omega\in [-\pi,0[$ can be ignored due to symmetry.}.
\]
Write its polar form as
\[
\Gamma(\omega)=A(\omega)\exp(i\Phi(\omega)),
\]
where $A(\omega)=|\Gamma(\omega)|\geq 0$  is the amplitude response and $\Phi(\omega)=\textrm{arg}(\Gamma(\omega))$ is the phase response (mod$(2\pi)$).  For $\omega\neq 0$, define the associated time-shift 
\begin{eqnarray}\label{ts_def}
\tau(\omega):=\Phi(\omega)/\omega.
\end{eqnarray}
Let $x_t=\exp(it\omega)$. The filtered output is  
\[
y_t=\sum_{k=0}^{L-1}\gamma_kx_{t-k}=\sum_{k=0}^{L-1}\gamma_k\exp(i(t-k)\omega)=\Gamma(\omega)x_t=A(\omega)\exp(i(t+\tau(\omega))\omega)=A(\omega)x_{t+\tau(\omega)}.
\]
Hence, when a sinusoid $x_t=\sin(t\omega)$  passes through the filter,  the output is scaled by $A(\omega)$ and time-shifted by $\tau(\omega)$:  $\tau(\omega)>0$ indicates an advance (lead,left shift) and $\tau(\omega)<0$  indicates a retardation (lag,right shift). If $A(\omega)=0$, the sinusoid at frequency $\omega$ is completely suppressed, so the notion of a time-shift at that $\omega$ is not meaningful. \\
To extend the definition to $\omega=0$, differentiate $\Gamma(\omega)$ with respect to $\omega$:
\[
\dot{\Gamma}(\omega)=\dot{A}(\omega)\exp(i\Phi (\omega))+iA (\omega)\exp(i\Phi (\omega))\dot{\Phi} (\omega),
\]
where the dot denotes the first derivative with respect to $\omega$. Enforcing $\Gamma (0)>0$ rules out signal suppression or phase inversion, which yields $A(0)=\Gamma(0)>0$ and $\Phi(0)=0$. Since $A$ is even, $\dot{ A}(0)=0$, hence
\[
\dot{\Gamma}(0)=i \Gamma(0)\dot{\Phi}(0), \textrm{ ~so~} \dot{\Phi}(0)=-i\frac{\dot{\Gamma}(0)}{\Gamma(0)}.
\]
Consequently, the time-shift at $\omega=0$ is obtained from  
\begin{eqnarray}\label{ts_0}
\tau(0):=\lim_{\omega\to 0}\tau(\omega)=\lim_{\omega\to 0}\frac{\Phi(\omega)}{\omega}=\lim_{\omega\to 0}\frac{\Phi(\omega)-0}{\omega-0}=\dot{\Phi}(0)=-i\frac{\dot{\Gamma}(0)}{\Gamma(0)}=-\frac{\sum_{k=1}^{L-1}k\gamma_{k}}{\sum_{k=0}^{L-1}\gamma_{k}}.
\end{eqnarray}
If the filter is applied to a zero-frequency linear trend $x_t=t$, then
\[
y_t=\sum_{k=0}^{L-1}\gamma_{k}x_{t-k}=\sum_{k=0}^{L-1}\gamma_{k}(t-k)=t\sum_{k=0}^{L-1}\gamma_{k}-\sum_{k=0}^{L-1}k\gamma_{k}=(t+\tau(0))\sum_{k=0}^{L-1}\gamma_{k}=A(0)x_{t+\tau(0)},
\]
which confirms Equation \eqref{ts_0}. Since $\tau(\omega)$ varies with $\omega$, we fix a reference frequency $\omega_0$, taking $\omega_0=0$ to emphasize low-frequency (trend) components relevant for forecasting/nowcasting (any $\omega_0$ is possible via \eqref{ts_def}). We assume a `standard' scenario in which the MSE predictor leads the nowcast, $\tau_h(0) > \tau_0(0)$ (the non-standard case $\tau_h(0) \leq \tau_0(0)$ is addressed in Appendix \eqref{l_ns_s}). If, at this zero frequency, the DFP predictor also leads the MSE predictor, then $\tau_b(0) > \tau_h(0) > \tau_0(0)$; we later show how to tune $\lambda$ within $\mathbf{b}(\lambda)$ to hit a desired $\tau_b(0)$. This ordering aligns the frequency-domain triangle in the complex plane (with vertices $\mathbf{0}$, $\Gamma_h(\omega)$, $\Gamma_b(\omega)$) with the time-domain DFP triangle in Fig. \ref{dfp_geometry_2}, simplifying the exposition; see details below.  \\

Let $\Gamma_0(\omega), \Gamma_{h}(\omega),\Gamma_{b}(\omega)$ denote the transfer functions of the nowcast and the MSE and DFP predictors, with corresponding amplitude and phase functions $A_0,A_h,A_b$ and $\Phi_0,\Phi_h,\Phi_b$. Throughout this section, we assume that $\Gamma_{0}(0), \Gamma_{h}(0)$ and $\Gamma_{b}(0)$ are strictly positive at $\omega_0=0$, thereby ruling out signal suppression or `trend reversal'. In cases where this does not hold, an alternative reference frequency $\omega_0 > 0$ may be chosen, though we do not discuss that case here.



\subsection{Linking Hyperparameters to the Time-Shift}\label{linking}

We analyze the DFP-MSE predictor in \eqref{mse_dfp}, given by $\mathbf{b}=\boldsymbol{\gamma}_h+\lambda\boldsymbol{\gamma}_0$, and begin by linking $\lambda$ to the phase gap $\Phi_b(\omega)-\Phi_h(\omega)$ between the DFP and  MSE predictors. The same reasoning  applies to the PCS predictor and is therefore omitted. Unless stated otherwise, we assume the following baseline assumptions: $\boldsymbol{\gamma}_0\not\propto\boldsymbol{\gamma}_h$ (rank two),  $\theta_{0b}>\theta_{0h}>0$ (phase excess), $\Gamma_0(0)>0$,  $\Gamma_h(0)>0$,  $\Gamma_b(0)>0$ (no trend reversal) and $\tau_h(0)>\tau_0(0)$ (standard scenario).  

\begin{Proposition}\label{alpha_0_beta}
Let the baseline assumptions hold and define $\beta(\omega):=\Phi_b(\omega)-\Phi_h(\omega)$. Then, for $\omega$ in a neighborhood of zero,
\begin{eqnarray}\label{alpha0_shift}
\beta(\omega)=\arctan\left(\frac{\sin\left(\gamma(\omega)\right)}{a(\omega)/b(\omega)-\cos\left(\gamma(\omega)\right)}\right)\approx\frac{\gamma(\omega)}{a(0)/b(0)-1},
\end{eqnarray}
where $\gamma(\omega):=\Phi_h(\omega)-\Phi_0(\omega)$, $b=|\lambda|A_0(\omega)$, and $a=A_h(\omega)$. 
\end{Proposition}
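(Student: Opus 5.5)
The plan is to work entirely in the complex plane at a fixed frequency $\omega$ near zero. By linearity of the transfer function, $\mathbf{b}=\boldsymbol{\gamma}_h+\lambda\boldsymbol{\gamma}_0$ gives
\[
\Gamma_b(\omega)=\Gamma_h(\omega)+\lambda\Gamma_0(\omega)=A_h(\omega)e^{i\Phi_h(\omega)}+\lambda A_0(\omega)e^{i\Phi_0(\omega)} .
\]
I will multiply through by $e^{-i\Phi_h(\omega)}$, which rotates the frequency-domain triangle with vertices $0,\Gamma_h(\omega),\Gamma_b(\omega)$ so that $\Gamma_h$ lies on the positive real axis:
\[
\Gamma_b(\omega)e^{-i\Phi_h(\omega)}=a(\omega)+\lambda A_0(\omega)e^{-i\gamma(\omega)}.
\]
With $\lambda<0$ (the phase-excess case of the baseline assumptions, cf. the discussion after Fig.~\ref{dfp_geometry_2}), I write $\lambda=-|\lambda|$. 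Then the rotated DFP transfer function is $a-b\cos\gamma+i\,b\sin\gamma$, where $b=|\lambda|A_0$.

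The phase gap is the argument of this complex number. So $\beta(\omega)=\arctan\bigl(b\sin\gamma/(a-b\cos\gamma)\bigr)$, and dividing numerator and denominator by $b$ gives the stated exact form. This is the law-of-sines/law-of-cosines reading of the DFP triangle of Fig.~\ref{dfp_geometry_2}, now transported to the frequency domain.

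The step I expect to require care is justifying the principal branch of $\arctan$, i.e.\ showing that the real part $a-b\cos\gamma$ is positive for $\omega$ near zero. At $\omega=0$ all phases vanish, so $\gamma(0)=0$ and the real part equals $\Gamma_h(0)+\lambda\Gamma_0(0)=\Gamma_b(0)>0$ by the no-trend-reversal assumption. By continuity of the finite-order transfer functions, $a(\omega)-b(\omega)\cos\gamma(\omega)>0$ on a neighbourhood of zero. This also gives $a(0)/b(0)>1$, so the denominator in the approximation is positive. It further fixes the sign: since $\gamma(\omega)=\omega(\tau_h-\tau_0)+o(\omega)$ with $\tau_h(0)>\tau_0(0)$ in the standard scenario, $\beta$ has the sign of $\omega$, which is the phase lead. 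If $\lambda=0$, then $b=0$ and the statement is degenerate with $\beta\equiv0$, so I will exclude it.

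For the approximation I will take a first-order expansion at $\omega=0$. Since $\gamma(\omega)\to0$, I use $\sin\gamma\approx\gamma$, $\cos\gamma\approx1$ and $\arctan x\approx x$, and continuity of $a(\omega),b(\omega)$ lets me replace $a(\omega)/b(\omega)$ by $a(0)/b(0)$. This gives $\beta(\omega)\approx\gamma(\omega)/(a(0)/b(0)-1)$, with error $o(\omega)$ because $\gamma(\omega)=O(\omega)$. As a side remark, dividing by $\omega$ and letting $\omega\to0$ turns this into an exact relation between time shifts at frequency zero, $\tau_b(0)-\tau_h(0)=(\tau_h(0)-\tau_0(0))/(a(0)/b(0)-1)$, which is what later links $\lambda$ to the lead.
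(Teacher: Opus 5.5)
Your proof is correct, and it reaches the identity by a more direct route than the paper.

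\textbf{The paper's route.} The paper works with the frequency-domain triangle with vertices $0$, $\Gamma_h(\omega)$, $\Gamma_b(\omega)$. It first establishes the ordering $\Phi_b(\omega)\ge\Phi_h(\omega)\ge\Phi_0(\omega)$ near zero, using $\tau_h(0)>\tau_0(0)$ for the second inequality and $\lambda<0$ for the first. It then identifies $\gamma(\omega)$ as the angle included between sides $a$ and $b$ and applies the side--angle--side formula for the opposite angle $\beta(\omega)$. Finally, it treats the degenerate cases $\gamma(\omega)=0$ and $\beta(\omega)=0$ separately. There the triangle collapses to a segment, and a continuity/no-phase-reversal argument is needed to rule out $\beta=\pi$.

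\textbf{Your route.} Your rotation $\Gamma_b e^{-i\Phi_h}=a-b\cos\gamma+i\,b\sin\gamma$ is exactly the computation behind that formula. You certify the principal branch through positivity of the real part, which equals $\Gamma_b(0)>0$ at $\omega=0$ and persists by continuity. This lets you skip both the phase-ordering step and the degenerate-case analysis, because $\arg(x+iy)=\arctan(y/x)$ holds for all $x>0$, including $y=0$. The only thing left implicit is that $\Phi_b-\Phi_h$, taken as the continuous branch vanishing at $\omega=0$, coincides with this principal value. That is immediate, since both are continuous determinations of the same argument that agree at $\omega=0$.

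\textbf{What each approach buys.} Your version shows that the exact identity uses only $\lambda<0$ and positivity of the three transfer functions at zero. The standard-scenario assumption $\tau_h(0)>\tau_0(0)$ merely fixes the sign of $\beta$, consistent with the paper's later remark that it is not essential. You also get the sharper $a(0)/b(0)>1$ instead of just $a(0)/b(0)\neq 1$. The paper's triangle picture buys the explicit correspondence with the time-domain DFP triangle of Fig.~\ref{dfp_geometry_2}. Through that correspondence $\beta(\omega)$ is read as phase excess and later translated into lead.
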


\textbf{Proof}: From $\mathbf{b}=\boldsymbol{\gamma}_h+\lambda\boldsymbol{\gamma}_0$ it follows that 
\[
\Gamma_b(\omega)=\Gamma_h(\omega)+\lambda\Gamma_0(\omega). 
\]
Moreover, $\theta_{0b}>\theta_{0h}>0$ implies $\lambda< 0$ (cf. Fig.\ref{dfp_geometry_2}). Hence, the three transfer functions form a frequency-dependent DFP-triangle in the complex plane with side lengths $a(\omega)=A_h(\omega)$, $b(\omega)=|\lambda|A_0(\omega)$, and $c(\omega)=A_b(\omega)$\footnote{\label{foot_ao}The fixed time-domain vectors $\mathbf{b}$,$\boldsymbol{\gamma}_0$ and $\boldsymbol{\gamma}_h$ in the figure are replaced by their  transfer-function counterparts which now depend on $\omega$. Moreover, the abscissa and ordinate correspond to the real and imaginary axis, respectively.}. For $|\omega|$ sufficiently small (and $\lambda<0$), $\Gamma_h(\omega)$ lies between $\Gamma_0(\omega)$ and $\Gamma_b(\omega)$, i.e.,
\begin{eqnarray}\label{phase_ord}
\Phi_b(\omega)\geq\Phi_h(\omega)\geq\Phi_0(\omega).
\end{eqnarray}
The second inequality follows from 
\[
\Phi_h(\omega)\approx\dot{\Phi}_h(0)\omega=\tau_h(0)\omega>\tau_0(0)\omega=\dot{\Phi}_0(h)\omega\approx\Phi_0(\omega)
\]
where inequality and Taylor expansions are justified by the baseline assumptions $\Gamma_k(0)>0$ (so $\Phi_k(0)=0$) and $\tau_h(0)>\tau_0(0)$. The first inequality in \eqref{phase_ord}, $\Phi_b(\omega)\geq\Phi_h(\omega)$, is implied by $\lambda<0$ (reflecting phase excess $\theta_{0b}>\theta_{0h}$).  When $\lambda<0$, the ordering in \eqref{phase_ord} continues to hold as long as $\Phi_h(\omega)\leq\Phi_0(\omega)+\pi$.\footnote{Geometrically, for  $\lambda\in]-\infty,0]$, the argument $\Phi_b$ of $\Gamma_b=\Gamma_h+\lambda\Gamma_0$ lies in $[\Phi_h,\Phi_0+\pi]$, with the upper bound attained as $\lambda\to-\infty$. If $\Phi_h>\Phi_0+\pi$, then $\Phi_b<\Phi_h$ which affects the ordering in \eqref{phase_ord}.} This requirement holds for $\omega$ sufficiently close to zero: the phases are well-defined at $\omega=0$ (all transfer functions are non-vanishing), continuous in $\omega$, and obey $\Phi(0)=0$ because the transfer functions are strictly positive at zero frequency (no trend reversal). \\
Assume first the \emph{strict} ordering $\Phi_b(\omega)>\Phi_h(\omega)>\Phi_0(\omega)$ in \eqref{phase_ord}  which yields a well-defined, nondegenerate triangle. The  angles opposite the sides $a,b,c$ are $\alpha(\omega)$, $\beta(\omega)=\Phi_b(\omega)-\Phi_h(\omega)$ (i.e., $\theta_{0b}-\theta_{0h}$ in Figure \ref{dfp_geometry_2}), and $\gamma(\omega)=\Phi_h(\omega)-\Phi_0(\omega)$ (i.e., $\theta_{0h}$). 
Given the side lengths $a(\omega),b(\omega)$ and their included angle $\gamma(\omega)$, we have 
\begin{equation}\label{atan_f}
\beta(\omega)=\arctan\left(\frac{\sin(\gamma(\omega))b(\omega)}{a(\omega)-\cos(\gamma(\omega))b(\omega)}\right)=\arctan\left(\frac{\sin(\gamma(\omega))}{a(\omega)/b(\omega)-\cos(\gamma(\omega))}\right)\approx \frac{\gamma(\omega)}{a(0)/b(0)-1},
\end{equation}
where the last step holds for sufficiently small $|\omega|$, using first-order Taylor expansions of the trigonometric terms. This approximation is well-defined because
\begin{equation}\label{well_def}
a(0)/b(0)-1=\Gamma_h(0)/(|\lambda|\Gamma_0(0))-1\neq 0;
\end{equation}
otherwise $|\lambda|=\Gamma_h(0)/\Gamma_0(0)$ would (with $\lambda<0$) force $\Gamma_b(0)=0$. \\
Next, fix $\omega$ sufficiently close to zero and assume $\gamma(\omega)=\Phi_h(\omega)-\Phi_0(\omega)=0$. In this case the triangle degenerates into a line segment, so $\beta(\omega)\in\{0,\pi\}$ depending on $\lambda$. In particular, if $\lambda<0$ (phase excess) and $|\lambda|<\|\boldsymbol{\gamma}_h\|/\|\boldsymbol{\gamma}_0\|$, then the side lengths satisfy $b(\omega)<a(\omega)$, which implies $\beta_h=0$. If instead $|\lambda|\geq\|\boldsymbol{\gamma}_h\|/\|\boldsymbol{\gamma}_0\|$, then $\beta(\omega)$ is either undefined or equal to $\pi$.\\
Since phase reversal by the DFP predictor is excluded at $\omega=0$ (all transfer functions are strictly positive), the same `no phase reversal' condition, $\arg(\Gamma_b(\omega))-\arg(\Gamma_h(\omega))\neq \pm\pi$, must hold at the fixed $\omega$, due to continuity (provided $\omega$ is sufficiently close to zero). Therefore, the second possibility $\beta(\omega)=\pi$ is ruled-out and  $\beta(\omega)=0$ must hold, which verifies Equation \eqref{atan_f} in the first degenerate case. A similar argument applies when $\beta(\omega)=\Phi_b(\omega)-\Phi_h(\omega)=0$ (second degenerate case: the triangle collapses again to a line segment): for $\lambda<0$ this implies that the side-lengths satisfy $c(\omega)<a(\omega)$, implying $\gamma(\omega)=\Phi_h(\omega)-\Phi_0(\omega)=0$, so the arctan term in \eqref{atan_f} is zero, completing the proof.
\hfill\qed\\

If $L$ is sufficiently large, then $\boldsymbol{\gamma}_h\approx\mathbf{F}^h\boldsymbol{\gamma}_0$, so  $\Phi_h(\omega),A_h(\omega)$ are fully determined by $\boldsymbol{\gamma}_0$ (up to negligible deviations). Consequently, $\beta(\omega)$ in \eqref{alpha0_shift} depends only on $\lambda$ and $\boldsymbol{\gamma}_0$.\\

\textbf{Remark}: A key technical difficulty is that the frequency-domain triangle with vertices $\mathbf{0}$, $\mathbf{\Gamma}_h$ and $\mathbf{\Gamma}_b$ (in the complex plane) depends on $\omega$ and therefore deforms with $\omega$, unlike the fixed time-domain triangle in Fig.~\ref{dfp_geometry_2}. We thus restrict $\omega>0$ in a sufficiently small neighborhood of zero so the deformation is limited and the ordering in \eqref{phase_ord} remains unchanged. This ordering is anchored by the baseline assumptions, which fix the limiting configuration as $\omega\to 0$ approaches the reference frequency; at $\omega=0$  the frequency-domain triangle degenerates to a line segment.\\

The proposition's (final) baseline assumption $\tau_h(0)>\tau_0(0)$ is not essential for deriving $\Phi_b(\omega)-\Phi_h(\omega)$, but it streamlines the analysis and the derivation. Appendix \eqref{l_ns_s} treats the nonstandard case $\tau_h(0)<\tau_0(0)$. 
Obtaining larger leads would then typically require either flipping signs (turning a maximization into a minimization) or swapping the nowcast and MSE predictor in the objective and constraint, which runs counter to common sense.\\

Next, we connect $\lambda$ to the relative shift $\tau:=\tau_b(0)-\tau_h(0)$ of the DFP-MSE predictor  with respect to the MSE benchmark at frequency zero  (the same approach extends to the unit-length DFP, the PCS predictors, and nonzero frequencies). A relative lead at frequency zero is produced by enforcing $\tau>0$; we now show how to choose $\lambda=\lambda(\tau)$ as a function of $\tau$.

\begin{Theorem}\label{alpha0_rho0_cor}  
Let the baseline assumptions hold and let $\tau_b(0)-\tau_h(0)=\tau>0$ denote a target lead of the DFP relative to the MSE predictor at frequency zero, with $\tau\neq \tau_0(0)-\tau_h(0)$. Then, $\lambda$ is determined by $\tau$ as
\begin{eqnarray}\label{lambda0_tau}
\lambda=-\frac{\tau \Gamma_h(0)}{(\tau+\tau_h(0)-\tau_0(0))\Gamma_0(0)}.
\end{eqnarray}
\end{Theorem}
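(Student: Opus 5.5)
The plan is to bypass the triangle geometry of Proposition \ref{alpha_0_beta} and work directly with the closed-form time-shift at frequency zero in \eqref{ts_0}. That formula expresses $\tau(0)$ as a ratio of two linear functionals of the filter weights. Since the DFP-MSE predictor \eqref{mse_dfp} is linear in $\lambda$, $\mathbf{b}=\boldsymbol{\gamma}_h+\lambda\boldsymbol{\gamma}_0$, both functionals are affine in $\lambda$, and prescribing the lead $\tau$ gives a linear equation for $\lambda$.

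First step: I apply \eqref{ts_0} to $\boldsymbol{\gamma}_0$ and to $\boldsymbol{\gamma}_h$. This is legitimate because the baseline assumptions give $\Gamma_0(0)>0$ and $\Gamma_h(0)>0$. It yields the identities
\[
\sum_{k}k\gamma_{h+k}=-\tau_h(0)\Gamma_h(0),\qquad \sum_k k\gamma_k=-\tau_0(0)\Gamma_0(0).
\]
Second step: by linearity,
\[
\Gamma_b(0)=\Gamma_h(0)+\lambda\Gamma_0(0),\qquad \sum_k k b_k=-\tau_h(0)\Gamma_h(0)-\lambda\tau_0(0)\Gamma_0(0).
\]
Since $\Gamma_b(0)>0$ by assumption, \eqref{ts_0} applies to $\mathbf{b}$ as well, and
\[
\tau_b(0)=\frac{\tau_h(0)\Gamma_h(0)+\lambda\tau_0(0)\Gamma_0(0)}{\Gamma_h(0)+\lambda\Gamma_0(0)}.
\]

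Third step: I impose $\tau_b(0)-\tau_h(0)=\tau$ and multiply through by the positive denominator $\Gamma_b(0)$. This gives $\lambda\Gamma_0(0)\bigl(\tau_0(0)-\tau_h(0)\bigr)=\tau\bigl(\Gamma_h(0)+\lambda\Gamma_0(0)\bigr)$, which is linear in $\lambda$. Its coefficient on $\lambda$ is $\Gamma_0(0)\bigl(\tau_0(0)-\tau_h(0)-\tau\bigr)$. This is nonzero exactly under the hypothesis $\tau\neq\tau_0(0)-\tau_h(0)$, together with $\Gamma_0(0)>0$. Solving the equation gives \eqref{lambda0_tau}.

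The only delicate point is well-posedness rather than the algebra. I would check that the resulting $\lambda$ is consistent with the baseline assumptions, namely $\lambda<0$ (phase excess) and $\Gamma_b(0)>0$. In the standard scenario $\tau_h(0)>\tau_0(0)$, and $\tau>0$, so $\tau+\tau_h(0)-\tau_0(0)>0$. Hence $\lambda<0$ automatically, matching the sign derived in the proof of Proposition \ref{alpha_0_beta}. Substituting back gives
\[
\Gamma_b(0)=\Gamma_h(0)\,\frac{\tau_h(0)-\tau_0(0)}{\tau+\tau_h(0)-\tau_0(0)}>0,
\]
so no trend reversal occurs and the derivation is self-consistent. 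I would also remark that $\lambda$ is a bijective function of $\tau$ on the relevant range, so every target lead $\tau>0$ is attainable.
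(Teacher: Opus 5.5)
Your proof is correct, but it takes a different route from the paper's.

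The paper works through the frequency-domain DFP triangle of Proposition~\ref{alpha_0_beta}. It writes $\tau=\lim_{\omega\to 0}\beta(\omega)/\omega$ and inserts the arctan expression for $\beta(\omega)$ with its first-order Taylor expansion. This gives $\tau=(\tau_h(0)-\tau_0(0))/(\Gamma_h(0)/(|\lambda|\Gamma_0(0))-1)$, and the paper then resolves $|\lambda|$ using the sign $\lambda<0$, inferred beforehand from the phase-excess assumption.

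You bypass the geometry. Because numerator and denominator in \eqref{ts_0} are linear in the weights, $\tau_b(0)$ is an affine combination of $\tau_h(0)$ and $\tau_0(0)$ with weights proportional to $\Gamma_h(0)$ and $\lambda\Gamma_0(0)$. That is, $\tau_b(0)-\tau_h(0)=\lambda\Gamma_0(0)(\tau_0(0)-\tau_h(0))/\Gamma_b(0)$, which is exactly the paper's limiting relation, but obtained without approximation and for either sign of $\lambda$.

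Your route has several advantages:
\begin{itemize}
\item It is shorter and needs no limit or Taylor step.
\item $\lambda<0$ and $\Gamma_b(0)>0$ come out as conclusions rather than inputs.
\item Existence and uniqueness follow together, as a monotone bijection between $\tau\in(0,\infty)$ and $\lambda\in(-\Gamma_h(0)/\Gamma_0(0),0)$.
\item The degenerate case $\tau_h(0)=\tau_0(0)$, which the paper handles with a `super-flat' triangle argument, is immediate: then $\tau_b(0)=\tau_h(0)$ whenever $\Gamma_b(0)\neq 0$.
\item It shows that the hypothesis $\tau\neq\tau_0(0)-\tau_h(0)$ is automatic for $\tau>0$ in the standard scenario.
\item The same computation covers PCS and unit-length DFP at frequency zero, since any predictor in a two-dimensional span has a zero-frequency time shift that is a weighted combination of the two constituent shifts.
\end{itemize}

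What the paper's route offers in return is an explicit link between the lead and the phase gap $\beta(\omega)$. This ties the lead to the time-domain phase excess that underlies the accuracy--timeliness frontier. Moreover, the triangle description of $\beta(\omega)$ remains meaningful away from $\omega=0$. There $\Phi(\omega)/\omega$ is no longer a ratio of linear functionals of the weights, so your linearization would not carry over.
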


\textbf{Proof}: Observe that 
\begin{eqnarray*}
\tau&=&\tau_{b}(0)-\tau_{h}(0)=\lim_{\omega\to 0}\left(\tau_{b}(\omega)-\tau_{h}(\omega)\right)=\lim_{\omega\to 0}\left(\Phi_{b}(\omega)/\omega-\Phi_{h}(\omega)/\omega\right)\\
&=&\lim_{\omega\to 0}\frac{\beta(\omega)}{\omega}=\lim_{\omega\to 0}\frac{\gamma(\omega)/\omega}{A_h(0)/(|\lambda|A_0(0))-1}=\frac{\tau_h(0)-\tau_0(0)}{\Gamma_h(0)/(|\lambda|\Gamma_0(0))-1},
\end{eqnarray*}
where $\beta(\omega),\gamma(\omega)$ are as in Proposition \ref{alpha_0_beta}.  In the present setting, with $\omega\to 0$, the Taylor approximation on the right hand side of \eqref{alpha0_shift} is exact. The quotient is well-defined when $\Gamma_b(0)>0$; see \eqref{well_def}. Solving for $\lambda$ gives
\begin{eqnarray*}
\lambda=-\frac{\tau \Gamma_h(0)}{(\tau+\tau_h(0)-\tau_0(0))\Gamma_0(0)},
\end{eqnarray*}
This is well defined because, by assumption,  $\tau\neq \tau_0(0)-\tau_h(0)$. Moreover, under the baseline (standard-case) assumptions, $\lambda$ must be negative to induce a lead.\\
An interesting situation occurs when $\tau_0(0)-\tau_h(0)= 0$ (which would contradict the final baseline assumption).  
Then the formula for $\lambda$ simplifies to 
\[
\lambda= -\Gamma_h(0)/\Gamma_0(0),
\]
so it no longer depends on $\tau$. However, 
\begin{equation}\label{as_pred}
\Gamma_b(0)=\Gamma_h(0)+\lambda\Gamma_0(0)=0,
\end{equation}
which contradicts the assumption $\Gamma_b(0)>0$. We may therefore assume $\tau_0(0)-\tau_h(0)\neq 0$, completing the proof. \\
It is nevertheless instructive to  interpret this degenerate case geometrically. Since the derivative of phase differences 
\[
\dot{\Phi}_h(0)-\dot{\Phi}_0(0)=\tau_h(0)-\tau_0(0)= 0
\]
vanishes at $\omega=0$, the constant and linear terms in the Taylor expansion of $\Phi_h(\omega)-\Phi_0(\omega)$ cancel (the constant term is zero because the transfer functions are strictly positive), and  we obtain
\[
\Phi_h(\omega)-\Phi_0(\omega)=\textrm{O}(\omega^2).
\]
In the DFP-triangle with vertices $\mathbf{0},\Gamma_b(\omega),\Gamma_h(\omega)$ (cf. Fig.\ref{dfp_geometry_2} and footnote \ref{foot_ao}), the angle at $\Gamma_h(\omega)$ is
\[
\gamma(\omega)=\Phi_h(\omega)-\Phi_0(\omega)=O(\omega^2),
\]
i.e., the triangle becomes `super flat' as $\omega\to 0$ (in Fig.~\ref{dfp_geometry_2},  $\gamma(\omega)=\Phi_h(\omega)-\Phi_0(\omega)$ corresponds to $\gamma=\theta_{0h}$). 
If $\Gamma_b(0)> 0$ (as assumed), then $\beta(\omega)=\Phi_b(\omega)-\Phi_h(\omega)$ must be of the same order, i.e., $\beta(\omega)=O(\omega^2)$\footnote{Note that $\beta(\omega)=\pi+O(\omega^2)$ is excluded because $\Gamma_b(0)>0$ (no phase reversal towards $\omega=0$).}. This would imply
\[
\tau=\tau_b(0)-\tau_h(0)=\lim_{\omega\to 0}(\dot{\Phi}_b(\omega)-\dot{\Phi}_h(\omega))=0,
\] 
contradicting $\tau>0$. Therefore, $\Gamma_b(0)=0$, as predicted by \eqref{as_pred}. This analysis shows that one cannot have simultaneously $\Gamma_b(0)>0$ and $\tau>0$ when $\tau_h(0)-\tau_0(0)= 0$ (which is ruled-out by our assumptions).  \hfill\qed\\

\textbf{Remark}:  The restriction $\tau\neq \tau_0(0)-\tau_h(0)$ in the theorem guarantees that the DFP predictor does not offset or eliminate the lead  $\tau_h(0)-\tau_0(0)$ that the MSE benchmark has over the nowcast, which is a natural and intuitively desirable requirement.\\

For $\tau=0$, Eq.\eqref{lambda0_tau} gives $\lambda=0$ since $\tau_h(0)>\tau_0(0)$ (baseline assumption), and thus $\mathbf{b}=\boldsymbol{\gamma}_h$, as expected. As $\tau\to\infty$, $\lambda\to-\Gamma_h(0)/\Gamma_0(0)$, so 
\[
\Gamma_b(0)=\Gamma_h(0)+\lambda\Gamma_0(0)\to 0,
\]
i.e., the DFP predictor asymptotically removes linear trends. Although this determines the limit of  $\lambda=\lambda(\tau)$, as a function of $\tau$, one may choose $\lambda<-\Gamma_h(0)/\Gamma_0(0)$ if trend reversal is permitted; then the left-shift typically keeps increasing (see the next section), but $\tau_b(0)$ is no longer well-defined at zero frequency (Equation \eqref{ts_0} instead gives the time-shift of the sign-reversed filter).\\

We can also express $\alpha_0$ in the DFP constraint, as well as the phase excess $\theta_{hb}$ between the DFP predictor and the MSE predictor, in terms of  $\tau=\tau_b(0)-\tau_h(0)$.

\begin{Corollary}\label{sel_alpha}
Under the assumptions of the theorem
\begin{eqnarray}\label{alpha0_tau}
\alpha_0&=&\boldsymbol{\gamma}_0'\left(\boldsymbol{\gamma}_h+\lambda(\tau)\boldsymbol{\gamma}_0\right)\\
\theta_{hb}&=&\arccos\left(\frac{\boldsymbol{\gamma}_h'(\boldsymbol{\gamma}_h+\lambda(\tau)\boldsymbol{\gamma}_0}{\|\boldsymbol{\gamma}_h\|\|\boldsymbol{\gamma}_h+\lambda(\tau)\boldsymbol{\gamma}_0\|}\right)\approx 
\arccos\left(\frac{(\mathbf{F}^h\boldsymbol{\gamma}_0)'\left(\mathbf{F}^h+\lambda(\tau)\mathbf{I}\right)\boldsymbol{\gamma}_0}{\|\mathbf{F}^h\boldsymbol{\gamma}_0\|\|(\mathbf{F}^h+\lambda(\tau)\mathbf{I})\boldsymbol{\gamma}_0\|}\right)
,\label{thet_lam}
\end{eqnarray}
where $\lambda(\tau)$ is defined in \eqref{lambda0_tau}.
\end{Corollary}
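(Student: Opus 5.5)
The corollary follows by substituting the explicit $\lambda(\tau)$ from Theorem \ref{alpha0_rho0_cor} into the DFP-MSE solution of Proposition \ref{dfp_mse_prop}, and then reading off $\alpha_0$ and $\theta_{hb}$ from the constraint and from the definition of an angle. Throughout, the baseline assumptions and $\tau\neq\tau_0(0)-\tau_h(0)$ are in force. Under these assumptions, Theorem \ref{alpha0_rho0_cor} gives a well-defined $\lambda(\tau)<0$ via \eqref{lambda0_tau}. By Proposition \ref{dfp_mse_prop} the DFP-MSE predictor has the form $\mathbf{b}=\boldsymbol{\gamma}_h+\lambda\boldsymbol{\gamma}_0$. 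The map $\alpha_0\mapsto\lambda$ in \eqref{lambda_mse} is affine and bijective, because $\boldsymbol{\gamma}_0'\boldsymbol{\gamma}_0>0$. Hence prescribing $\tau$ is equivalent to prescribing $\lambda(\tau)$, which in turn is equivalent to prescribing a unique $\alpha_0$.

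\textbf{Step 1: $\alpha_0$.} Rearranging \eqref{lambda_mse} gives $\alpha_0=\boldsymbol{\gamma}_h'\boldsymbol{\gamma}_0+\lambda\boldsymbol{\gamma}_0'\boldsymbol{\gamma}_0=\boldsymbol{\gamma}_0'(\boldsymbol{\gamma}_h+\lambda\boldsymbol{\gamma}_0)$. This is just the decoupling constraint of \eqref{dp2} evaluated at the optimizer. Inserting $\lambda=\lambda(\tau)$ yields \eqref{alpha0_tau}.

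\textbf{Step 2: $\theta_{hb}$.} By definition, $\theta_{hb}$ is the angle between $\boldsymbol{\gamma}_h$ and $\mathbf{b}$, so $\cos\theta_{hb}=\boldsymbol{\gamma}_h'\mathbf{b}/(\|\boldsymbol{\gamma}_h\|\|\mathbf{b}\|)$. The denominator is nonzero: $\boldsymbol{\gamma}_h\neq\mathbf{0}$ under rank two, and $\mathbf{b}\neq\mathbf{0}$ because $\Gamma_b(0)>0$. Substituting $\mathbf{b}=\boldsymbol{\gamma}_h+\lambda(\tau)\boldsymbol{\gamma}_0$ gives the exact expression in \eqref{thet_lam}. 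Since $\arccos$ takes values in $[0,\pi]$, this returns the unsigned angle. To confirm it is the phase excess rather than its mirror image, invoke $\lambda<0$, which by the discussion around Fig.~\ref{dfp_geometry_2} places $\mathbf{b}$ on the side of $\boldsymbol{\gamma}_h$ away from $\boldsymbol{\gamma}_0$.

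\textbf{Step 3: the approximation.} Replace $\boldsymbol{\gamma}_h$ by $\mathbf{F}^h\boldsymbol{\gamma}_0$. The two vectors differ only in their last $h$ entries, $\gamma_{L},\dots,\gamma_{L+h-1}$, which tend to zero as $L\to\infty$ by square-summability. By continuity of inner products and norms, and because the norms stay bounded away from zero, the arguments of $\arccos$ differ by $o(1)$. This gives the second expression in \eqref{thet_lam}, and the same replacement is valid inside $\lambda(\tau)$, whose quantities $\Gamma_h(0)$ and $\tau_h(0)$ also become functions of $\boldsymbol{\gamma}_0$ alone.

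The only delicate point is Step 3. One needs the argument of $\arccos$ to stay away from $\pm1$, or else accept pointwise continuity of $\arccos$ on $[-1,1]$, which suffices for an approximation statement. One also needs $\lambda(\tau)$ itself to remain well-defined under the replacement; this follows from continuity of $\Gamma_h(0)$ and $\tau_h(0)$ in the filter weights together with the strict baseline inequalities.
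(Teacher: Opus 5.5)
Your proposal is correct and follows the paper's own one-line argument. You rearrange \eqref{lambda_mse} to get $\alpha_0=\boldsymbol{\gamma}_0'(\boldsymbol{\gamma}_h+\lambda\boldsymbol{\gamma}_0)$, apply the cosine identity to $\mathbf{b}=\boldsymbol{\gamma}_h+\lambda(\tau)\boldsymbol{\gamma}_0$, and use $\boldsymbol{\gamma}_h\approx\mathbf{F}^h\boldsymbol{\gamma}_0$ for large $L$. Your extra checks (nonzero denominator, and which side of $\boldsymbol{\gamma}_h$ the vector $\mathbf{b}$ lies on) are sound additions the paper omits. The replacement of $\boldsymbol{\gamma}_h$ inside $\lambda(\tau)$ is not needed, because the statement keeps $\lambda(\tau)$ exactly as defined in \eqref{lambda0_tau}.
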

The result follows from \eqref{lambda_mse} and the identity 
\[
\cos(\theta_{hb})=\frac{\boldsymbol{\gamma}_h'\mathbf{b}}{\|\boldsymbol{\gamma}_h\|\|\mathbf{b}\|}=\frac{\boldsymbol{\gamma}_h'(\boldsymbol{\gamma}_h+\lambda(\tau)\boldsymbol{\gamma}_0)}{\|\boldsymbol{\gamma}_h\|\|\boldsymbol{\gamma}_h+\lambda(\tau)\boldsymbol{\gamma}_0\|},
\]
using $\boldsymbol{\gamma}_h\approx \mathbf{F}^h\boldsymbol{\gamma}_0$ for $L$ sufficiently large.
\\

At $\tau=0$, $\lambda(\tau)=0$ and $\cos(\theta_{0b})=\frac{\boldsymbol{\gamma}_0'\boldsymbol{\gamma}_h}{\|\boldsymbol{\gamma}_0\|\|\boldsymbol{\gamma}_h\|}$, so $\theta_{0b}=\theta_{0h}$ and $\theta_{hb}=0$ (no phase excess because $\mathbf{b}=\boldsymbol{\gamma}_h$). Since $\lambda(\tau)$ is strictly decreasing for $\tau\geq 0$, $\theta_{hb}$ in \eqref{thet_lam} is strictly increasing.  Thus the dual interpretation in Section \ref{ac_ti_di} carries over  from phase excess $\theta_{hb}$ to lead $\tau$: DFP  maximizes $\tau$ subject to given tracking-accuracy. By primal–dual equivalence, DFP traces the efficient frontier between timeliness $\tau$ and accuracy (max-target correlation or min-MSE). \\

The relationship between $\alpha_0$ and the lead time $\tau$ makes it possible to interpret the hyperparameter and to choose appropriate values using a formal decision rule, as illustrated in the applications below.

\subsection{Limits on the Attainable Phase Excess and Lead under Strict Positivity}\label{limit}

To illustrate, assume $\boldsymbol{\gamma}_0$ and $\boldsymbol{\gamma}_h$ are nearly (but not exactly) aligned. As $\lambda<0$ becomes more negative, $\mathbf{b}=\boldsymbol{\gamma}_h+\lambda\boldsymbol{\gamma}_0$ rotates to point almost opposite $\boldsymbol{\gamma}_h$, so pushing $\lambda$ too far to increase lead can (almost) flip the predictor’s sign. Componentwise, such sign flips can sometimes increase lead (see Section \ref{AR3}), but the limit is extreme, so a criterion is needed to rule it out.\\

A basic requirement for a meaningful DFP predictor is the strict positivity condition (Section \ref{dfp}): the target correlation must be strictly positive, $\boldsymbol{\gamma}_h'\mathbf{b}(\lambda)>0$, equivalently $\theta_{hb}<\pi/2$. Via $\lambda(\tau)$ in Equation \eqref{lambda0_tau}  and Equation \eqref{thet_lam}, this imposes bounds on admissible $\lambda$ (and hence $\tau$). We require
\[
0< \cos(\theta_{hb})=\frac{\boldsymbol{\gamma}_h'(\boldsymbol{\gamma}_h+\lambda\boldsymbol{\gamma}_0)}{\|\boldsymbol{\gamma}_h\|\|\boldsymbol{\gamma}_h+\lambda\boldsymbol{\gamma}_0\|},
\]
which implies $\boldsymbol{\gamma}_h'(\boldsymbol{\gamma}_h+\lambda\boldsymbol{\gamma}_0)>0$. If $\boldsymbol{\gamma}_h'\boldsymbol{\gamma}_0>0$ then
\[
\lambda>-\frac{\boldsymbol{\gamma}_h'\boldsymbol{\gamma}_h}{\boldsymbol{\gamma}_h'\boldsymbol{\gamma}_0}=:\lambda_{\lim},
\]
(with a similar bound $\lambda<\left|\boldsymbol{\gamma}_h'\boldsymbol{\gamma}_h/\boldsymbol{\gamma}_h'\boldsymbol{\gamma}_0\right|$ when $\boldsymbol{\gamma}_h'\boldsymbol{\gamma}_0<0$). If $\lambda_{\lim}\Gamma_0(0)+\Gamma_h(0)>0$, substituting into \eqref{lambda0_tau} yields the upper bound:
\[
\tau\leq-\frac{(\tau_h(0)-\tau0(0))\lambda_{\lim }\Gamma_0(0)}{\lambda_{\lim}\Gamma_0(0)+\Gamma_h(0)}=\tau_{\lim}>0.
\]
If instead $\lambda_{\lim}\Gamma_0(0)+\Gamma_h(0)\leq 0$, the resulting $\mathbf{b}(\lambda_{\lim})$ eliminates or reverses the trend, making $\tau_{\lim}$ ill-defined; in that case, strict positivity imposes no restriction on $\tau$ and any finite $\tau$ is admissible.\\ 

We contend that any predictive lead (`left-shift') achieved by violating positivity is ill-conditioned. While dropping positivity can seem to extend the forecast horizon (e.g., in periodic settings), positivity ensures the predictor’s dynamics stay aligned with those of the process at horizon $h$. Thus $\theta_{hb}<\pi/2$—equivalently the bounds $\lambda_{\lim}$ or $\tau_{\lim}$—
places a hard cap on how far the DFP predictor  $\mathbf{b}(\lambda(\tau))$  can genuinely outperform the MSE benchmark $\boldsymbol{\gamma}_h$ while remaining positively correlated with the target $x_{t+h}$. This coupling to $x_{t+h}$ prevents  pushing the lead towards future points whose dynamics diverge  from the target horizon, avoiding, among others, a  sign-flip.\\

By primal–dual equivalence, for any fixed target correlation, no linear predictor can attain a larger lead than DFP. Therefore, any lead beyond $\tau_{\lim}$ must violate strict positivity, making $\tau_{\lim}$ a universal upper bound on lead among linear predictors—and DFP achieves it. The same applies to PCS, where lead manifests as a rightward shift of the CCF peak.\\

Even stronger bounds on $\lambda$ and $\tau$ can be established by enforcing the stricter positivity rule $\mathbf{b}(\lambda(\tau))'\boldsymbol{\gamma}_0 > 0$, i.e., $\theta_{0b} < \pi/2$, see Equation \eqref{str_pos} (details omitted).

\section{Empirical Examples}\label{examples}

We illustrate several aspects of the proposed designs. First, we apply the unit-length DFP criterion \eqref{dp} to the MA process introduced in Section \ref{onemulti}, treating $\epsilon_t$ as observed, and  compare partial and complete decoupling. Second, we implement the MSE-DFP criterion \eqref{dp2} for AR and ARMA processes, where the innovations $\epsilon_t$ are unobserved and are recovered by inversion. We also construct a predictor targeting a prespecified lead over the MSE benchmark (Theorem \ref{alpha0_rho0_cor}) and evaluate finite-sample performance via simulation. Finally, we apply the leading indicator PCS \eqref{pcs_leading} to business-cycle analysis using quarterly GDP and the Hodrick–Prescott filter (Hodrick and Prescott, 1997).

\subsection{Illustration of the DFP-Criterion in the MA(9)-Case}\label{la}

We apply the unit-length DFP criterion \eqref{dp} to the MA(9) example with forecast horizon $h = 5$, treating $\epsilon_t$ as observed (see Section \ref{onemulti}). Figure \ref{cor} shows that the CCF $\rho(\hat{x}_{5t}^{MSE},x_{t+\delta})$ of the MSE predictor attains a peak of 0.86 at $\delta=0$, indicating a strong tie with the present $x_t$—i.e., a predominantly coincident predictor $\hat{x}_{5t}^{MSE}$. To mitigate this, we consider two DFP designs that attenuate the CCF at $\delta=0$: (i) partial decoupling, implemented by setting the CCF to $0.43$ (i.e., $\alpha_0 = 0.43$), and (ii) complete decoupling, imposing a vanishing CCF via $\alpha_0=0$. Solving the quadratic in \eqref{quad} and selecting the root that maximizes the objective yields the predictors
\begin{eqnarray*}
\mathbf{{b}}^{\textrm{partial}}&=&1.62\boldsymbol{\gamma}_h-0.51\boldsymbol{\gamma}_0\\
\mathbf{{b}}^{\textrm{complete}}&=&1.8\boldsymbol{\gamma}_h-0.79\boldsymbol{\gamma}_0.
\end{eqnarray*}

Figure \ref{cor_cor_LA_mod} compares predictors (top-left), CCF (top-right), and forecasts (bottom). Unlike the MSE design, the DFP predictor can place nonzero weights up to lag $k=L$ because $\mathbf{b}$ depends on $\boldsymbol{\gamma}_0$ (a similar result is reported in Wildi (2024) and (2026a), when prioritizing smoothness rather than lead). 
The DFP criterion enforces decoupling at $\delta=0$ (top-right) while maximizing forecast performance at the target horizon $h = 5$.  Ideally, any degradation at $\delta=0$ does not propagate to $\delta=h$; the criterion is meant to minimize such spillover at the target horizon. 
The apparent CCF peak shift to $\delta=h$ under full decoupling (top right, red) is incidental in this example. By contrast, PCS (which also shifts the peak to $\delta=h$) differs because its CCF does not vanish at $\delta=0$ (not shown). As decoupling strengthens, the DFP forecasts (bottom) shift progressively left, indicating relative anticipation. Notably, this lead is achieved by placing more weight on higher-lag observations, which is counterintuitive. \\

This example highlights the core trade-off underlying the DFP criterion: timeliness (advancement via decoupling) versus accuracy (MSE performance at $\delta=h$).  Table \ref{mse_la_decoupling} quantifies this trade-off through the CCF and the relative lead of the DFP design compared with the benchmark. Leads are measured as the shift at which the sample correlation between the MSE predictor and the (shifted) DFP predictor attains its maximum (cf. De Jong and Nijman, 1997).
\begin{table}[ht]
\centering
\begin{tabular}{rrrr}
  \hline
 & MSE & DFP weak decoupling & DFP complete decoupling \\ 
  \hline
CCF at lag=0 & 0.86 & 0.43 & 0.00 \\ 
  CCF at h=5 & 0.51 & 0.42 & 0.26 \\ 
  Relative lead over MSE & 0.00 & 2.00 & 4.00 \\ 
   \hline
\end{tabular}
\caption{CCFs of the predictors at $\delta=0$ and $\delta=h=5$ for the MSE benchmark, DFP with weak decoupling, and DFP with complete decoupling. Forecast loss at the target horizon (second row) is minimized subject to the decoupling constraint at the present (first row).  Empirical leads of the DFP predictors (third row) quantify the left shift (earlier timing) relative to the MSE benchmark, defined as the lag at which the sample correlation between the DFP and MSE predictors is maximized. The reduction in CCF at the forecast horizon alongside the increase in lead highlights the accuracy–timeliness trade-off.} 
\label{mse_la_decoupling}
\end{table}

\begin{figure}[H]\begin{center}\includegraphics[height=4in, width=6in]{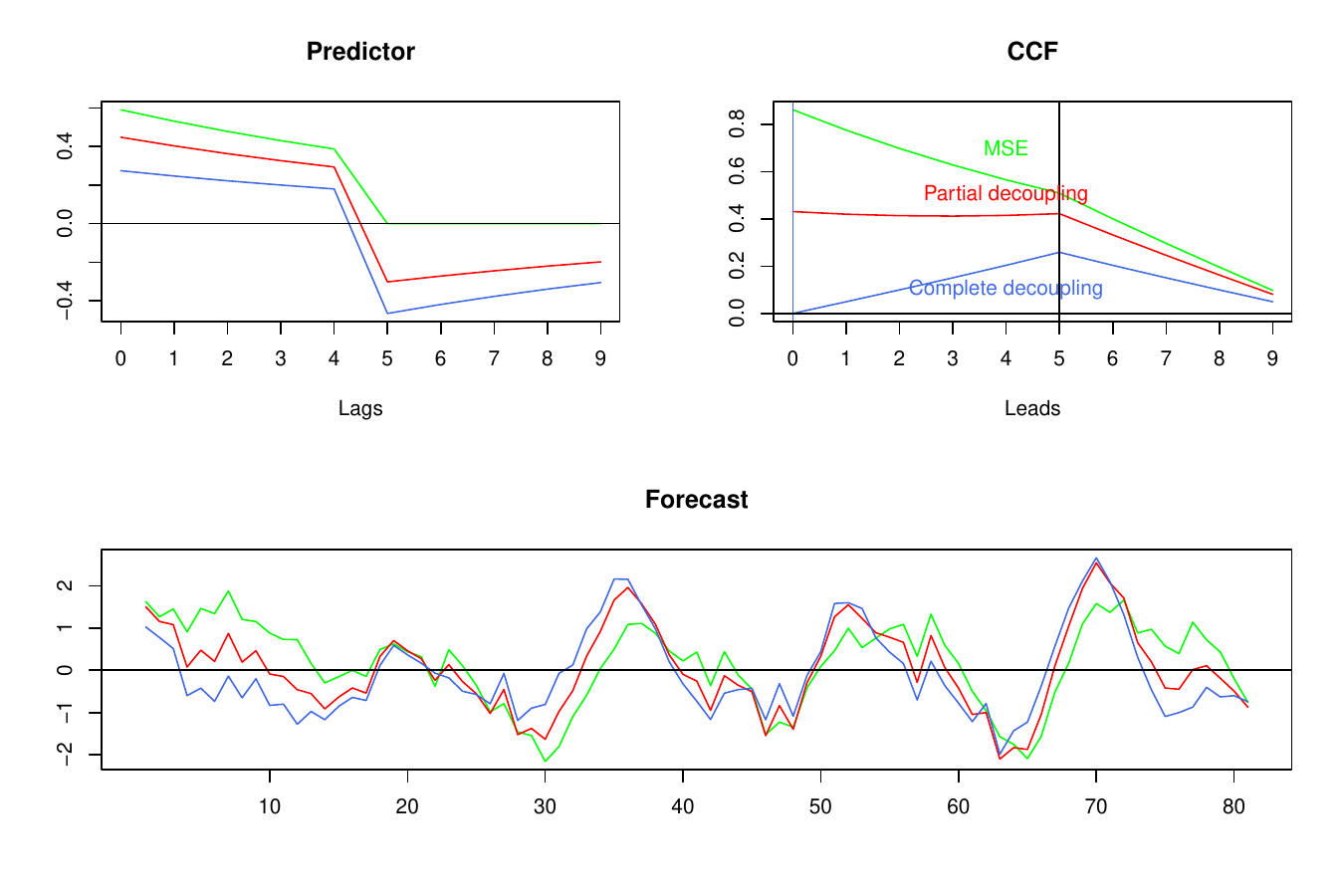}\caption{MSE (green) and DFP predictors under weak decoupling (red) and complete decoupling (blue). Top-left: forecast weights. Top-right: CCF across leads (the CCF under complete decoupling is zero at $\delta= 0$). Bottom: standardized forecasts. Increasing decoupling produces a leftward shift (advancement) of the DFP forecasts relative to the MSE benchmark.\label{cor_cor_LA_mod}}\end{center}\end{figure}

\subsection{Application of the Mean-Squared DFP to AR- and ARMA-Processes}\label{AR3}

We apply the mean-squared-error variant \eqref{dp2} of the DFP predictor to the stationary AR(3) and ARMA(3,2) processes
\begin{eqnarray*}
x_{t1}&=&1.3x_{t-1,1}-0.46x_{t-2,1}+0.048x_{t-3,1}+\epsilon_{t1}\\
x_{t2}&=&0.4x_{t-1,2}+0.3x_{t-2,2}+0.2x_{t-3,2}+\epsilon_{t2}+0.5\epsilon_{t-1,2}+0.4\epsilon_{t-2,2}.
\end{eqnarray*}
For the first process, the characteristic polynomial admits three positive real roots. In the ARMA(3,2) case, it admits one positive real root and a complex-conjugate pair whose modulus is substantially smaller than that of the real root. Consequently, both processes are dominated by the positive real root, yielding slowly and monotonically decaying autocorrelation functions (not shown) and predominantly non-oscillatory dynamics (an application to cyclical behavior is presented in the final business-cycle example).\\

In contrast to the preceding example, the innovation sequence $\epsilon_t$ is not observed and must be recovered by inversion to obtain the Wold decompositions
\[
x_{ti}=\sum_{k=0}^{\infty}\gamma_{ki}\epsilon_{t-k,i}, i=1,2.
\]
Because the dynamics are non-oscillatory, the MSE predictor tends to be strongly contemporaneously coupled, with a comparatively large CCF at $\delta=0$, even for fairly large forecast horizons. To attenuate this coupling, we employ the MSE DFP criterion \eqref{dp2} and vary the decoupling parameter $\alpha_0$ to construct $h=5$-step-ahead MSE and DFP predictors (see Fig. \ref{ar3_canc}). Under complete decoupling ($\alpha_0=0$), the solutions are
\begin{eqnarray*}
\mathbf{{b}}^{\textrm{AR(3)}}&=&\boldsymbol{\gamma}_h^{AR(3)}-0.41\boldsymbol{\gamma}_0^{AR(3)}\\
\mathbf{{b}}^{\textrm{ARMA(3,2)}}&=&\boldsymbol{\gamma}_h^{ARMA(3,2)}-0.76\boldsymbol{\gamma}_0^{ARMA(3,2)}\\
\end{eqnarray*}
As noted, a limitation of the MSE-based DFP variant \eqref{dp2} is that, in general, $\alpha_0$ lacks a correlation interpretation\footnote{That said, the optimization principle is simpler, yields a unique solution, and readily extends to nonstationary processes (not shown).} (except when $\alpha_0=0$). To address this, we can instead choose $\alpha_0$ based on a pre-specified time shift (advance) relative to the MSE predictor, as discussed later. Meanwhile,   Table \ref{ar3_decoupling} reports the effective CCF values at $\delta=0$ and $\delta=5$ implied by the selected $\alpha_0$ for both processes. The optimization pursues a demanding objective: minimize performance loss at $\delta=h$ while enforcing a controlled decline at $\delta=0$. This deliberate trade-off sets our method apart from classical forecasting approaches, which generally refrain from minimizing performance at zero lag—or at any lag (dual interpretation). \\

\begin{table}[ht]
\centering
\begin{tabular}{rrrrr}
  \hline
 &  $\textrm{Process 1: CCF }\delta=0$ & $\delta=5$ & $\textrm{Process 2: } \delta=0$ & $\delta=5$ \\ 
  \hline
$\alpha_0=0.9$ & 0.72 & 0.38 & 0.90 & 0.70 \\ 
  $\alpha_0=0.45$ & 0.46 & 0.29 & 0.72 & 0.57 \\ 
  $\alpha_0=0.22$ & 0.24 & 0.21 & 0.45 & 0.38 \\ 
  $\alpha_0=0.1$ & 0.11 & 0.16 & 0.22 & 0.21 \\ 
  $\alpha_0=0$ & 0.00 & 0.12 & 0.00 & 0.04 \\ 
   \hline
\end{tabular}
\caption{CCFs of the DFP predictors for the first process (columns 1–2) and the second process (columns 3–4) evaluated at $\delta=0$ (columns 1 and 3) and at $\delta=h=5$ (columns 2 and 4), shown for multiple values of the decoupling parameter $\alpha_0$. Note that $\alpha_0$ generally differs from the CCF at $\delta=0$ except in the case of complete decoupling (columns 1 and 3, last row).} 
\label{ar3_decoupling}
\end{table}

The contrast between the MSE (green) and DFP predictors for both processes in Fig.\ref{ar3_canc} highlights distinctive properties of the DFP design: whereas the MSE predictors are smooth, strictly positive, and monotonically decaying—reflecting the influence of the dominant root of the characteristic polynomial—the DFP predictors become progressively less smooth, partially negative, and non-monotonic as decoupling increases. Specifically, as $\alpha_0$ decreases, $\lambda$ in \eqref{lambda_mse} becomes more negative, intensifying the cancellation between $\boldsymbol{\gamma}_0$ and $\boldsymbol{\gamma}_h$ in the DFP expression $\mathbf{{b}}=\boldsymbol{\gamma}_h+\lambda\boldsymbol{\gamma}_0$. The consequent down-weighting of the present relative to the future in $\mathbf{{b}}$ exposes dynamics otherwise obscured by the dominant root. Beyond the increasingly irregular patterns observed for the ARMA(3,2) case (right top panel), a damped cycle emerges in the AR(3) case (left top panel) despite all process roots being strictly positive—a configuration that typically implies noncyclical behavior (cyclical dynamics are examined in the final example of the next section).\\

We now select the hyperparameter $\alpha_0$ using a pre-specified time-shift (see Corollary \ref{sel_alpha}) and compare the resulting procedure with the fully decoupled DFP; both are assessed against the MSE predictor. For illustration, we use the (above) AR(3) process
\[
x_t=1.3x_{t-1}-0.46x_{t-2}+0.048x_{t-3}+\epsilon_t.\]
We construct an $h=3$-step-ahead predictor and impose a relative time shift (advance) of $\tau=\tau_{b}(0)-\tau_h(0)=2$ with respect to the MSE predictor $\boldsymbol{\gamma}_{3}$ to obtain the new $\tau$-shifted DFP. While the particular forecast horizons used in this example are somewhat arbitrary, the conclusions carry over to any horizon. In particular, the resulting DFP predictors (based on $h=3$) largely preserve the left shift—even relative to the MSE predictor $\boldsymbol{\gamma}_{\tilde{h}}$—for arbitrarily long horizons $\tilde{h}$. See Table \ref{ar3_shift_decoupling} for an illustration with $\tilde{h}=20$.\\

Figure \ref{ar3_shift_full} (top left) displays the finite-length MA inversion $\boldsymbol{\gamma}_0$, i.e., the truncated Wold decomposition of the AR(3), (black) alongside with the MSE predictor  $\boldsymbol{\gamma}_{3}$ (green), the $\tau$-shifted DFP (blue), and the fully decoupled DFP (red). For ease of visual inspection, all designs are calibrated to unit-length; hence the Wold decomposition (black) does not start with a one.  The associated CCFs and sample realizations are presented in the top-right and bottom panels, respectively. Table \ref{ar3_shift_decoupling} lists the time-shifts $\tau(\omega)$ and transfer functions $\Gamma(\omega)$ at $\omega=0$, along with the hyperparameters $\lambda$ and $\alpha_0$ ($\Gamma(0)$ is based on predictors scaled to unit-length). For comparison, we also evaluate an additional 
20-step-ahead MSE predictor to confirm that increasing the forecast horizon only marginally changes the time shift. Hence, the `hyperparameter' $h$ (the forecast horizon) cannot be used to address a left shift (lead) of the MSE predictor.   The fully decoupled DFP is obtained by setting $\alpha_0=0$, with $\lambda$ computed from \eqref{lambda_mse}. The $\tau$-shifted DFP uses $\tau=2$, with $\lambda$ and $\alpha_0$ given by \eqref{lambda0_tau} and  \eqref{alpha0_tau}, respectively. \\

Because the fully decoupled DFP undergoes phase reversal at zero frequency ($\Gamma(0)<0$), its time shift is not well-defined and is omitted from the table.\footnote{In particular, the formula for $\tau(0)$ returns the shift of the sign-reversed predictor.} While the $\tau$-shifted DFP advances trends by $\tau=2$ relative to the MSE predictor, the fully decoupled version does not preserve the direction of a linear trend.\\

This inversion also reverses any nonzero mean. Yet, even as $\lambda$ becomes more negative—and this potentially undesirable effect becomes more pronounced—Figure\ref{ar3_shift_full} (bottom panel) shows that the fully decoupled predictor still  extends lead time for our zero-mean AR(3) process. Because the DFP criterion limits this sign reversal to specific frequencies, we accept the trend/mean inversion as the cost of gaining increased phase excess, i.e., lead time, under zero-mean assumptions. Pushing $\lambda$ too negative, however, risks violating the positivity condition, which mandates a strictly positive target correlation, see Section \ref{limit}. While our proposed predictors satisfy this condition (Fig. \ref{ar3_shift_full}, middle-right), the plot indicates that decreasing $\lambda$ past the point of full decoupling risks bordering on, or effectively violating, the assumption. Thus, in this example, the fully decoupled design is near-maximal for positivity-preserving lead time.\\
Broadly speaking, generating a lead in stationary mean-reverting processes requires key components to flip signs early to anticipate the coming reversion.  The DFP criterion facilitates this by optimally weighting these components—specifically the low-frequency ones in our example—demonstrating efficacy even in non-periodic contexts. However, in such challenging scenarios, the bounds on $\lambda$ must be considered  carefully if strict positivity is deemed relevant.\\

Enforcing accurate tracking of a nonzero mean (or linear trend)—i.e., not merely avoiding mean inversion but matching the level—would require adding constraints to the MSE-DFP problem \eqref{dp2}, thereby trading off lead (for a given target correlation) or target correlation (for a given lead).\footnote{Exact mean tracking requires $\Gamma_b(0)=\Gamma_h(0)$ (first-order); exact trend tracking additionally requires $\dot{\Gamma}_b(0)=\dot{\Gamma}_h(0)$ (second-order); see McElroy and Wildi (2016).}. Meanwhile, the unit-length DFP \eqref{dp}, which optimizes correlation, ignores the mean entirely and cannot recover a fixed nonzero level without  an ex post adjustment. \\
To handle (stationary) nonzero-mean series, we propose a two-step procedure: first maximize lead for a given target correlation (or vice versa), even if this entails mean inversion; then apply a static level correction. See Heinisch et al. (2026) for an application to multi-step GDP forecasting.\\


\begin{table}[ht]
\centering
\begin{tabular}{rrrrr}
  \hline
 & $\tau(0)$ & $\Gamma(0)$ & $\lambda$ & $\alpha_0$ \\ 
  \hline
MSE(3) & -4.01 & 3.01 &  &  \\ 
  MSE(20) & -3.96 & 3.00 &  &  \\ 
  DFP-shifted & -2.01 & 1.92 & -0.49 & 0.52 \\ 
  DFP full dec. &  & -1.07 & -0.57 & 0.00 \\ 
   \hline
\end{tabular}
\caption{Left to right: the time-shifts $\tau(\omega)$ and transferfunctions $\Gamma(\omega)$ evaluated at $\omega=0$,  followed by the hyperparameters $\lambda$ and $\alpha_0$. Top to bottom: $h=3$-step-ahead  MSE (first row),  the $20-$step-ahead MSE and then the two 3-step-ahead DFP: the $\tau$-shifted DFP and the fully decoupled DFP predictors. All predictors are normalized to unit length. Increasing the forecast horizon only slightly changes the time-shift of the MSE predictor (first and second row). For the 3-step-ahead DFP predictors, negative $\lambda$ values induce phase excess (lead). In the first column, the $\tau$-shifted DFP leads the MSE by the imposed amount $\tau=-2.01-(-4.01)=2$. The fully-decoupled DFP is phase reverting ($\Gamma(0)<0$), so the time-shift is not well-defined at $\omega=0$ (and the corresponding entry is left blank).} 
\label{ar3_shift_decoupling}
\end{table}\begin{figure}[H]\begin{center}\includegraphics[height=4in, width=5in]{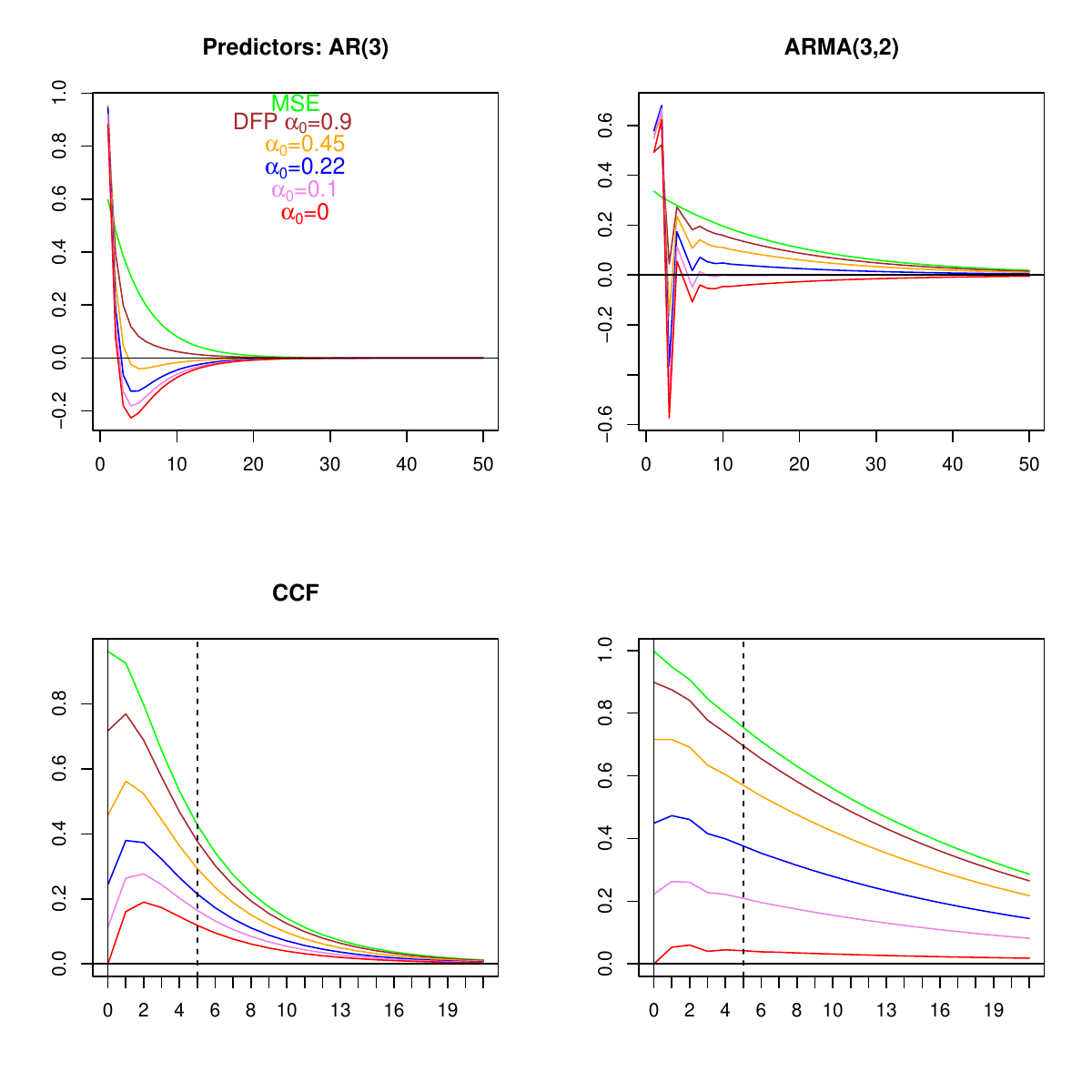}\caption{$L=50$-length moving-average inversions of the $h=5$-step-ahead predictors for the AR(3) process (left panels) and the ARMA(3,2) process (right panels) are shown: DFP predictors (top row) and the corresponding CCFs (bottom row). To facilitate visual comparison, all predictors are rescaled to have unit norm. For both processes, the DFP designs are generated using the sequence of decoupling parameter values $\alpha_0$ indicated in the upper-left panel.\label{ar3_canc}}\end{center}\end{figure}\begin{figure}[H]\begin{center}\includegraphics[height=4in, width=6in]{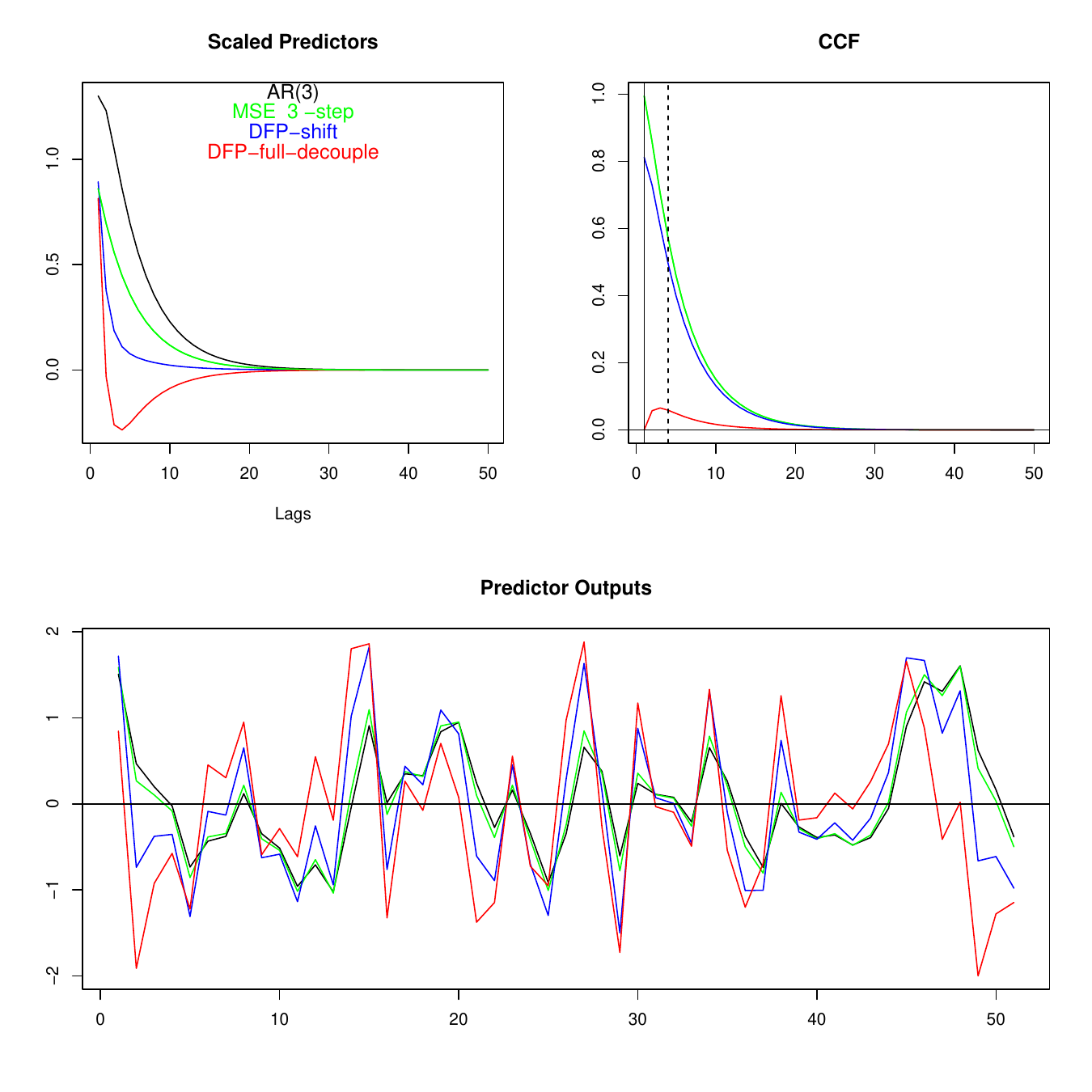}\caption{Filter coefficients (top left), CCF (top right), and filter outputs (bottom) for the AR(3) filter (black), the $h=3$-step-ahead MSE predictor (green), DFP with a time shift  $\tau=2$ (advancement, blue line), and fully decoupled DFP (red). For easier visual comparison, all filters are normalized to unit norm (equivalently, unit output variance when driven by standardized white noise). Both DFP predictors are consistently shifted to the left relative to the MSE predictor (bottom panel). \label{ar3_shift_full}}\end{center}\end{figure}

\subsection{Application of PCS to Real Time Business-Cycle Analysis}\label{RTBCA}

\begin{figure}[H]\begin{center}\includegraphics[height=4in, width=6in]{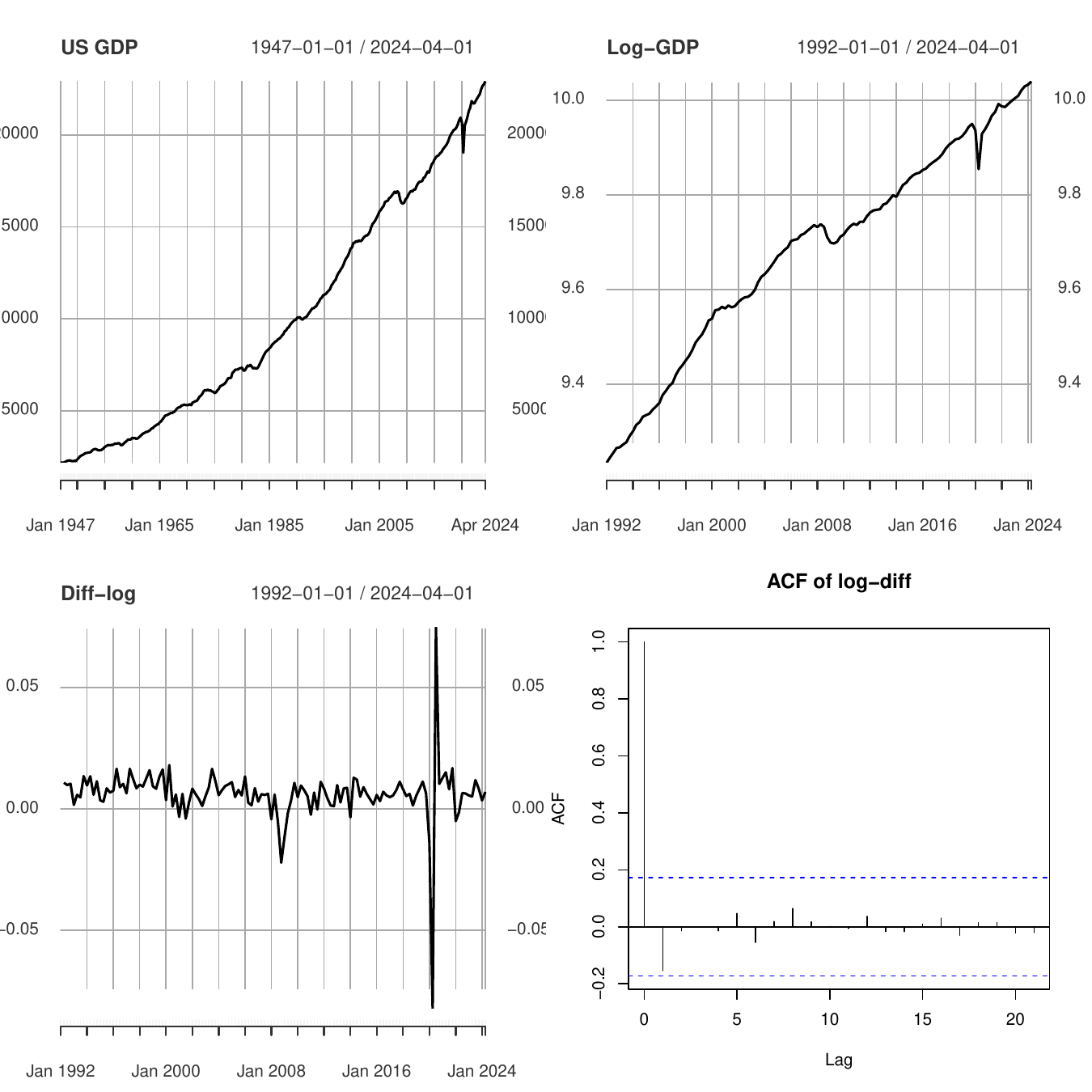}\caption{US GDP long time span (top left), log-GDP  from 1992 to 2024 (top right), log-differences (bottom left) and ACF of log-differences (bottom right).\label{GDP}}\end{center}\end{figure}

Business cycle analysis characterizes fluctuations in macroeconomic activity either as deviations from a smooth potential-growth path (classical cycle) or via movements in growth rates (growth-cycle). Both perspectives require a reliable estimate of the underlying trend. We extract the trend (growth) component of quarterly U.S. real GDP using the Hodrick–Prescott (HP) filter (Hodrick and Prescott, 1997), with the conventional quarterly smoothing parameter $\lambda_{HP} = 1600$. The data, retrieved from the FRED database (https://fred.stlouisfed.org/), are displayed in Fig. \ref{GDP}. The sample covers the last three recession episodes, spanning 1992-01-01 to 2024-04-01. \\

The HP filter is intrinsically acausal, two-sided, and bi-infinite. In real time, however, dependence on future observations is infeasible, complicating estimation. To accommodate this constraint, the HP filter admits a finite, one-sided approximation for nowcasting that tracks the two-sided design optimally under suitable data-generating assumptions (McElroy, 2008; Cornea-Madeira, 2017). Building on this framework, Wildi (2024) develops a customized concurrent (real-time) HP filter that explicitly trades off accuracy and smoothness to improve nowcast reliability. We extend this approach by applying the PCS principle to induce a predictive lead relative to the MSE benchmark.  \\

We construct a leading indicator using the PCS variant in \eqref{pcs_leading}, targeting a one-year lead ($h=4$) and specifying $\boldsymbol{\gamma}=\boldsymbol{\gamma}_0$ as the (causal, finite-length) HP(1600) trend nowcast applied to the log-differenced series (Fig. \ref{GDP}, bottom-left panel). Note that this leading- indicator PCS is not a classic forecast, since the target is the nowcast, albeit with its CCF shifted. \\
We evaluate the PCS objective over a range of hyperparameter values $\beta_h=\beta_{4}$ and compare the resulting filters with the classical one-year MSE predictor under a white-noise assumption (justified by the bottom-right panel in Fig. \ref{GDP}). All predictors are constructed with length $L=50$.\\

Figure \ref{shift_peak_coef_hp} compares the MSE predictor (green) with (leading-indicator) PCS-based predictors for several values of  $\beta_{4}$ (top left panel); all designs are normalized to unit-norm. As intended, a CCF peak shift to $\delta=4$ is achieved at $\beta_{4}=0$, with predictor 
\[
\mathbf{{b}}:=\boldsymbol{\gamma}-8.81(\mathbf{F}^{3}-\mathbf{F}^4)\boldsymbol{\gamma}
\]
(violet line), 
where $\boldsymbol{\gamma}=\boldsymbol{\gamma}_0$ is the indicator  underlying the leading-design construction (HP nowcast). \\
The cross-correlation functions (top right panel) indicate that, as $\beta_{4}$  approaches zero, the peak moves toward $\delta=h=4$ (violet line). The standardized filter outputs (bottom panel) likewise show that the predictive lead of the PCS predictors increases as $\beta_{4}$ decreases.
 
\begin{figure}[H]\begin{center}\includegraphics[height=5in, width=5in]{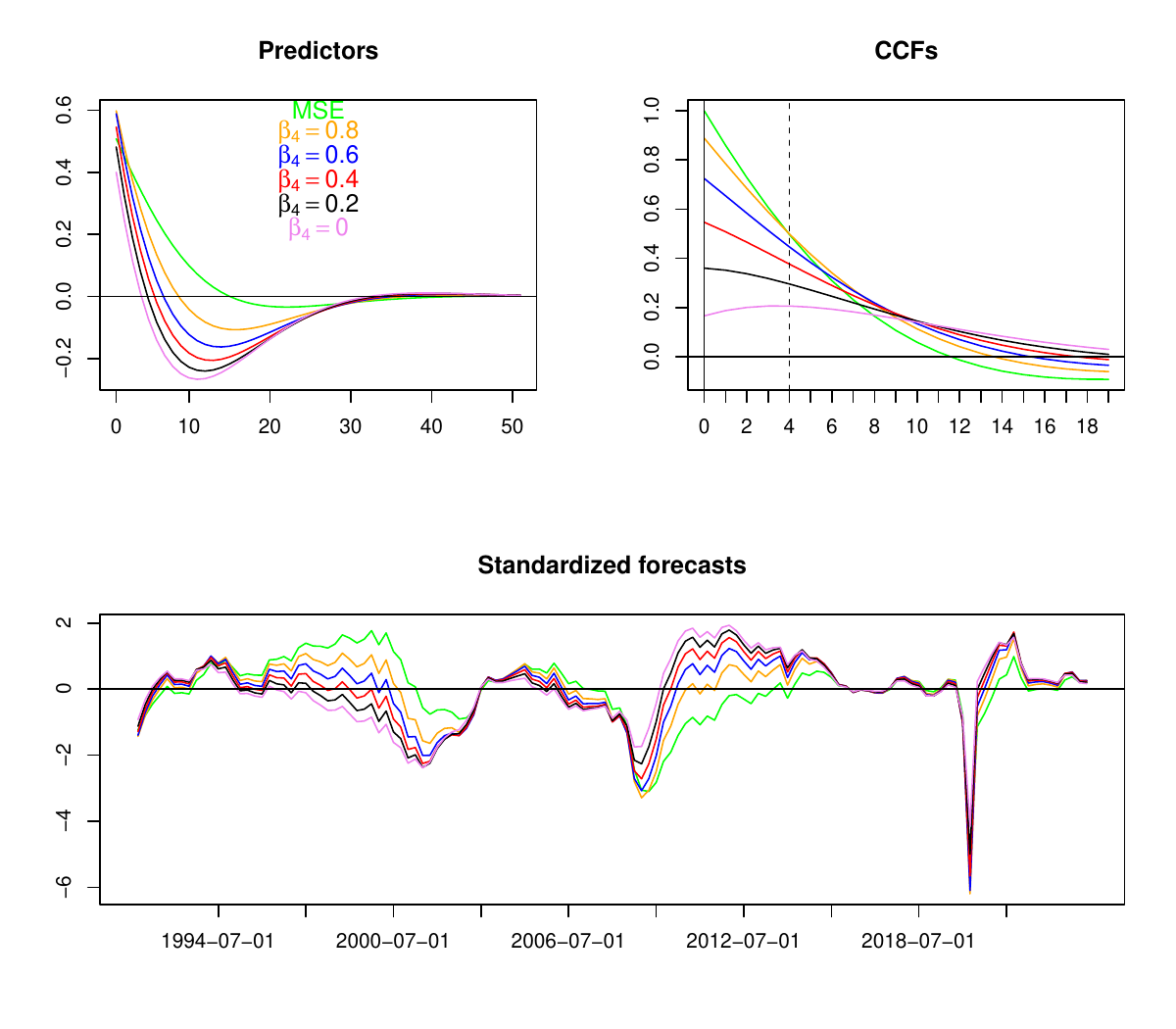}\caption{Filter coefficients (top left), CCFs (top right) and  leading indicators (bottom) of  MSE- (green) and PCS-predictors  under various choices of $\beta_4$. When $\beta_4=0$, the CCF attains its peak at the forecast horizon (violet line top right panel). Predictors and series are normalized for clearer visual inspection.\label{shift_peak_coef_hp}}\end{center}\end{figure}

\section{Conclusion}\label{concl}

We address the inherent accuracy–timeliness conflict in prediction by embedding it in an objective–constraint optimization framework. A formal definition of predictor `lead' maps this dilemma to an explicit MSE–lead trade-off.\\

We propose new look-ahead designs: i) unit-length and MSE-DFP, which decouple the forecast from the nowcast, and ii) unit-length, MSE, and leading-indicator PCS, which shift the CCF peak toward the forecast horizon. Closed-form solutions show that these designs are generally geometrically distinct, lying in different subspaces of the predictor space. We also derive their distributions, linking them to standard MSE-predictor theory through appropriate transformations. \\

For clarity, we focus on stationary univariate processes with full-rank predictor spaces, though the framework extends to nonstationary or multivariate settings and to singular (rank-deficient) spaces. We illustrate the approach in time-series forecasting and real-time signal extraction, including leading-indicator design. In each application, a single hyperparameter controls the accuracy–timeliness trade-off and admits clear statistical and geometric interpretations, tracing a new accuracy–timeliness efficient frontier. Whereas the MSE predictor occupies a single spot on this curve, our approaches span the entire frontier.\\

Future work could combine this framework with Wildi’s recent results (2024–2026) to develop a unified `prediction trilemma' that jointly balances accuracy, timeliness, and smoothness.

 \appendix

 \section{Lead in the Non-Standard Case $\tau_h(0)<\tau_0(0)$}\label{l_ns_s}

 We argue that the final baseline condition $\tau_h(0)>\tau_0(0)$ is not strictly necessary to derive an expression for the phase differential $\Phi_b(\omega)-\Phi_h(\omega)$ in  Proposition \eqref{alpha_0_beta}, although it does simplify both the analysis and the derivation. Suppose instead that this condition is violated, i.e., $\tau_h(0)<\tau_0(0)$ (non-standard situation). Using \eqref{ts_0}, we require 
\[
-\frac{\sum_{k=1}^{L-1}k\gamma_k}{\sum_{k=0}^{L-1}\gamma_k}=\tau_0(0)>\tau_h(0)=-\frac{\sum_{k=1}^{L-1}k\gamma_{k+h}}{\sum_{k=0}^{L-1}\gamma_{k+h}}.
\]
As an example, for $L=2$ and $h=1$, with $\gamma_0=1$, this reduces to  $\gamma_1^2<\gamma_2$. In that case the one-step predictor $\boldsymbol{\gamma}_1=(\gamma_1,\gamma_2)'$ lags the nowcast $\boldsymbol{\gamma}_0=(1,\gamma_1)'$ at frequency zero.\\

In this non-standard case 
\[
\Phi_h(\omega)\approx\dot{\Phi}_h(0)\omega=\tau_h(0)\omega<\tau_0(0)\omega=\dot{\Phi}_0(0)\omega\approx\Phi_0(\omega),
\]
for small $\omega>0$. 
For $\lambda<0$ (phase excess) in $\Gamma_b=\Gamma_h+\lambda\Gamma_0$ this suggests $\Phi_b(\omega)<\Phi_h(\omega)$ and hence
\[
\tau_b(0)\approx \Phi_b(\omega)/\omega<\Phi_h(\omega)/\omega\approx\tau_h(0),
\]
so the DFP would \emph{lag} the MSE predictor at the reference (zero) frequency. 
Nevertheless, a lead at zero frequency is still attainable. 
Specifically, one can distinguish two scenarios:  either i)
\begin{eqnarray}\label{phase_ord1}
\Phi_0(\omega)\geq \Phi_b(\omega)\geq\Phi_h(\omega),
\end{eqnarray}
or ii)
\begin{eqnarray}\label{phase_ord2}
\Phi_b(\omega)\geq\Phi_0(\omega)\geq\Phi_h(\omega),
\end{eqnarray}
both of which differ from \eqref{phase_ord}.\\ 
In the first scenario, $\lambda>0$ implies $\Phi_b(\omega)>\Phi_h(\omega)$\footnote{If $\lambda<0$, the DFP rotates in the same direction as the MSE predictor relative to the nowcast: in that case $\theta_{0h}<0$ and the `phase excess' $\theta_{0b}=\theta_{0h}+\theta_{hb}<\theta_{0h}<0$ increases the \emph{lag}. Rotating in the opposite direction, i.e., $\theta_{hb}>0$, to increase the \emph{lead} therefore requires $\lambda>0$ in the non-standard case.} and
\[
\tau_b(0)\approx\Phi_b(\omega)/\omega>\Phi_h(\omega)/\omega\approx\tau_h(0)
\]
inducing a lead, as desired. However, in case i) $\Phi_b(\omega)<\Phi_0(\omega)$ so that
\begin{eqnarray}\label{phib<phi0}
\tau_b(0)\approx\Phi_b(\omega)/\omega<\Phi_0(\omega)/\omega\approx\tau_0(0),
\end{eqnarray}
that is, the DFP time shift does not exceed the nowcast. For a graphical illustration, it is convenient to use Fig. \ref{dfp_geometry} for the unit-length DFP Criterion \eqref{dp}: case (i) is obtained by interchanging $\boldsymbol{\gamma}_h$ (i.e., $\Gamma_h(\omega)$) and $\boldsymbol{\gamma}_0$ (i.e.,$\Gamma_0(\omega)$), with $\mathbf{b}$ (i.e., $\Gamma_b(\omega)$) lying between them. Accordingly, the DFP solution corresponds to the \emph{lower} branch of the cone, represented in the figure by $\theta_{0b}<0$.  As $\lambda\to\infty$ (equivalently $\alpha_0\to\infty$), $\mathbf{b}(\lambda)\propto\boldsymbol{\gamma}_0$ and hence $\tau_b(0)\to\tau_0(0)$.\\
Here $\beta(\omega)=\Phi_b(\omega)-\Phi_h(\omega)$ is determined from the triangle with sides $a,b,c$ as in Proposition \eqref{alpha_0_beta} but with
\[
\alpha(\omega)=\Phi_0(\omega)-\Phi_h(\omega)>0
\]
(instead of $\gamma(\omega)=\Phi_h(\omega)-\Phi_0(\omega)>0$). By the  Law of Sines
\[
\frac{\sin(\beta(\omega))}{b(\omega)}=\frac{\sin(\alpha(\omega))}{a(\omega)},
\]
so $\beta$ follows from this relation, with $a=A_h,b=\lambda A_0$ for $\lambda>0$.\\
In case ii), one instead selects the upper branch of the cone, represented by $\theta_{0b}>0$ in Fig.\ref{dfp_geometry}. In this non-standard setting, recovering the upper-branch solution from the unit-length criterion \eqref{dp} requires switching the maximization of the target correlation to a \emph{minimization}, with $\lambda>0$ again ensuring a lead. This implies
\[
\mathbf{b}=\lambda_1\boldsymbol{\gamma}_h+\lambda_2\boldsymbol{\gamma}_0
\]
with the weight  on $\boldsymbol{\gamma}_h$ \emph{negative}, $\lambda_1<0$. In principle, the same sign reversal carries over to the MSE-DFP; however, swapping sign, i.e., maximizing MSE in  \eqref{dp2}, fails because $\mathbf{b}$ is not constrained to unit-length. An alternative is to swap the roles of $\boldsymbol{\gamma}_0$ (now in the objective) and $\boldsymbol{\gamma}_h$ (now in the constraint)
\begin{eqnarray}
&&\min_{\mathbf{b}} (\mathbf{b}-\boldsymbol{\gamma}_0)'(\mathbf{b}-\boldsymbol{\gamma}_0)\label{dp3}\\
\textrm{s.t.}&&\boldsymbol{\gamma}_{h}'\mathbf{b}=\alpha_h.\nonumber 
\end{eqnarray}
Then Proposition \ref{dfp_mse_prop} yields
\begin{eqnarray}\label{caseii}
\mathbf{b}(\lambda)=\boldsymbol{\gamma}_0+\lambda\boldsymbol{\gamma}_h.
\end{eqnarray}
For $\lambda<0$, $\tau_{b}(0)>\tau_0(0)$, as desired. \\
To derive $\beta(\omega)=\Phi_b(\omega)-\Phi_h(\omega)$ in this  case,  Proposition \ref{alpha_0_beta} continues to apply with these changes: $a=A_0$ (instead of $A_h$),  $b=|\lambda| A_h$ (instead of $|\lambda|A_0$) and flip the sign of $\gamma=\Phi_0-\Phi_h$. \\

Constructing `large' leads  over the MSE predictor  under the non-standard ordering $\tau_h(0)<\tau_0(0)$ (case ii)) requires either flipping the sign in criterion \eqref{dp} (i.e., minimizing instead of maximizing) or swapping the roles of the nowcast and the MSE predictor in criterion \eqref{dp3}, thereby extending the standard DFP setup.\\

This analysis demonstrates that the (last) baseline condition $\tau_h(0)>\tau_0(0)$ (the standard case),  is not essential to derive $\beta(\omega)$ in Proposition \eqref{alpha_0_beta}; it mainly streamlines the exposition by preserving, near the reference zero frequency, the DFP-triangle geometry implied by \eqref{phase_ord}. In the standard case, rotating the DFP predictor in the same direction as the MSE predictor relative to the nowcast (i.e., $\lambda<0$ and minimizing MSE) produces a lead $\tau_b(0)>\tau_h(0)$ at the reference frequency. This matches the previous section’s DFP geometry: the phase excess $\theta_{0b}-\theta_{0h}>0$ implied by $\lambda<0$ corresponds to a lead  at frequency zero. When $\tau_h(0)<\tau_0(0)$,  the extensions based on the reorderings in \eqref{phase_ord1} and \eqref{phase_ord2} apply. Then $\mathbf{b}=\boldsymbol{\gamma}_h+\lambda\boldsymbol{\gamma}_0$ with $\lambda>0$ (case i)) or $\mathbf{b}=\boldsymbol{\gamma}_0+\lambda\boldsymbol{\gamma}_h$ with $\lambda<0$ (case ii)).\\ 

 \section{Linking Phase Excess and Hyperparameter in the DFP Criterion}\label{dfp_link}

 Let $\theta_{0b}$ and $\theta_{0h}$ denote the angles between $\boldsymbol{\gamma}_{0}$ and $\mathbf{{b}}$, and between $\boldsymbol{\gamma}_{0}$  and $\boldsymbol{\gamma}_{h}$, respectively.  The next proposition relates $\lambda$ to these angles when $\theta_{0b}>\theta_{0h}>0$ (phase excess).

\begin{Proposition}\label{proplambda0}
Assume $\boldsymbol{\gamma}_0\not\propto\boldsymbol{\gamma}_h$ (rank-two) and $\theta_{0b}>\theta_{0h}>0$ (phase excess). Then  
\begin{eqnarray}\label{lambda0} 
\lambda=-\frac{|\boldsymbol{\gamma}_h|}{|\boldsymbol{\gamma}_0|}\frac{\sin(\theta_{0b}-\theta_{0h})}{\sin(\theta_{0b})}
\approx\frac{|\boldsymbol{\gamma}_h|}{|\boldsymbol{\gamma}_0|}\left(\frac{\theta_{0h}}{\theta_{0b}}-1\right),
\end{eqnarray}
where the approximation is valid  for sufficiently small $\theta_{0b}$.
\end{Proposition}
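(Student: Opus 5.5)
The plan is to work entirely inside the time-domain DFP-triangle of Fig.~\ref{dfp_geometry_2}, with vertices $\mathbf{0}$, $\boldsymbol{\gamma}_h$ and $\mathbf{b}=\boldsymbol{\gamma}_h+\lambda\boldsymbol{\gamma}_0$ (Proposition \ref{dfp_mse_prop}), and read $\lambda$ off the Law of Sines.

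\textbf{Step 1: sign and geometry.} By the rank-two assumption, $\mathbf{b}$, $\boldsymbol{\gamma}_h$ and $\boldsymbol{\gamma}_0$ lie in a common plane, so the angles from $\boldsymbol{\gamma}_0$ can be treated as signed and additive. First I would check that $\theta_{0b}>\theta_{0h}>0$ forces $\lambda<0$. Adding a positive multiple of $\boldsymbol{\gamma}_0$ to $\boldsymbol{\gamma}_h$ rotates the sum toward $\boldsymbol{\gamma}_0$, which would give $\theta_{0b}<\theta_{0h}$. Hence $\lambda<0$, and the triangle has:
\begin{itemize}
\item sides $a=\|\boldsymbol{\gamma}_h\|$, $b=|\lambda|\,\|\boldsymbol{\gamma}_0\|$ (the segment from $\boldsymbol{\gamma}_h$ to $\mathbf{b}$, parallel to $-\boldsymbol{\gamma}_0$) and $c=\|\mathbf{b}\|$;
\item angle at the origin $\beta=\theta_{hb}=\theta_{0b}-\theta_{0h}$, opposite side $b$;
\item angle at $\mathbf{b}$ equal to $\pi-\theta_{0b}$, opposite side $a$.
\end{itemize}

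\textbf{Step 2: the angle at $\mathbf{b}$.} This is the main point to justify carefully. At $\mathbf{b}$ the two edges point toward $\mathbf{0}$ (direction $-\mathbf{b}$) and toward $\boldsymbol{\gamma}_h$ (direction $+\boldsymbol{\gamma}_0$, since $\lambda<0$). The angle between $-\mathbf{b}$ and $\boldsymbol{\gamma}_0$ is $\pi-\theta_{0b}$, and its sine equals $\sin(\theta_{0b})$. The angle at $\boldsymbol{\gamma}_h$ is then $\theta_{0h}$, consistent with $\gamma=\theta_{0h}$ in the figure, and the three angles sum to $\pi$ as a check.

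\textbf{Step 3: Law of Sines.} It gives
\[
\frac{|\lambda|\,\|\boldsymbol{\gamma}_0\|}{\sin(\theta_{0b}-\theta_{0h})}=\frac{\|\boldsymbol{\gamma}_h\|}{\sin(\theta_{0b})}.
\]
Combined with $\lambda<0$, this yields the exact expression in \eqref{lambda0}. The expression is well defined because $0<\theta_{0b}<\pi$: the rank-two assumption excludes the degenerate values $0$ and $\pi$.

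\textbf{Step 4: approximation.} For small $\theta_{0b}$, and hence also small $0<\theta_{0h}<\theta_{0b}$, use $\sin x\approx x$. This gives $-\frac{\|\boldsymbol{\gamma}_h\|}{\|\boldsymbol{\gamma}_0\|}\frac{\theta_{0b}-\theta_{0h}}{\theta_{0b}}$, which equals the stated approximation.

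The only real obstacle is the sign and orientation bookkeeping in Steps 1 and 2. I would handle it either by choosing an orthonormal basis of the plane with $\boldsymbol{\gamma}_0$ on the first axis, or by mirroring the degenerate-case reasoning from Proposition \ref{alpha_0_beta}.
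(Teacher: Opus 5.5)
Your proposal is correct and follows the paper's own argument: it uses the same DFP-triangle with sides $\|\boldsymbol{\gamma}_h\|$, $|\lambda|\,\|\boldsymbol{\gamma}_0\|$, $\|\mathbf{b}\|$ and opposite angles $\theta_{0b}-\theta_{0h}$, $\pi-\theta_{0b}$, $\theta_{0h}$, then the Law of Sines, the sign $\lambda<0$, and a first-order expansion of the sine. Your explicit justifications of the sign and of the angle at $\mathbf{b}$ fill in steps the paper only asserts ``from the triangle construction''; for the sign, you note that $\lambda>0$ would put $\mathbf{b}$ in the cone spanned by $\boldsymbol{\gamma}_0,\boldsymbol{\gamma}_h$ and force $\theta_{0b}<\theta_{0h}$.
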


\textbf{Proof}: Geometrically, the vectors $\mathbf{b}$, $\boldsymbol{\gamma}_{0}$ and $\boldsymbol{\gamma}_{h}$ lie in the same plane and define the DFP-triangle with side lengths   $a=|\boldsymbol{\gamma}_h|$, $b=|\lambda||\boldsymbol{\gamma}_0|$, and $c=|\mathbf{b}|$; see Fig.\ref{dfp_geometry_2}. When $\theta_{0b}>\theta_{0h}>0$, the corresponding opposite angles are $\beta=\theta_{0b}-\theta_{0h}$, $\gamma=\theta_{0h}$ and $\alpha=\pi-(\theta_{0b}-\theta_{0h})-\theta_{0h}=\pi-\theta_{0b}$.\footnote{If, instead, $0<\theta_{0b}<\theta_{0h}$ (phase loss), the opposite angles become $\beta=\theta_{0h}-\theta_{0b}$, $\gamma=\pi-\theta_{0h}$ and $\alpha=\theta_{0b}$.} Note that  $\theta_{0b}-\theta_{0h}<\pi$ by construction of $\mathbf{b}$.  Using the law of sines gives  $\frac{\sin(\beta)}{b}=\frac{\sin(\alpha)}{a}$, and hence 
\[
|{\lambda}|=\frac{|\boldsymbol{\gamma}_h|\sin(\theta_{0b}-\theta_{0h})}{|\boldsymbol{\gamma}_0|\sin(\pi-\theta_{0b})}.
\]
In addition,  the triangle construction together with $\theta_{0b}>\theta_{0h}>0$ implies ${\lambda}<0$. Therefore, \begin{eqnarray}\label{gtre}
0>{\lambda}=-\frac{|\boldsymbol{\gamma}_h|}{|\boldsymbol{\gamma}_0|}\frac{\sin(\theta_{0b}-\theta_{0h})}{\sin(\theta_{0b})}\approx \frac{|\boldsymbol{\gamma}_h|}{|\boldsymbol{\gamma}_0|}\left(\frac{\theta_{0h}}{\theta_{0b}}-1\right),
\end{eqnarray}
where the approximation follows from a first-order Taylor expansion of the sine function when $\theta_{0b}(>\theta_{0h}>0)$ is small. \\

\textbf{Remarks}: If $\theta_{0b}>\theta_{0h}>0$ (phase excess), then $\lambda<0$, and the proposed look-ahead perspective highlights differences rather than similarities between the MSE predictor $\boldsymbol{\gamma}_h$ and the nowcast $\boldsymbol{\gamma}_0$; see Section \ref{AR3} for illustration. In contrast, if  $\theta_{0h}>\theta_{0b}>0$ (phase loss), then $\lambda>0$ and the resulting DFP design stresses their commonalities, albeit with a corresponding lag relative to the MSE predictor.  Finally, flipping the signs of $\theta_{0h}$ and $\theta_{0b}$ so that $0>\theta_{0h}>\theta_{0b}$ leaves these conclusions unchanged, since the signs cancel in Equation \eqref{lambda0}. Geometrically, both orderings require the DFP predictor to rotate farther in the same direction than the MSE predictor within the plane spanned by  $\boldsymbol{\gamma}_0,\boldsymbol{\gamma}_h$.\\

\begin{Corollary}\label{cor_al}
The hyperparameter $\alpha_0$ of the DFP-MSE Criterion \eqref{dp2} can be linked to the phase excess $\theta_{0b}$, assuming $\theta_{0b}>\theta_{0h}>0$, viz.
\[
\alpha_0=|\boldsymbol{\gamma}_h||\boldsymbol{\gamma}_0|\left(\cos(\theta_{0b})-\frac{\sin(\theta_{0b}-\theta_{0h})}{\sin(\theta_{0b})}\right).
\]
\end{Corollary}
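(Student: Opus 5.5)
The plan is to compute $\alpha_0$ directly from the decoupling constraint in \eqref{dp2}, insert the closed form of $\lambda$ from Proposition \ref{proplambda0}, and then bring the result into the stated form. By Proposition \ref{dfp_mse_prop}, the DFP-MSE solution is $\mathbf{b}=\boldsymbol{\gamma}_h+\lambda\boldsymbol{\gamma}_0$, and the constraint holds with equality. Hence
\[
\alpha_0=\boldsymbol{\gamma}_0'\mathbf{b}=\boldsymbol{\gamma}_0'\boldsymbol{\gamma}_h+\lambda\,\boldsymbol{\gamma}_0'\boldsymbol{\gamma}_0 ,
\]
which is simply \eqref{lambda_mse} solved for $\alpha_0$.

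The second step rewrites both terms in the angular variables of the DFP-triangle (Fig.~\ref{dfp_geometry_2}). For the second term, the standing assumptions $\theta_{0b}>\theta_{0h}>0$ and rank two are exactly those of Proposition \ref{proplambda0}. That proposition gives
\[
\lambda\,\boldsymbol{\gamma}_0'\boldsymbol{\gamma}_0=-|\boldsymbol{\gamma}_h||\boldsymbol{\gamma}_0|\,\sin(\theta_{0b}-\theta_{0h})/\sin(\theta_{0b}),
\]
which produces the second summand of the claim with the correct sign and prefactor. For the first term, the definition of the angle gives $\boldsymbol{\gamma}_0'\boldsymbol{\gamma}_h=|\boldsymbol{\gamma}_0||\boldsymbol{\gamma}_h|\cos(\cdot)$. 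Factoring $|\boldsymbol{\gamma}_h||\boldsymbol{\gamma}_0|$ out of both terms then yields the bracketed expression of the corollary.

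The main obstacle, and a gap I cannot close as the statement is worded, is the cosine argument. The inner product $\boldsymbol{\gamma}_0'\boldsymbol{\gamma}_h$ is exactly $|\boldsymbol{\gamma}_0||\boldsymbol{\gamma}_h|\cos\theta_{0h}$, not $\cos\theta_{0b}$. The exact identity this derivation delivers is therefore
\[
\alpha_0=|\boldsymbol{\gamma}_h||\boldsymbol{\gamma}_0|\left(\cos(\theta_{0h})-\frac{\sin(\theta_{0b}-\theta_{0h})}{\sin(\theta_{0b})}\right).
\]
A second exact route gives a different expression again. One writes $\alpha_0=|\boldsymbol{\gamma}_0||\mathbf{b}|\cos\theta_{0b}$ and gets $|\mathbf{b}|=|\boldsymbol{\gamma}_h|\sin\theta_{0h}/\sin\theta_{0b}$ from the law of sines in the DFP-triangle, where side $c$ is opposite $\gamma=\theta_{0h}$ and side $a$ is opposite $\alpha=\pi-\theta_{0b}$. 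This yields
\[
\alpha_0=|\boldsymbol{\gamma}_0||\boldsymbol{\gamma}_h|\cos\theta_{0b}\,\sin\theta_{0h}/\sin\theta_{0b}.
\]
Neither exact route produces $\cos\theta_{0b}$ together with the subtracted sine ratio.

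What I can establish for the stated form is agreement in the small-angle regime that Proposition \ref{proplambda0} already uses. There $\cos\theta_{0b}-\cos\theta_{0h}=O(\theta_{0b}^2)$, while the sine ratio is of order one, close to $1-\theta_{0h}/\theta_{0b}$. So the stated expression and the exact one differ only by $|\boldsymbol{\gamma}_h||\boldsymbol{\gamma}_0|\,O(\theta_{0b}^2)$, and the two agree at the level of the Taylor approximation in \eqref{lambda0}. This is an approximation, not an exact identity. To finish, I would first check whether the paper intends some further convention, such as a different angle label in Fig.~\ref{dfp_geometry_2}, under which the first term exactly equals $|\boldsymbol{\gamma}_0||\boldsymbol{\gamma}_h|\cos\theta_{0b}$; otherwise the corollary holds only in the approximate sense above.
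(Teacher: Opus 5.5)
Your argument is exactly the paper's (insert \eqref{lambda_mse} into \eqref{lambda0} and solve for $\alpha_0$). Carried out correctly, it gives $\alpha_0=|\boldsymbol{\gamma}_h||\boldsymbol{\gamma}_0|\bigl(\cos(\theta_{0h})-\sin(\theta_{0b}-\theta_{0h})/\sin(\theta_{0b})\bigr)$, so your diagnosis is right: the $\cos(\theta_{0b})$ in the printed corollary is a typo for $\cos(\theta_{0h})$, and no hidden angle convention rescues it. The printed version in fact fails as an exact identity: at complete decoupling, $\theta_{0b}=\pi/2$, one has $\alpha_0=\boldsymbol{\gamma}_0'\mathbf{b}=0$, whereas the printed formula gives $-|\boldsymbol{\gamma}_h||\boldsymbol{\gamma}_0|\cos(\theta_{0h})\neq 0$. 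Your corrected form, equivalently $|\boldsymbol{\gamma}_0||\boldsymbol{\gamma}_h|\cos(\theta_{0b})\sin(\theta_{0h})/\sin(\theta_{0b})$, is the statement that actually holds.
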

The proof follows from inserting \eqref{lambda_mse} into \eqref{lambda0} and solving for $\alpha_0$.

\section{Linking Phase Excess and Hyperparameter in the PCS Criterion}\label{phe_bet}

The next proposition establishes the relationship between the phase excess $\theta_{hb}$ and the PCS hyperparameter $\beta_h$ in \eqref{peak_cor_sh_crit}. Throughout, we assume that the constraint set is feasible and that the resulting PCS solution attains a strictly positive objective value.

\begin{Proposition}\label{l_t_b}
Let $\boldsymbol{\gamma}_{h-1}$ and $\boldsymbol{\gamma}_h$ be linearly independent with  $\theta_{hh-1}>0$. Define 
\begin{eqnarray*}
a&:=&\|\boldsymbol{\gamma}_h\|-\cos(\theta_{hh1})\|\boldsymbol{\gamma}_{h-1}\|\\
b&:=&\sin(\theta_{hh-1})\|\boldsymbol{\gamma}_{h-1}\|\\
R&:=&\sqrt{a^2+b^2}\\
\phi&:=&\arctantwo(b,a)
\end{eqnarray*}
and assume that $R>0$ and that $|\beta_h|\leq R$. Let $\theta_{hb}>0$  (phase excess) denote the angle between $\boldsymbol{\gamma}_h$ and the PCS solution $\mathbf{b}$ within the plane spanned by $\boldsymbol{\gamma}_{h-1}$ and $\boldsymbol{\gamma}_h$ (see Fig.\ref{pcs_geometry}).  If $|\beta_h|=R$, then 
\[
\theta_{hb}=\left\{\begin{array}{cc}\phi&,\beta_h=-R\\
\phi-\pi&,\beta_h=R~\textrm{and~}0<\phi<\pi\\
\phi+\pi&,\beta_h=R~\textrm{and~}\pi<\phi<2\pi.
\end{array}\right.
\]
If $|\beta_h|<R$ then 
\[
\theta_{hb}=\left\{\begin{array}{cc}\phi-\arccos(-\beta_h/R)&,0< \phi<\pi\\
\phi+\arccos(-\beta_h/R)&,\pi< \phi<2\pi.\end{array}\right.
\]
\end{Proposition}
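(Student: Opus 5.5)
I will work entirely in the plane spanned by $\boldsymbol{\gamma}_{h-1}$ and $\boldsymbol{\gamma}_h$, where the PCS solution lies by the substitution argument following Theorem \ref{solution_dfp} (the PCS solution has the form $\lambda_1\boldsymbol{\gamma}_h+\lambda_2(\boldsymbol{\gamma}_{h-1}-\boldsymbol{\gamma}_h)$). Choose an orthonormal basis with $\mathbf{e}_1=\boldsymbol{\gamma}_h/\|\boldsymbol{\gamma}_h\|$ and orient $\mathbf{e}_2$ so that $\boldsymbol{\gamma}_{h-1}=\|\boldsymbol{\gamma}_{h-1}\|(\cos\theta_{hh-1},\sin\theta_{hh-1})$ with $\theta_{hh-1}>0$. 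Following the convention of Fig.~\ref{pcs_geometry}, positive phase excess means rotation away from $\boldsymbol{\gamma}_{h-1}$. I therefore parametrize the unit-norm solution as $\mathbf{b}=(\cos\theta_{hb},-\sin\theta_{hb})$.

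With this parametrization the PCS constraint becomes a single trigonometric equation:
\[
(\boldsymbol{\gamma}_{h-1}-\boldsymbol{\gamma}_h)'\mathbf{b}=\|\boldsymbol{\gamma}_{h-1}\|\cos(\theta_{hb}+\theta_{hh-1})-\|\boldsymbol{\gamma}_h\|\cos\theta_{hb}=\beta_h .
\]
This is consistent with \eqref{b1_0}. Expanding $\cos(\theta_{hb}+\theta_{hh-1})$ and collecting terms gives $-a\cos\theta_{hb}-b\sin\theta_{hb}=\beta_h$, with $a,b$ as defined in the statement. The standard harmonic-addition identity $a\cos\theta+b\sin\theta=R\cos(\theta-\phi)$, with $R=\sqrt{a^2+b^2}$ and $\phi=\arctantwo(b,a)$, reduces the constraint to
\[
\cos(\theta_{hb}-\phi)=-\beta_h/R .
\]
The assumption $|\beta_h|\le R$ is exactly the feasibility condition, and $R>0$ makes the division legitimate.

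The boundary cases follow directly. If $\beta_h=-R$, then $\theta_{hb}-\phi\equiv 0$, so $\theta_{hb}=\phi$. If $\beta_h=R$, then $\theta_{hb}-\phi\equiv\pi$, and choosing the representative in $(-\pi,\pi]$ gives $\phi-\pi$ when $0<\phi<\pi$ and $\phi+\pi$ when $\pi<\phi<2\pi$. In the interior case $|\beta_h|<R$ there are two roots, $\theta_{hb}=\phi\pm\arccos(-\beta_h/R)$, one on each branch of the cone in Theorem \ref{solution_dfp}.

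The main obstacle is selecting the correct root. The PCS objective is $\boldsymbol{\gamma}_h'\mathbf{b}=\|\boldsymbol{\gamma}_h\|\cos\theta_{hb}$, so I must take the root whose representative modulo $2\pi$ has the smaller $|\theta_{hb}|$, while also satisfying the positivity assumption and $\theta_{hb}>0$. My first attempt will be a case split on the location of $\phi$.
\begin{itemize}
\item[] For $0<\phi<\pi$, the root $\phi-\arccos(-\beta_h/R)$ lies closer to $0$ than $\phi+\arccos(\cdot)$, because $\arccos\in[0,\pi]$. Hence it maximizes $\cos\theta_{hb}$.
\item[] For $\pi<\phi<2\pi$ (equivalently $\phi-2\pi\in(-\pi,0)$), the roles reverse, and the root $\phi+\arccos(\cdot)$ is the one closest to $0$ modulo $2\pi$.
\end{itemize}
I will verify this comparison by computing $\cos$ of each root with the cosine addition formula: the difference of the two values is $2\sin\phi\,\sin(\arccos(-\beta_h/R))$, whose sign is the sign of $\sin\phi$. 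This settles the selection rule, and the stated formulas follow.
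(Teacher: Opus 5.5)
Your proposal is correct and follows essentially the same route as the paper's proof. You reduce the PCS constraint to $-(a\cos\theta_{hb}+b\sin\theta_{hb})=\beta_h$, rewrite it as $\cos(\theta_{hb}-\phi)=-\beta_h/R$, and select between $\phi\pm\arccos(-\beta_h/R)$ using $\cos(\theta_{hb}^-)-\cos(\theta_{hb}^+)=2\sin\phi\,\sin(\arccos(-\beta_h/R))$; your explicit coordinate parametrization just makes the paper's ``from the geometry'' step precise.

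One small slip in the boundary case: for $\beta_h=R$ and $\pi<\phi<2\pi$, the value $\phi+\pi$ is not the representative in $(-\pi,\pi]$ (that would be $\phi-\pi$). The stated formula still holds, because $\phi+\pi\equiv\phi-\pi \pmod{2\pi}$, or equivalently as the specialization $\arccos(-1)=\pi$ of the interior-case formula, which is how the paper obtains the boundary cases.
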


\textrm{Proof}: We first consider the case $|\beta_h|<R$. From the geometry  (blue case, $\beta_h<0$ in Fig. \ref{pcs_geometry}) 
\[
\beta_h=\cos(\theta_{hb}+\theta_{hh-1})\|\boldsymbol{\gamma}_{h-1}\|-\cos(\theta_{hb})\|\boldsymbol{\gamma}_h\|=-(a\cos(\theta_{hb})+b\sin(\theta_{hb})),
\]
with $a=\|\boldsymbol{\gamma}_h\|-\cos(\theta_{hh-1})\|\boldsymbol{\gamma}_{h-1}\|$, $b=\sin(\theta_{hh-1})\|\boldsymbol{\gamma}_{h-1}\|$. Let $R=\sqrt{a^2+b^2}$, 
and consider the right triangle with legs $a$ and $b$ and hypotenuse $R$, assuming $a, b > 0$. For general sign combinations, the use of $\phi=\arctantwo(b, a)$\footnote{The $\arctantwo$ function returns the angle between the x-axis and the vector from the origin to $(x,y)$, i.e., for positive arguments $\arctantwo(y, x) = \arctan(y/x)$.} accounts for possibly negative values of $a$ or $b$. Let $\phi$ designate the angle adjacent to $a$ and $R$ so that $\cos(\phi)=a/R$ and $\sin(\phi)=b/R$. Therefore, using the identity 
\[
a\cos(\theta_{hb})+b\sin(\theta_{hb})=R\cos(\theta_{hb}-\phi),
\]
we obtain $\beta_h=-R\cos(\theta_{hb}-\phi)$. Inverting gives 
\[
\theta_{hb}^{\pm}=\phi\pm\arccos(-\beta_h/R).
\]
Among the two candidates, choose the branch that maximizes the objective  $\mathbf{b}(\theta_{hb}^{\pm})'\boldsymbol{\gamma}_h=\cos(\theta_{hb}^{\pm})\|\boldsymbol{\gamma}_h\|$. Using
\begin{eqnarray*}
\cos(\theta_{hb}^-)-\cos(\theta_{hb}^+)&=&\cos\Big(\phi-\arccos(-\beta_h/R)\Big)-\cos\Big(\phi+\arccos(-\beta_h/R)\Big)\\
&=&2\sin(\phi)\sin\Big(\arccos(-\beta_h/R)\Big)=2\sin(\phi)\sqrt{1-\beta_h^2/R^2},
\end{eqnarray*}
we conclude that $\cos\Big(\theta_{hb}^-\Big)>\cos\Big(\theta_{hb}^+\Big)$ when $\phi\in ]0,\pi[$, and the inequality reverses when $\phi\in]\pi,2\pi[$. The cases $\phi=0\pmod{2\pi}$ and $\phi=\pi\pmod{2\pi}$ are excluded, as they would imply $b = 0$ (i.e., $\sin(\theta_{hh-1})=0$), which in turn yields collinear MSE predictors. Therefore, maximization of the objective implies that
\[
\theta_{hb}=\left\{\begin{array}{cc}\phi-\arccos(-\beta_h/R)&,0< \phi<\pi\\
\phi+\arccos(-\beta_h/R)&,\pi< \phi<2\pi.\end{array}\right.
\]
The solution when $|\beta_h|=R$ follows as a special case of the above.\hfill\qed\\

\textbf{Remarks:} When $\theta_{hb}=0$, the PCS predictor coincides with the $h$-step MSE predictor $\boldsymbol{\gamma}_h$. In this case, $\mathbf{b}'\boldsymbol{\gamma}_h=\|\boldsymbol{\gamma}_h\|$ and $\mathbf{b}'\boldsymbol{\gamma}_{h-1}=\|\boldsymbol{\gamma}_{h-1}\|\cos(\theta_{hh-1})$. Therefore, the PCS constraint parameter that reproduces the MSE predictor is $\beta_h^{MSE}=\|\boldsymbol{\gamma}_{h-1}\|\cos(\theta_{hh-1})-\|\boldsymbol{\gamma}_h\|$. For $\beta_h<\beta_h^{MSE}$, the phase excess $\theta_{hb}$ becomes positive, indicating an advancement of the PCS relative to the MSE predictor. Note that the positivity of the phases $\theta_{hh-1}>0$ and $\theta_{b}>0$ 
 is purely conventional. The essential requirement is that both quantities share the same sign, i.e., $\theta_{hh-1}\theta_{b}>0$. Equivalently, this condition ensures that $\mathbf{b}$ is positioned on the side of $\boldsymbol{\gamma}_h$ opposite to $\boldsymbol{\gamma}_{h-1}$. Under this alignment, the rotation from $\boldsymbol{\gamma}_{h}$ to $\mathbf{b}$ by angle $\theta_{hb}$ proceeds in the same direction as the rotation from $\boldsymbol{\gamma}_{h-1}$ to $\boldsymbol{\gamma}_{h}$: if $\boldsymbol{\gamma}_{h}$ constitutes an advancement relative to $\boldsymbol{\gamma}_{h-1}$, then $\mathbf{b}$ likewise represents an advancement relative to $\boldsymbol{\gamma}_{h}$. The magnitude of this advancement is governed by the hyperparameter $\beta_h$, with $\beta_h<\beta_h^{MSE}$ ensuring a positive phase excess.  From an interpretability standpoint, the PCS geometry centers on displacing the peak correlation, yielding an intuitively appealing measure of advancement (see De Jong and Nijman, 1997). A link between the phase excess $\theta_{hb}$ and the time-shift $\tau_b(0)$ at frequency $\omega=0$ can be derived from Section \ref{time_shift} (details omitted).

 \section{Extension of the PCS Criterion}\label{pcs_ext}

We extend the PCS approach by recognizing that, in some applications, imposing the PCS constraint only at the single lead $\delta = h$—i.e., constraining $(\boldsymbol{\gamma}_{h-1}-\boldsymbol{\gamma}_{h})'\mathbf{b}$ in \eqref{peak_cor_sh_crit}—may be insufficient to produce an effective peak shift. A stronger formulation can be obtained by imposing constraints over a neighborhood of leads, which more firmly anchors the desired peak displacement. Specifically, we propose
\begin{eqnarray*}
&&\max_{\mathbf{b}} \boldsymbol{\gamma}_{{h}}'\mathbf{b}\\
&&(\boldsymbol{\gamma}_{h-\delta}-\boldsymbol{\gamma}_{h-(\delta-1)})'\mathbf{b}=\beta_{\delta}~,~\delta\in \Delta\nonumber\\ 
&&\mathbf{b}'\mathbf{b}=1,\nonumber
\end{eqnarray*}
where $\Delta$ is a set of lead times (e.g., $\Delta=\{1,...,h\}$) and where the constraints are assumed to be feasible (unfeasibility is discussed below). Choosing $\beta_{\delta}< 0$ promotes a monotonically increasing CCF over $\delta\in\Delta$. However, the associated geometry—given by the intersection of $|\Delta|$-many cones with the unit sphere and a plane determined by the objective (see the proof of Theorem \ref{solution_dfp})—can be intricate and yield multiple solutions. For this reason, we adopt the mean-square reformulation:
\begin{eqnarray}
&&\min_{\mathbf{b}} (\boldsymbol{\gamma}_{{h}}-\mathbf{b})'(\boldsymbol{\gamma}_{{h}}-\mathbf{b})\label{peak_cor_sh_crit2}\\
&&(\boldsymbol{\gamma}_{h-\delta}-\boldsymbol{\gamma}_{h-(\delta-1)})'\mathbf{b}=\beta_{\delta}~,~\delta\in \Delta\nonumber
\end{eqnarray}
which removes the unit‑length constraint and allows the restrictions to be viewed as affine hyperplanes rather than cones. This linearized formulation yields a unique solution (under the stated regularity conditions), as shown in Proposition \ref{dfp_mse_prop}.  \\

A potential limitation of the extended (MSE-)PCS criterion \eqref{peak_cor_sh_crit2} is the requirement to pre-specify a set of hyperparameters $\beta_{\delta}$ for $\delta \in \Delta$. This design choice may lead to overly restrictive constraints being imposed on the system.To mitigate this, we propose an unconstrained penalized variant that addresses the constraints in aggregate:     
\begin{eqnarray}\label{eff_pcs}
&&\min_{\mathbf{b}}\left( (\boldsymbol{\gamma}_{{h}}-\mathbf{b})'(\boldsymbol{\gamma}_{{h}}-\mathbf{b})+\nu\sum_{\delta\in\Delta}\Big((\boldsymbol{\gamma}_{\delta-1}-\boldsymbol{\gamma}_{\delta})'\mathbf{b}-\beta\Big)^2\right),\label{peak_cor_sh_crit3}
\end{eqnarray}
where $\nu>0$ controls the strength of the penalty. If the individual constraints in the penalty are jointly feasible, then as $\nu\to\infty$, the aggregate penalty forces $\sum_{\delta\in\Delta}\Big((\boldsymbol{\gamma}_{\delta-1}-\boldsymbol{\gamma}_{\delta})'\mathbf{b}-\beta\Big)^2\to 0$, implying that $(\boldsymbol{\gamma}_{\delta-1}-\boldsymbol{\gamma}_{\delta})'\mathbf{b}\to\beta$ for all $\delta\in\Delta$ (the case of an unfeasible singular system is discussed below). For finite (positive) $\nu$, the aggregate penalty provides greater flexibility, so that difficult or misspecified individual constraints exert less adverse influence on forecast performance (the objective value).\\

We now briefly discuss the solutions to the proposed PCS criteria. As discussed, \eqref{peak_cor_sh_crit} is similar to the DFP formulation \eqref{dp} and a solution can be obtained from Theorem \ref{solution_dfp} upon the substitutions
\[
\boldsymbol{\gamma}_0\to\boldsymbol{\gamma}_{h-1} - \boldsymbol{\gamma}_h, \textrm{~~} \alpha_0\to \beta_h/\|\boldsymbol{\gamma}_{h-1}-\boldsymbol{\gamma}_h\|,
\]
assuming $\boldsymbol{\gamma}_{h-1}$ and $\boldsymbol{\gamma}_h$ are linearly independent. \\
Analogously, the extended PCS criterion \eqref{peak_cor_sh_crit2} is closely related to the mean-square formulation \eqref{dp2}. In particular, by the same argument as in Proposition \ref{dfp_mse_prop}, the solution has the form 
\begin{eqnarray}\label{epcs}
\mathbf{b}=\boldsymbol{\gamma}_h+\sum_{\delta\in \Delta}\lambda_{\delta}(\boldsymbol{\gamma}_{h-\delta}-\boldsymbol{\gamma}_{h-(\delta-1)}),
\end{eqnarray}
where the coefficients $\lambda_{\delta}$ solve the linear system obtained by imposing the constraints
\[
(\boldsymbol{\gamma}_{h-\delta}-\boldsymbol{\gamma}_{h-(\delta-1)})'\mathbf{b}=\beta_{\delta}~,~\delta\in \Delta.
\]
Specifically, let $\mathbf{A}$ be the matrix with rows $\mathbf{a}_{\delta}' = (\boldsymbol{\gamma}_{h-\delta}-\boldsymbol{\gamma}_{h-(\delta-1)})'$, let $\boldsymbol{\beta} = (\beta_{\delta})_{\delta\in\Delta}$ and denote $\boldsymbol{\lambda}=(\lambda_{\delta})_{\delta\in\Delta}$. Then 
\begin{equation}\label{lambda_A}
\boldsymbol{\lambda}=(\mathbf{AA}')^{-1}(\boldsymbol{\beta}-\mathbf{A}\boldsymbol{\gamma}_h)
\end{equation}
when $\mathbf{A}$ has full row rank; otherwise remove redundant (linearly dependent) constraints. Alternatively, infinitesimal perturbations of the predictors (cf. Section \ref{dfp}) can be employed to restore rank.\\
Finally, for fixed $\nu\geq 0$, the penalized extended PCS objective \eqref{peak_cor_sh_crit3} is a strictly convex quadratic function of $\mathbf{b}$ and therefore admits a unique closed-form minimizer
\begin{equation}\label{b_M}
\mathbf{b}=\mathbf{M}^{-1}\left(\boldsymbol{\gamma}_h+\nu\beta\sum_{\delta\in\Delta}\mathbf{d}_{\delta}\right),
\end{equation}
where $\mathbf{M}=\Big(\mathbf{I}+\nu\sum_{\delta\in\Delta}\mathbf{d}_{\delta}\mathbf{d}_{\delta}'\Big)$, $\mathbf{I}$ is the $L\times L$ dimensional identity, and $\mathbf{d}_{\delta}:=\boldsymbol{\gamma}_{\delta-1}-\boldsymbol{\gamma}_{\delta}$. Since $\mathbf{M}$ is strictly positive definite for any $\nu\geq 0$, it is invertible, ensuring existence and uniqueness of the solution. For finite $\nu$, the penalty allows controlled deviations from the individual hard constraints, providing flexibility to tune the predictor (e.g., to trade off phase advancement against tracking accuracy). A potential drawback of this flexibility is the interaction between the two hyperparameters, $\nu$ and $\beta$, which can induce offsetting effects and thereby obscure identification of the accuracy–timeliness trade-off. In particular, choosing a more negative $\beta$ ($<0$)—which enforces a steeper CCF slope and a larger peak shift—can be counteracted, at least to some extent, by selecting a smaller $\nu$, which reduces the penalty on constraint deviations.  \\

\textbf{Remark}: The constraint system underlying the extended PCS criterion \eqref{peak_cor_sh_crit2} need not be feasible; in particular, the matrix $\mathbf{AA}'$ in \eqref{lambda_A} can be rank-deficient. For example, if the data-generating process is AR($p$), the MA-inversions (impulse responses) $\boldsymbol{\gamma}_{\delta}$ at various $\delta\in\Delta$ are linked by the Yule–Walker equations, so the rank of the constraint system should not exceed $p$ (allowing for non-trivial maximization of the objective requires the rank to not exceed $p-1$). In effect, the entire CCF is determined by a finite subset of constraints, and specifying additional constraints may render the system mutually inconsistent. Analogously, the penalized criterion \eqref{eff_pcs} can be rank-deficient in the sense that the penalty term $\sum_{\delta\in\Delta}\left((\boldsymbol{\gamma}_{\delta-1}-\boldsymbol{\gamma}_{\delta})'\mathbf{b}-\beta\right)^2$ does not vanish as $\nu\to\infty$ when the constraints are inconsistent. Nevertheless, in this penalized case the matrix $\mathbf{M}$ in \eqref{b_M} remains full rank for $\nu\geq 0$, ensuring that the minimizer $\mathbf{b}$  exists and is unique even in the case of an unfeasible system.

\section{Simulation Study}\label{sim_stu}

We conduct a simulation study on the AR(3) process of Section \ref{AR3}  with series length $T=1000$, chosen to emphasize the asymptotic behavior described in Corollary \ref{mse_dfp_dist}. After generating 1000 series, we estimate the AR parameters via maximum likelihood. We then derive the estimated DFP predictors at horizon $h = 5$ (similar results are obtained for alternative $h$) by enforcing complete decoupling ($\alpha_0=0$) and applying MA inversion to the empirical AR equations (Proposition \ref{dfp_mse_prop}).\\

Figure \ref{sim_ar3} displays the estimates, sample means, confidence intervals, and true values for the initial 20 lags. Confidence intervals for the DFP predictors (middle left) utilize Corollary \ref{mse_dfp_dist} with $\gamma_0=1$ fixed. These approximations assume $L$ and $T$ are sufficiently large to satisfy $\boldsymbol{\gamma}_h\approx\mathbf{F}^h\boldsymbol{\gamma}_0$ and justify first-order Taylor approximations, respectively. The DFP-intervals effectively capture the reduced variance in sample estimates at lag 3. \\

Individual realizations of the DFP predictor (middle left panel) display stochastic variations stemming from the empirical impulse responses (top right panel). However, the averaged trajectory (red) aligns almost perfectly with the population values (cyan): due to overlap only the latter can be distinguished. A single-run comparison (bottom panel), truncated for visibility, confirms the DFP predictor consistently leads the MSE benchmark.

\begin{figure}[H]\begin{center}\includegraphics[height=4in, width=5in]{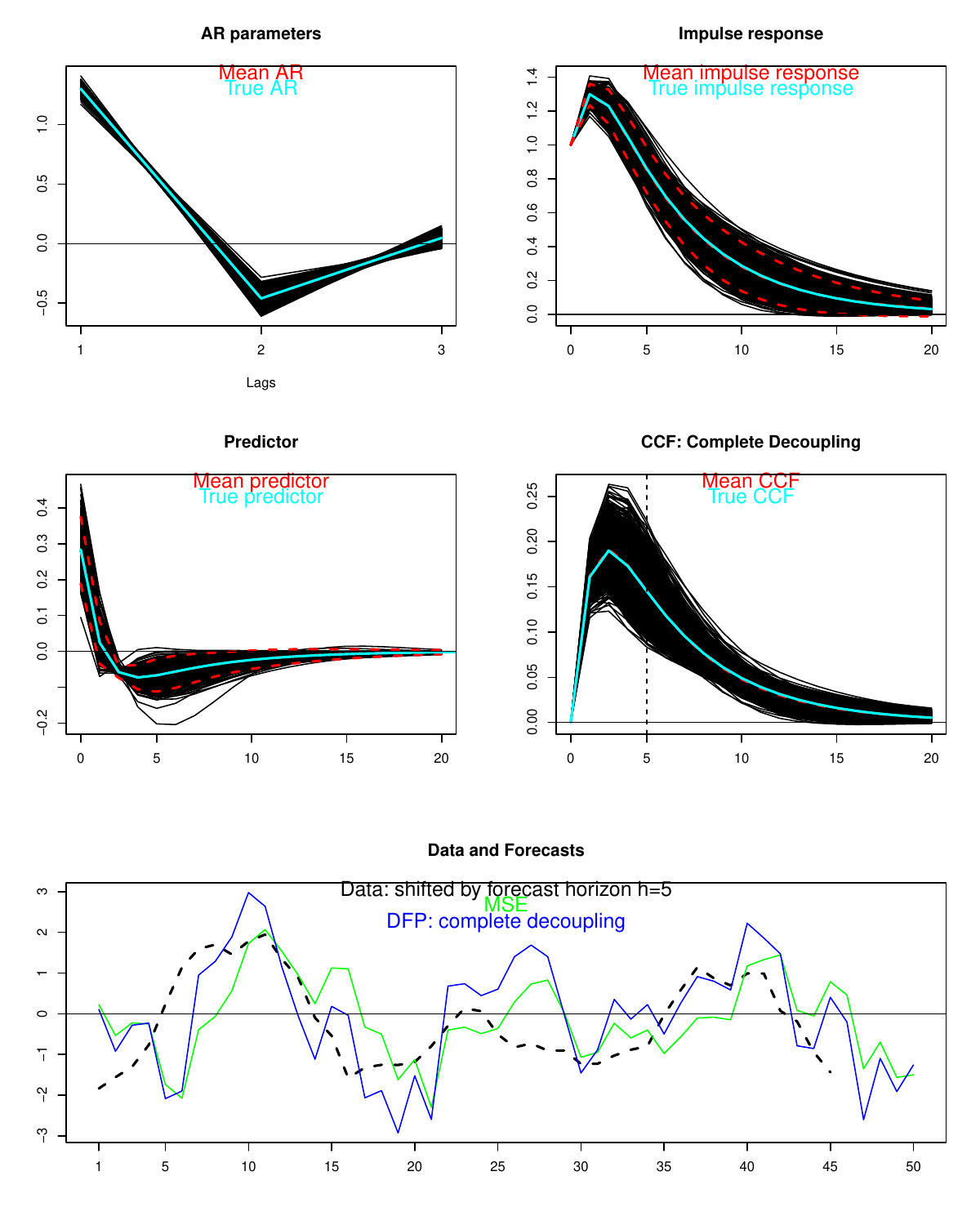}\caption{Simulation results for the MSE-DFP predictor with complete decoupling in an AR(3) process. Top-left: estimated AR coefficients; top-right: impulse response; middle-left: DFP predictor; middle-right: CCF (all CCFs are zero at lag 0). Each panel displays sample paths (black), Monte Carlo means (red) and population values (cyan); the Monte Carlo curves are largely hidden by the population curves because the biases vanish. In addition,  95$\%$  
confidence bands (red) are shown for the impulse responses and the DFP predictors. The bottom panel displays one AR(3) trajectory advanced by $h=5$ (black), together with five-step-ahead forecasts from the MSE predictor (green) and the DFP approach (blue). All series are scaled to unit variance for easier visual comparison. The DFP forecast is consistently left-shifted relative to the MSE forecast. \label{sim_ar3}}\end{center}\end{figure}

\end{document}